\algrenewcommand\algorithmicindent{1.0em}%
\newtheorem{theorem}{Theorem}[section]
\newtheorem{lemma}[theorem]{Lemma}
\theoremstyle{remark}
\theoremstyle{definition}
\newtheorem{definition}{Definition}[section]
\newcommand{\abs}[1]{\left\vert#1\right\vert}
\newcommand{\set}[1]{\left\{#1\right\}}
\newcommand{\reals}{\mathbb{R}}
\newcommand{\naturals}{\mathbb{N}}
\DeclareMathOperator*{\argmin}{arg\,min}
\newcommand{\topalg}{Top}
\newcommand{\maxalg}{Max}
\newcommand{\levelalg}{Level}
\newcommand{\numswitches}{n}
\newcommand{\switchset}{\mathcal{S}}
\newcommand{\switch}{s}
\newcommand{\serverset}{\mathcal{W}}
\newcommand{\server}{w}
\newcommand{\network}{T}
\newcommand{\vertexset}{V}
\newcommand{\linkset}{E}
\newcommand{\rootswitch}{r}
\newcommand{\link}{e}
\newcommand{\load}{L}
\newcommand{\parent}{p}
\newcommand{\numblue}{k}
\newcommand{\blueset}{U}
\newcommand{\blue}{B}
\newcommand{\red}{R}
\newcommand{\weight}{\omega}
\newcommand{\rate}{\rho}
\newcommand{\aggcap}{a}
\newcommand{\msg}{\mathrm{msg}}
\newcommand{\childnum}{C}
\newcommand{\bic}{\ensuremath{BIC}}
\newcommand{\bica}{\ensuremath{\phi}-\bic}
\newcommand{\msgsize}{M}
\newcommand{\msgcost}{\phi}
\newcommand{\destination}{d}
\newcommand{\alg}{SOAR}
\newcommand{\reduce}{Reduce\xspace}
\DeclareMathOperator{\minsplit}{mSplit}
\DeclareMathOperator{\mincost}{mCost}
\DeclareMathOperator{\nodecolor}{color}
\newcommand{\distance}{D}
\newcommand{\height}{h}
\newcommand{\wcapp}{WC}
\newcommand{\psapp}{PS}
\DeclareMathOperator{\binarytree}{BT}
\DeclareMathOperator{\scalefree}{SF}
\newcommand{\btnet}[1]{\binarytree(#1)}
\newcommand{\sfnet}[1]{\scalefree(#1)}
\newcommand{\rpa}{RPA}
\newcommand{\Avilabilty}{\Lambda}
\newcommand{\Path}{\tau}
\newcommand{\aggc}{\pi}
\newcommand{\dpelli}{X}
\newcommand{\dpellicol}{Y}
\newcommand{\Gather}{Gather}
\newcommand{\Color}{Color}
\newcommand{\alggather}{\alg-\Gather}
\newcommand{\algcolor}{\alg-\Color}
\newcommand{\myleftp}[1]{\left(\rule{0 cm}{#1}\right.}
\newcommand{\myrightp}[1]{\left.\rule{0 cm}{#1}\right)}
\newcommand{\mylefts}[1]{\left[\rule{0 cm}{#1}\right.}
\newcommand{\myrights}[1]{\left.\rule{0 cm}{#1}\right]}
\newcommand{\mylp}{\myleftp{0.6cm}}
\newcommand{\myrp}{\myrightp{0.6cm}}
\newcommand{\myls}{\mylefts{0.7cm}}
\newcommand{\myrs}{\myrights{0.7cm}}
\tikzset{
  treenode/.style = {align=center, inner sep=0pt, text centered},
  node_b/.style = {treenode, circle, white, draw=black, font=\Huge, fill=blue, text width=3em},
  node_r_full/.style = {treenode, circle, white, draw=black, font=\Huge, fill=red, text width=3em},
  node_r/.style = {treenode, circle, red, draw=red, font=\Huge, text width=3em, very thick},
  node_ws/.style = {treenode, circle, draw=white, font=\Huge, text width=3em, very thick},
  node_p/.style = {treenode, rectangle, draw=black, font=\Huge, minimum height = 1.1cm, text width=3em, very thick},
  node_h/.style = {treenode, rectangle, white, draw=black, font=\Huge, minimum height = 1.1cm, text width=3em, fill=gray, very thick},
  node_wh/.style = {treenode, rectangle, draw=white, font=\Huge, minimum height = 1.1cm, text width=3em, very thick},
  edge from parent/.style = {font=\Huge, line width=1mm, draw, <-, >=stealth'},
  level 2/.style = {sibling distance=40mm},
  level 3/.style = {sibling distance=20mm},
  level/.style = {level distance=2.5cm},
}
\newcommand{\revision}[1]{#1}
\newcommand{\newrevision}[1]{#1}
\date{}
\title[\alg: Minimizing Network Utilization with Bounded In-network Computing]{\alg: Minimizing Network Utilization with Bounded In-network Computing
}
\author{Raz Segal,  Chen Avin,  Gabriel Scalosub}
\affiliation{%
    \department{School of Electrical and Computer Engineering}
    \institution{Ben-Gurion University of the Negev, Israel}
}
\begin{document}

\begin{abstract}
In-network computing via smart networking devices is a recent trend for modern datacenter networks.
State-of-the-art switches with near line rate computing and aggregation capabilities are developed to enable, e.g., acceleration and better utilization for modern applications like big data analytics, and large-scale distributed and federated machine learning.
We formulate and study the problem of activating a limited number of in-network computing devices within a network, aiming at reducing the overall network utilization for a given workload.
Such limitations on the number of in-network computing elements per workload arise, e.g., in incremental upgrades of network infrastructure, and are also due to requiring specialized middleboxes, or FPGAs, that should support heterogeneous workloads, and multiple tenants.

We present an optimal and efficient algorithm for placing such devices in tree networks with arbitrary link rates, and further evaluate our proposed solution in various scenarios and for various tasks.
Our results show that having merely a small fraction of network devices support in-network aggregation can lead to a significant reduction in network utilization.
Furthermore, we show that various intuitive strategies for performing such placements exhibit significantly inferior performance compared to our solution, for varying workloads, tasks, and link rates.
\end{abstract}

\maketitle

\section{Introduction}

Datacenter networks and their distributed data processing capabilities are the driving force behind
leading applications and services, including search engines, content distribution, social networks and eCommerce.
Recent work has shown that for many of the tasks performed by such applications, the network (and not server computation) is the actual bottleneck hindering the ability to optimize 
computation efficiency and performance~\cite{chowdhury11managing,mai14netagg,viswanathan20network}.
Such bottlenecks occur, e.g., in
distributed and federated machine learning (e.g., {\em AllReduce}), and in solutions employing the MapReduce methodology for big data tasks,
and more generally in scenarios giving rise to the {\em incast} problem~\cite{alizadeh10dctcp,wu13ictcp}.

In order to tackle these deficiencies, recent research has been pushing the concept of {\em in-network computing}~~\cite{ports19when,sapio17innetwork}, which suggests
offloading a considerable portion of the computation onto ``smart'' networking elements, thus relieving end-hosts and servers from some of the computational tasks, resulting in improved efficiency and performance.
In proposing this paradigm, attempts were made to characterize the types of computation that could potentially benefit from such an approach~\cite{costa12camdoop}.
Indeed, recent works showed that modern switches can perform local computation on packets, like reduce operations, even at line rate \cite{graham20sharp,gebara21innetwork}.
Such {\em computing switches} can be implemented, for example, using  
SDN and programmable network elements (e.g., using P4)~\cite{bosshart14p4}, and have been shown 
to significantly improve network utilization,
which in turn improves overall application performance, and resource usage efficiency~\cite{graham20sharp,gebara21innetwork}.
It should be noted that the question of whether such offloading approaches are beneficial or advised is not without controversies~\cite{murphy19thoughts}.
However, data aggregation, as performed in, e.g., big-data tasks based on MapReduce, and distributed ML, which are the main use cases considered in our work, are well within consensus, especially when implemented using programmable switches with co-located accelerators (such as FPGAs).

Bearing these potential benefits in mind, one should note that employing in-network computing comes at a cost (in the form of, e.g., hardware or availability limitations), and such capabilities might not be ubiquitous throughout the network.
For example, this could be the case in an incremental upgrade of the network, where some (but not all) legacy switches are replaced by more capable network elements.
In addition, in many cases such in-network computing require specialized middleboxes, or FPGAs, which might call for independent, possibly partial, deployment.
Lastly, even if such in-network computing capabilities are indeed available throughout the network, the available resources required to support the various workloads requiring such computation might not be sufficient for servicing all such workloads.
In such a case, one would need to allocate in-network computing resources sparingly to the various workloads, so as to optimize overall system performance.
We therefore focus our attention on in-network computation tasks, while using a {\em limited} number of in-network processing devices.

In particular, we consider the task of {\em data aggregation} as it occurs in, e.g., MapReduce frameworks, or distributed machine learning using a parameter server.

We study such in-network computing paradigms in tree-based topologies where
given a tree network of switches, each connected to some number of servers (e.g., as Top-of-Rack switches), our goal is to perform data aggregation by means of a {\em Reduce} operation; We wish to send the aggregated data from all the servers in the network, towards a special {\em destination} server $d$.
We note that such tree-based topologies are becoming increasingly popular for distributed machine-learning use cases, implementing, e.g., AllReduce operations~\cite{nvidia19doubletree,sanders2009two,gebara21innetwork}.

A simple example of our problem is depicted in Fig.~\ref{fig:intro}, where the destination server $d$ is connected via a tree 
to six servers.
Initially each server $i$ holds a value $x_i$ and $d$ needs to compute a function $f(x_1, x_2, \dots, x_6)$ over all the values available at the servers.
To perform this task more efficiently, we have at our disposal a limited {\em budget} of $\numblue$ aggregation switches, which should be deployed (or activated) in some $\numblue$ locations in the tree network.
While several different metrics of interest could be considered, in the current work we focus on optimizing the {\em utilization complexity}, where one strives to minimize the {\em total transmission time} throughout the network while performing the Reduce operation.
This is equivalent to minimizing the average transmission time over all links.
When link rates are the same across the network (e.g.,  rate 1), the utilization complexity is proportional (or even identical) to the message complexity \cite{peleg2000distributed}, the total number of messages sent during the operation.
In this work, we consider the more general case of having  arbitrary rates at the links.
The utilization complexity therefore serves as a generalization of message complexity, and can be considered as a basic metric for the performance of network algorithms, where we apply it to studying the efficiency of the Reduce operation.
We note that for a given network capacity induced by the link rates,
the ability to maintain a low utilization complexity is expected to allow supporting more workloads.

The benefits of having an aggregation switch at some location is that such a switch can perform local aggregation of messages, i.e., aggregating multiple incoming messages onto a {\em single} outgoing message. Hence, the judicious allocation of these aggregation switches can assist in significantly reducing the utilization complexity.
It should be noted that the utilization complexity is closely correlated with the actual {\em bandwidth consumption} (in Bytes) of the system, while performing a Reduce operation (as we further demonstrate in Sec.~\ref{sec:evaluation}).

To better illustrate the notion of utilization complexity, assume for example that server $d$ needs to compute an aggregate function $f$ (e.g., sum) of the $x_i$'s in Fig.~\ref{fig:intro}.
Two extreme in-network allocations are
\begin{inparaenum}[(i)]
\item the {\em all-red} solution, where none of the switches serve as aggregation switches, requiring no (i.e., $\numblue=0$) aggregation switches, and
\item the {\em all-blue} solution, where all switches are aggregation switches, thus requiring the allocation of $\numblue=5$ aggregation switches.
\end{inparaenum}
If we assume for simplicity that all link rates are 1, the all-red solution translates to having a utilization complexity of 14, as there is an overall of 14 messages traversing the network, where the all-blue solution will require merely 5 -- the number of edges in the tree.


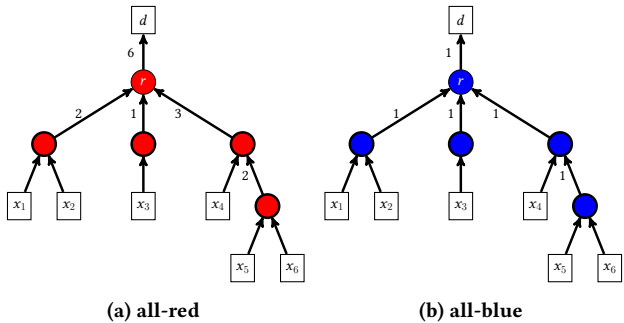
\begin{figure}[t]
    \centering
    \subcaptionbox{all-red}{
        \resizebox{0.47\columnwidth}{!}{



\begin{tikzpicture}
\node
    [node_p] {$\destination$}
        child{ node [node_r_full] {$\rootswitch$}
             child{ node [node_r_full] (S1) {}
                 child{  node [node_p] (h1) {$x_1$}
                    edge from parent[]
                }
                child{  node [node_p] (h2) {$x_2$}
                    edge from parent[]
                }
                edge from parent node[left,xshift=-3mm] {2}
             }
            child{node [node_r_full] (s2) {}
                 child{  node [node_p] (h3) {$x_3$}
                    edge from parent[] }
            edge from parent node[left,xshift=-1mm] {1}
            }
            child{ node [node_r_full] (s3){}
                child{  node [node_p] (h4) {$x_4$}
                    edge from parent[]}
                child{ node [node_r_full](s4) {}
                    child{  node [node_p] (h5) {$x_5$}
                    edge from parent[]}
                    child{  node [node_p] (h6) {$x_6$}
                    edge from parent[]}
                edge from parent node[left,xshift=-1mm] {2}
                }
            edge from parent node[left,xshift=-3mm] {3}
            }
        edge from parent node[left,xshift=-2mm] {6}
        }
;
\end{tikzpicture}}
    }
    \subcaptionbox{all-blue}{
        \resizebox{0.47\columnwidth}{!}{



\begin{tikzpicture}
\node
    [node_p] {$\destination$}
        child{ node [node_b] {$\rootswitch$}
             child{ node [node_b] (S1) {}
                 child{  node [node_p] (h1) {$x_1$}
                    edge from parent[]
                }
                child{  node [node_p] (h2) {$x_2$}
                    edge from parent[]
                }
                edge from parent node[left,xshift=-3mm] {1}
             }
            child{node [node_b] (s2) {}
                 child{  node [node_p] (h3) {$x_3$}
                    edge from parent[] }
            edge from parent node[left,xshift=-1mm] {1}
            }
            child{ node [node_b] (s3){}
                child{  node [node_p] (h4) {$x_4$}
                    edge from parent[]}
                child{ node [node_b](s4) {}
                    child{  node [node_p] (h5) {$x_5$}
                    edge from parent[]}
                    child{  node [node_p] (h6) {$x_6$}
                    edge from parent[]}
                edge from parent node[left,xshift=-1mm] {1}
                }
            edge from parent node[left,xshift=-3mm] {1}
            }
        edge from parent node[left,xshift=-2mm] {1}
        }
;
\end{tikzpicture}

    }
    \caption[The number of messages on each edge in \emph{all-red} and \emph{all-blue} trees]{The number of messages on each edge in \emph{all-red} and \emph{all-blue} aggregation trees. Destination computes a function $f$ on $x_1,\ldots,x_6$. \revision{The number of messages sent on each edge between two switches is denoted.}} 
    \label{fig:intro}
\end{figure}

As it turns out, for non extremal cases of $\numblue$, finding the optimal placement of the aggregation switches is not a trivial task, even for trees, which is the case being studied in this work.
This follows from the fact that the optimal placement of the aggregation switches depends both on the (possibly complex) tree topology and links rates, as well as on the (possibly complex) load distribution at the servers.
Moreover, multiple aggregation switches on the unique path from a server to the tree root introduce dependencies between the switches, which render standard approaches, like greedy, or divide and conquer, inapplicable.

We believe that our problem setup could also be used, for example, by cloud providers that can offer such a service as part of their Network-as-a-Service (NaaS) offerings, where each client can choose its required amount of aggregation switches based on the performance it needs.

We note that our work focuses primarily on reducing the bandwidth footprint, thus maximizing the effective utilization of the networking resources.
More recently, the focus has also been given to highlighting networking bottlenecks that are due to transport-level deficiencies, which hinder exploiting the full potential of distributed applications such as big data tasks and ML~\cite{zhang20is}.
Our proposed approach can be applied alongside any solutions being thus developed for other layers of the networking stack.


\subsection{Our Contribution}

\revision{We formulate the {\em Bounded In-network Computing} (BIC) problem, aiming at minimizing the utilization complexity}, and present an optimal and time efficient algorithm for solving the problem on tree networks with arbitrary, heterogeneous, link rates.
Such topologies are common in datacenter networks, e.g., fat-tree topologies \cite{al2008scalable}.
Our algorithm uses dynamic-programming with a non-trivial parameterized potential function.

While our mathematical formulation is for a single workload (or tenant), we extend it to support multiple 
workloads that arrive in an {\em online} manner, each requiring the allocation of (some) in-network aggregation switches.
In such a scenario each switch has a 
limited capacity of workloads it can support.
We discuss and present various properties of our resulting solution, and evaluate its performance for various server load distribution, network sizes, and network topologies.
In our study, we further consider two main {\em use cases}:
\begin{inparaenum}[(i)]
\item MapReduce (using word-count as an illustration), and
\item gradient aggregation in distributed machine learning using a parameter server.
\end{inparaenum}
We further show the benefits of using our algorithm when compared with several natural allocation strategies.
Our results indicate that a small fraction of aggregation switches can already significantly diminish the utilization  complexity of data aggregation tasks.

While we use an abstract mathematical model for scatter-gather type applications, we believe the model, and our algorithmic approach, alongside the structural properties it uncovers, may well be suited for further studying other objectives like minimizing the load on bottleneck links, or minimizing the latency of completing the data-transfers of a workload.

The rest of the paper is structured as follows. 
In Sec.~\ref{sec:model} we introduce our formal system model.
Sec.~\ref{sec:example} provides a motivating example highlighting various aspects of the \bic\ problem.
Sec.~\ref{sec:algorithm} presents an overview of our optimal algorithm \alg\, and the main theoretical results. 
We evaluate our algorithm experimentally in Sec.~\ref{sec:evaluation}.
The formal algorithms and analysis are presented in Section \ref{sec:analysis}.
We conclude the paper with related work and discussion in Secs. \ref{sec:relatedwork} and \ref{sec:disussion_future_work}, respectively.
\revision{For readability some of the proofs are deferred to the appendix.}
\section{Preliminaries \& System Model}\label{sec:model}

We consider a system comprising a set of $\numswitches$ switches $\switchset$, a set of servers (workers) $\serverset$, and a special destination server $\destination \notin \serverset$.
We assume there exists a pre-specified {\em root} switch $\rootswitch \in \switchset$, and a {\em weighted tree network}
$\network=(\vertexset ,\linkset,\weight)$,
where
$\vertexset=\switchset \cup \set{\destination}$ and $\linkset=\linkset' \cup \set{(\rootswitch,\destination)}$ for some $\linkset' \subseteq \switchset^2$ forming a tree over the set of switches $\switchset$. 
Let $\weight:\linkset \mapsto \reals^+$ be the rate function of the links (in messages per second).
For $e\in \linkset$ let $\rate(e)=\frac{1}{\weight(e)}$.
The tree $\network$ thus consists of the underlying network topology connecting the switches, and connecting the root $\rootswitch$ to the destination $\destination$.

We further assume that all links in $\linkset$ are directed towards $\destination$.
Let $\Path(u,v)$ denote the unique directed path from $u$ to $v$, if such exist.
In particular, every switch $\switch \in \switchset$ has a unique {\em parent} switch $\parent(\switch) \in \switchset$ defined as the neighbor of $\switch$ on the unique path from $\switch$ to the $\destination$, $\Path(\switch,\destination)$.
In such a case we say $\switch$ is a {\em child} of $\parent(\switch)$, and we let $\childnum(\switch)$ denote the number of children of switch $\switch$.  
When $v$ is an ancestor of $u$ let $\rate(v,u)=\sum_{e \in \Path(v,u)} \rate(e)$,
and for convenience we let $\rate(\switch) = \rate((\switch,\parent(\switch)))$.

We further let $\distance(\switch)$ denote the distance between switch $\switch$ and the root $\rootswitch$, and let $\height(\network)=\max_{\switch}\distance(\switch)$ denote the {\em height} of the tree $\network$.
 
We assume each server $\server \in \serverset$ is connected to a single switch $\switch(\server) \in \switchset$, and let $\load:\switchset \mapsto \naturals$ be the function matching each switch $\switch$ with the number of servers conected to $\switch$.
We refer to $\load$ as the {\em network load}.
Each server $\server$ produces a single message, $x_{\server}$, which is forwarded to $\switch(\server)$, where we assume every message has size at most $\msgsize$, for some (large enough) constant $\msgsize$.
The destination server $\destination$ needs to compute some function $f(\overline{x})$.
Each switch $\switch$ can be of one of two types, or operates at one of two modes:
\begin{enumerate}[(i)]
\item an {\em aggregating} switch (blue), which can aggregate messages arriving from its children (each of size at most $\msgsize$), to a {\em single} message (also of size at most $\msgsize$) and forwards it to its parent switch $\parent(\switch)$,\footnote{\newrevision{We assume $\msgsize$ is large enough to hold the value of the function $f$ being computed at every node.}}
or
\item a {\em non-aggregating} switch (red), which cannot aggregate messages, and simply forwards each message arriving from any of its children to its parent switch $\parent(\switch)$.
\end{enumerate}

We denote by $\Avilabilty \subseteq \switchset$ the set of switches that are \emph{available}
to serve as aggregation switches.

Our assumption on the aggregation capabilities of aggregating switches is satisfied by systems computing, e.g., separable functions~\cite{mosk2006computing};
\revision{A separable
function of two independent values can be expressed as the product-operator of the values of two individual functions, each one applied to a distinct operand value.
In particular, this holds true for aggregation functions computing, e.g., the count, sum, or max/min of the values contained in the messages being sent by the servers~\cite{goda2019separability}. We leave for future work the study of more complex functions.}

In what follows we will be referring to aggregating switches as {\em blue} nodes in $\network$, and to non-aggregating switches as {\em red} nodes in $\network$.
We denote by a non-negative integer $\numblue$ our \emph{budget}, which serves as an upper bound on the number of blue nodes allowed in $\network$.
We will usually refer to $\blueset \subseteq \Avilabilty$ as the set of blue nodes in $\network$ and require that $\abs{\blueset} \le \numblue$. 

Given a weighted tree network $\network=(\vertexset,\linkset,\weight)$ with a network load $\load:\switchset \mapsto \naturals$, and a set of blue nodes $\blueset \subseteq \Avilabilty$, we consider a simple \reduce operation on $T$ as detailed in Algorithm \ref{alg:reduce}.
Every switch in the tree processes all messages received from its children and forwards message(s) to its parent.
Every blue node (i.e., a node in $\blueset$) is an aggregation switch and all other switches (i.e., nodes not in $\blueset$) are non-aggregation switches.
The operation ends when the {\em destination} receives the overall (possibly aggregated) information from all the nodes that have a strictly positive load.

\begin{algorithm}[t]
\caption[Algorithm]{\reduce$(\network,\load,\blueset)$}
\label{alg:reduce}
\begin{algorithmic}[1]
\Require tree $\network$,  network load $\load$, set of blue nodes $\blueset$
\Ensure aggregate information at destination $\destination$
\State for each node $v$ in $\network$ do:
\While{not received all messages from all children}
        \State process incoming message (by switch type: $\blue, \red$)  
        \State if needed send message to $\parent(v)$ (by switch type: $\blue, \red$)
\EndWhile
\end{algorithmic}
\end{algorithm}

For every link $\link=(\switch,\parent(\switch))$ in $\linkset$, we then define the {\em link message cost} $\msg_{\link}(\network,\load,\blueset)$ as the number of messages traversing link $\link$, given the \reduce operation on $\network$, $\load$, and $\blueset$.
We further define the {\em network utilization cost} (or the utilization complexity) as the total transition time associated with the \reduce operation on $\network$, $\blueset$ and $\load$ to be
\begin{align}
\label{eq:msgcost_def}
\msgcost(\network,\load,\blueset)=\sum_{\link \in \linkset}\msg_{\link}(\network,\load,\blueset)\cdot\rate(e).
\end{align}

The network utilization complexity measures the total (or equivalently, the average) transmission time of all links in preforming the Reduce operation.

\revision{
In this paper we study the {\em Bounded In-network Computing} (\bic) allocation problem which tries to minimize the network utilization cost. We refer to this problem as the \bica\ problem, which is formally defined as follows: 
} 

\begin{definition}[\bica]
Given a weighted tree network $\network=(\vertexset,\linkset,\weight)$, a network load $\load:\switchset \mapsto \naturals$, a set of available switches $\Avilabilty$, and a budget $\numblue$, the \bica\ problem is finding a set of switches $\blueset \subseteq \Avilabilty$ of size at most $\numblue$ that minimizes the utilization cost $\msgcost(\network,\load,\blueset)$. Formally,
\begin{align}\label{eq:mindef}
    \mbox{\bica}(\network,\load, \Avilabilty, \numblue)=\min_{\blueset \subseteq \Avilabilty, \abs{\blueset}=\numblue} \msgcost(\network,\load,\blueset).
\end{align}
\end{definition}

\revision{
Clearly, one can use a brute-force approach, and consider all possible $\Theta(n^k)$ subsets of size $k$.
}
Although such an approach may work well for a small constant $k$, such an enumeration would result in exorbitant running time for arbitrary values of $k$.
In what follows we will describe and discuss our {\em efficient} solution, \alg, to the \bica\ problem.

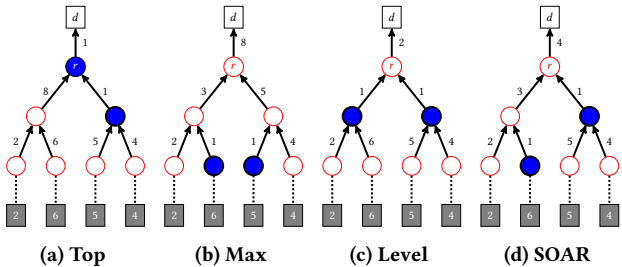
\begin{figure}[t!]
    \centering
    \subcaptionbox{\topalg\label{fig:toy_example_1:top}}{
        \resizebox{0.22\columnwidth}{!}{\begin{tikzpicture}
\node
    [node_p] {$\destination$}
        child{ node [node_b] {$\rootswitch$} 
                child{  node [node_r] {} 
                        child{  node [node_r] (s1) {}
                                child{  node [node_h] (h1) {2}
                                        edge from parent[draw=none]
                                }
            				    edge from parent node[left,xshift=-2mm] {2}
                        }
                        child{  node [node_r] (s2) {}
                                child{  node [node_h] (h2) {6}
                                        edge from parent[draw=none]
                                }
            				    edge from parent node[right,xshift=2mm] {6}
                        }
                        edge from parent node[left,xshift=-2mm] {8}
                }
                child{  node [node_b] {} 
                        child{  node [node_r] (s3) {}
                                child{  node [node_h] (h3) {5}
                                        edge from parent[draw=none]
                                }
            				    edge from parent node[left,xshift=-2mm] {5}
                        }
                        child{  node [node_r] (s4) {}
                                child{  node [node_h] (h4) {4}
                                        edge from parent[draw=none]
                                }
            				    edge from parent node[right,xshift=2mm] {4}
                        }
                        edge from parent node[right,xshift=2mm] {1}
                }
                edge from parent node[right,xshift=2mm] {1}
        }
;

\path[-,dashed,line width=1mm,]
    (s1) edge (h1)
    (s2) edge (h2)
    (s3) edge (h3)
    (s4) edge (h4)
;
    
\end{tikzpicture}}
    }
    \subcaptionbox{\maxalg\label{fig:toy_example_1:max}}{
        \resizebox{0.22\columnwidth}{!}{\begin{tikzpicture}
\node
    [node_p] {$\destination$}
        child{ node [node_r] {$\rootswitch$} 
                child{  node [node_r] {} 
                        child{  node [node_r] (s1) {} 
                                child{  node [node_h] (h1) {2}
                                        edge from parent[draw=none]
                                }
            				    edge from parent node[left,xshift=-2mm] {2}
                        }
                        child{  node [node_b] (s2) {}
                                child{  node [node_h] (h2) {6}
                                        edge from parent[draw=none]
                                }
            				    edge from parent node[right,xshift=2mm] {1}
                        }
                        edge from parent node[left,xshift=-2mm] {3}
                }
                child{  node [node_r] {} 
                        child{  node [node_b] (s3) {}
                                child{  node [node_h] (h3) {5}
                                        edge from parent[draw=none]
                                }
            				    edge from parent node[left,xshift=-2mm] {1}
                        }
                        child{  node [node_r] (s4) {}
                                child{  node [node_h] (h4) {4}
                                        edge from parent[draw=none]
                                }
            				    edge from parent node[right,xshift=2mm] {4}
                        }
                        edge from parent node[right,xshift=2mm] {5}
                        }
                edge from parent node[right,xshift=2mm] {8}
        }
;

\path[-,dashed,line width=1mm,]
    (s1) edge (h1)
    (s2) edge (h2)
    (s3) edge (h3)
    (s4) edge (h4)
;

\end{tikzpicture}}
    }
    \subcaptionbox{\levelalg\label{fig:toy_example_1:level}}{
        \resizebox{0.22\columnwidth}{!}{\begin{tikzpicture}
\node
    [node_p] {$\destination$}
        child{ node [node_r] {$\rootswitch$} 
                child{  node [node_b] {} 
                        child{  node [node_r] (s1) {}
                                child{  node [node_h] (h1) {2}
                                        edge from parent[draw=none]
                                }
            				    edge from parent node[left,xshift=-2mm] {2}
                        }
                        child{  node [node_r] (s2) {}
                                child{  node [node_h] (h2) {6}
                                        edge from parent[draw=none]
                                }
            				    edge from parent node[right,xshift=2mm] {6}
                        }
                        edge from parent node[left,xshift=-2mm] {1}
                }
                child{  node [node_b] {} 
                        child{  node [node_r] (s3) {}
                                child{  node [node_h] (h3) {5}
                                        edge from parent[draw=none]
                                }
            				    edge from parent node[left,xshift=-2mm] {5}
                        }
                        child{  node [node_r] (s4) {}
                                child{  node [node_h] (h4) {4}
                                        edge from parent[draw=none]
                                }
            				    edge from parent node[right,xshift=2mm] {4}
                        }
                        edge from parent node[right,xshift=2mm] {1}                }
                edge from parent node[right,xshift=2mm] {2}
        }
;

\path[-,dashed,line width=1mm,]
    (s1) edge (h1)
    (s2) edge (h2)
    (s3) edge (h3)
    (s4) edge (h4)
;

\end{tikzpicture}}
    }
    \subcaptionbox{\alg\label{fig:toy_example_1:alg}}{
        \resizebox{0.22\columnwidth}{!}{\begin{tikzpicture}
\node
    [node_p] {$\destination$}
        child{ node [node_r] {$\rootswitch$} 
                child{  node [node_r] {} 
                        child{  node [node_r] (s1) {} 
                                child{  node [node_h] (h1) {2}
                                        edge from parent[draw=none]
                                }
            				    edge from parent node[left,xshift=-2mm] {2}
                        }
                        child{  node [node_b] (s2) {}
                                child{  node [node_h] (h2) {6}
                                        edge from parent[draw=none]
                                }
            				    edge from parent node[right,xshift=2mm] {1}
                        }
                        edge from parent node[left,xshift=-2mm] {3}
                }
                child{  node [node_b] {} 
                        child{  node [node_r] (s3) {} 
                                child{  node [node_h] (h3) {5}
                                        edge from parent[draw=none]
                                }
            				    edge from parent node[left,xshift=-2mm] {5}
                        }
                        child{  node [node_r] (s4) {}
                                child{  node [node_h] (h4) {4}
                                        edge from parent[draw=none]
                                }
            				    edge from parent node[right,xshift=2mm] {4}
                        }
                        edge from parent node[right,xshift=2mm] {1}
                }
                edge from parent node[right,xshift=2mm] {4}
        }
;

\path[-,dashed,line width=1mm,]
    (s1) edge (h1)
    (s2) edge (h2)
    (s3) edge (h3)
    (s4) edge (h4)
;
\end{tikzpicture}}
    }
    \caption{Example of solutions produced by 4 allocation algorithms for a simple load over a weighted tree network, with constant rates of $1$, $\Avilabilty=\switchset$ and $\numblue=2$ aggregation switches (blue nodes).
    Switches in $\blueset$ are marked by circles, the servers connected to each of the leaf switches are depicted by a gray square noting the load of the switch, and the destination is marked by a white square.
    The tree network $\network$ is defined over $\switchset \cup \set{\destination}$, and edges are directed towards $\destination$.
    In each solution, 
    each link in $\network$ is marked by its utilization value. 
    Subfigure~\eqref{fig:toy_example_1:top} shows the solution produced by \topalg, with network utilization cost of 27.
    Subfigure~\eqref{fig:toy_example_1:max} shows the solution produced by \maxalg, with network utilization cost of 24.
    Subfigure~\eqref{fig:toy_example_1:level} shows the solution produced by \levelalg, with network utilization cost of 21.
    Subfigure~\eqref{fig:toy_example_1:alg} shows the solution produced by our proposed algorithm, \alg, which obtains the optimal network utilization cost of 20.}
    \label{fig:toy_example_1}
\end{figure}

\section{Motivating Example}
\label{sec:example}
We now turn to consider a motivating example highlighting the fact that simple, yet reasonable, approaches might fall short of finding an optimal solution to the \bica\ problem.
Specifically, we consider the following three allocation strategies for determining the set of blue nodes:
\begin{inparaenum}[(i)]
\item The \topalg\ strategy, which picks the set of $\numblue$ blue nodes as the set closest to the root.
This approach targets reducing the number of messages transmitted in the topmost part of the network, and is motivated by the fact that failing to aggregate messages close to the root may lead to a large number of messages being forwarded from the root to the destination.
\item The \maxalg\ strategy, which picks the set of blue nodes as the $\numblue$ switches with the largest 
load.

\item The \levelalg\ strategy, defined for complete binary trees, which aims at partitioning the network into subtrees of similar size, where all the messages within a subtree are aggregated. This is done  by picking a whole level in the complete binary tree as the set of blue nodes.

\end{inparaenum}

We consider a tree network with $n=7$ switches which induces a complete binary tree topology on the set of switches which are all available and support aggregation, and all links have a constant rate of $1$.
Servers are connected only to leaf switches.
Such a topology can be viewed as if the leaf switches are effectively top-of-rack (ToR) switches in a small datacenter topology, where each rack accommodates a distinct number of servers (or VMs).
Fig.~\ref{fig:toy_example_1} provides an illustration of the network and the load being handled.
Each leaf switch is connected to a rack of several servers where the number of servers in the rack is marked in the gray square depicting the rack.
In particular, the load handled by the 4 leaf switches is $(2,6,5,4)$ (from left to right).
In our example the maximum number of blue switches allowed is set to $\numblue=2$.
Each link $e$ is marked with the utilization cost of this link, $\msg_{\link}(\network,\load,\blueset) \cdot 1$.

Figs. ~\eqref{fig:toy_example_1:top}, \eqref{fig:toy_example_1:max}, and~\eqref{fig:toy_example_1:level} show the results of applying strategies \topalg, \maxalg, and \levelalg, respectively, to such a network and load.
The optimal approach, which is obtained by our proposed algorithm, \alg\ (formally described and analyzed in Sec. ~\ref{sec:algorithm}), ends up picking a non-trivial set of blue nodes as can be seen in Fig. ~\eqref{fig:toy_example_1:alg}.
This allocation strictly outperforms all three contending strategies.

Fig.~\ref{fig:toy_example_2} provides examples of the optimal sets of blue nodes for increasing values of $\numblue$.
We note that in general, optimal solutions need not be unique, and for such cases ($\numblue=1,4$, in Figs.~\eqref{fig:toy_example_2:k_1} and~\eqref{fig:toy_example_2:k_4}) we provide one of these solutions.
However, for some cases ($\numblue=2,3$, in Figs.~\eqref{fig:toy_example_2:k_2} and~\eqref{fig:toy_example_2:k_3}) the optimal solutions are unique.
Considering the specific optimal solutions provided for these cases, we observe that the optimal sets of blue nodes, for increasing values of $\numblue$, are not necessarily monotone. Namely, adding even one more blue node to the set can change the set of blue nodes completely.

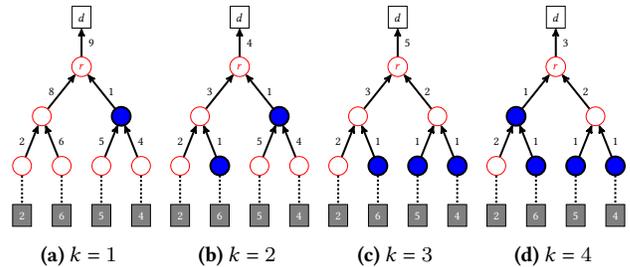
\begin{figure}[t!]
    \centering
    \subcaptionbox{$\numblue=1$\label{fig:toy_example_2:k_1}}{
        \resizebox{0.22\columnwidth}{!}{\begin{tikzpicture}
\node
    [node_p] {$\destination$}
        child{ node [node_r] {$\rootswitch$} 
                child{  node [node_r] {} 
                        child{  node [node_r] (s1) {} 
                                child{  node [node_h] (h1) {2}
                                        edge from parent[draw=none]
                                }
            				    edge from parent node[left,xshift=-2mm] {2}
                        }
                        child{  node [node_r] (s2) {}
                                child{  node [node_h] (h2) {6}
                                        edge from parent[draw=none]
                                }
            				    edge from parent node[right,xshift=2mm] {6}
                        }
                        edge from parent node[left,xshift=-2mm] {8}
                }
                child{  node [node_b] {} 
                        child{  node [node_r] (s3) {} 
                                child{  node [node_h] (h3) {5}
                                        edge from parent[draw=none]
                                }
            				    edge from parent node[left,xshift=-2mm] {5}
                        }
                        child{  node [node_r] (s4) {}
                                child{  node [node_h] (h4) {4}
                                        edge from parent[draw=none]
                                }
            				    edge from parent node[right,xshift=2mm] {4}
                        }
                        edge from parent node[right,xshift=2mm] {1}
                }
                edge from parent node[right,xshift=2mm] {9}
        }
;

\path[-,dashed,line width=1mm,]
    (s1) edge (h1)
    (s2) edge (h2)
    (s3) edge (h3)
    (s4) edge (h4)
;

\end{tikzpicture}}
    }
    \subcaptionbox{$\numblue=2$\label{fig:toy_example_2:k_2}}{
        \resizebox{0.22\columnwidth}{!}{\begin{tikzpicture}
\node
    [node_p] {$\destination$}
        child{ node [node_r] {$\rootswitch$} 
                child{  node [node_r] {} 
                        child{  node [node_r] (s1) {} 
                                child{  node [node_h] (h1) {2}
                                        edge from parent[draw=none]
                                }
            				    edge from parent node[left,xshift=-2mm] {2}
                        }
                        child{  node [node_b] (s2) {}
                                child{  node [node_h] (h2) {6}
                                        edge from parent[draw=none]
                                }
            				    edge from parent node[right,xshift=2mm] {1}
                        }
                        edge from parent node[left,xshift=-2mm] {3}
                }
                child{  node [node_b] {} 
                        child{  node [node_r] (s3) {} 
                                child{  node [node_h] (h3) {5}
                                        edge from parent[draw=none]
                                }
            				    edge from parent node[left,xshift=-2mm] {5}
                        }
                        child{  node [node_r] (s4) {}
                                child{  node [node_h] (h4) {4}
                                        edge from parent[draw=none]
                                }
            				    edge from parent node[right,xshift=2mm] {4}
                        }
                        edge from parent node[right,xshift=2mm] {1}
                }
                edge from parent node[right,xshift=2mm] {4}
        }
;

\path[-,dashed,line width=1mm,]
    (s1) edge (h1)
    (s2) edge (h2)
    (s3) edge (h3)
    (s4) edge (h4)
;

\end{tikzpicture}}
    }
    \subcaptionbox{$\numblue=3$\label{fig:toy_example_2:k_3}}{
        \resizebox{0.22\columnwidth}{!}{\begin{tikzpicture}
\node
    [node_p] {$\destination$}
        child{ node [node_r] {$\rootswitch$} 
                child{  node [node_r] {} 
                        child{  node [node_r] (s1) {} 
                                child{  node [node_h] (h1) {2}
                                        edge from parent[draw=none]
                                }
            				    edge from parent node[left,xshift=-2mm] {2}
                        }
                        child{  node [node_b] (s2) {}
                                child{  node [node_h] (h2) {6}
                                        edge from parent[draw=none]
                                }
            				    edge from parent node[right,xshift=2mm] {1}
                        }
                        edge from parent node[left,xshift=-2mm] {3}
                }
                child{  node [node_r] {} 
                        child{  node [node_b] (s3) {} 
                                child{  node [node_h] (h3) {5}
                                        edge from parent[draw=none]
                                }
            				    edge from parent node[left,xshift=-2mm] {1}
                        }
                        child{  node [node_b] (s4) {}
                                child{  node [node_h] (h4) {4}
                                        edge from parent[draw=none]
                                }
            				    edge from parent node[right,xshift=2mm] {1}
                        }
                        edge from parent node[right,xshift=2mm] {2}
                }
                edge from parent node[right,xshift=2mm] {5}
        }
;

\path[-,dashed,line width=1mm,]
    (s1) edge (h1)
    (s2) edge (h2)
    (s3) edge (h3)
    (s4) edge (h4)
;

\end{tikzpicture}}
    }
    \subcaptionbox{$\numblue=4$\label{fig:toy_example_2:k_4}}{
        \resizebox{0.22\columnwidth}{!}{\begin{tikzpicture}
\node
    [node_p] {$\destination$}
        child{ node [node_r] {$\rootswitch$} 
                child{  node [node_b] {} 
                        child{  node [node_r] (s1) {} 
                                child{  node [node_h] (h1) {2}
                                        edge from parent[draw=none]
                                }
            				    edge from parent node[left,xshift=-2mm] {2}
                        }
                        child{  node [node_b] (s2) {}
                                child{  node [node_h] (h2) {6}
                                        edge from parent[draw=none]
                                }
            				    edge from parent node[right,xshift=2mm] {1}
                        }
                        edge from parent node[left,xshift=-2mm] {1}
                }
                child{  node [node_r] {} 
                        child{  node [node_b] (s3) {} 
                                child{  node [node_h] (h3) {5}
                                        edge from parent[draw=none]
                                }
            				    edge from parent node[left,xshift=-2mm] {1}
                        }
                        child{  node [node_b] (s4) {}
                                child{  node [node_h] (h4) {4}
                                        edge from parent[draw=none]
                                }
            				    edge from parent node[right,xshift=2mm] {1}
                        }
                        edge from parent node[right,xshift=2mm] {2}
                }
                edge from parent node[right,xshift=2mm] {3}
        }
;

\path[-,dashed,line width=1mm,]
    (s1) edge (h1)
    (s2) edge (h2)
    (s3) edge (h3)
    (s4) edge (h4)
;

\end{tikzpicture}}
    }
    \caption{Example of optimal solutions (produced by \alg) for distinct bounds on the number of allowed blue switches $\numblue=1,2,3,4$, with network utilization costs of $35,20,15,11$, respectively.
    Switches in $\switchset$ are marked by circles, the servers connected to each of the leaf switches are depicted by a gray square noting the load of the switch, and the destination is marked by a white square.
    The tree network $\network$ is defined over $\switchset \cup \set{\destination}$, edges are directed towards $\destination$ and constant rates of 1.
    In each solution, 
    each link in $\network$ is marked by its link utilization cost. 
    The solutions for $\numblue=2$ (Subfigure~\eqref{fig:toy_example_2:k_2}) and for $\numblue=3$ (Subfigure~\eqref{fig:toy_example_2:k_3}) are unique.
    These two cases serve as an example for the fact that the optimal sets of blue nodes for increasing values of $\numblue$ are not necessarily monotone.}
    \label{fig:toy_example_2}
\end{figure}


\section{\alg: An Optimal Algorithm}
\label{sec:algorithm}
\epigraph{``Those who sow in tears will reap with songs of joy.''}{Psalm 126:5}

In this section we describe our algorithm, \alg, that produces an optimal solution to the \bica\ problem.\footnote{\alg\ stands for SOw-And-Reap.}
The intuition underlying our algorithm is that if we are able to optimally sow blue nodes in the right locations, then it is possible to reap significantly improved performance in terms of the system's utilization complexity.
The main technical contribution of the paper is the following theorem.
\begin{theorem}
\label{thm:alg_is_optimal}
Given a weighted tree network $\network$,rates $\weight$,a load $\load$, availability $\Avilabilty$, and a bound $\numblue$ on the number of allowed blue switches, algorithm \alg\ solves the \bica\ problem in time $O(\numswitches \cdot \height(\network) \cdot \numblue^2)$.
\end{theorem}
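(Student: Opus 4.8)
The plan is to turn $\msgcost$ into a recursion over subtrees and solve it by a bottom-up dynamic program (a \alggather\ pass that fills the tables) followed by a top-down \algcolor\ pass that reconstructs an optimal blue set. First I would record the elementary structure of the message counts. For a blue set $\blueset$ and a switch $v$ with parent edge $e_v=(v,\parent(v))$, write $m_\blueset(v)=\msg_{e_v}(\network,\load,\blueset)$; then $m_\blueset(v)=\mathbf{1}[\text{the subtree at }v\text{ has positive load}]$ if $v\in\blueset$, and $m_\blueset(v)=\load(v)+\sum_{c\in\mathrm{ch}(v)}m_\blueset(c)$ if $v\notin\blueset$, where $\mathrm{ch}(v)$ denotes the children of $v$; consequently $\msgcost(\network,\load,\blueset)=\sum_{v\in\switchset}m_\blueset(v)\cdot\rate(e_v)$. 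Unrolling the red case shows that each message is charged $\rate(e)$ on every edge $e$ from its origin up to its first blue ancestor (or up to $\destination$), i.e.\ the cost is additive along root-paths, which is what makes a tree DP possible.

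\textbf{Step 2 (the DP and its potential).} The obstruction to a naive subtree DP is that for a red $v$ the escaping count $m_\blueset(v)$ is unbounded and couples with all red ancestors of $v$ --- exactly the dependency noted in the introduction that defeats greedy/divide-and-conquer. I would break the coupling by adding, as a state parameter, the node $a$ at which the messages leaving $v$ are aggregated (or, for $a=\destination$, collected); $a$ ranges only over the proper ancestors of $v$ in $\network$ (including $\destination$), of which there are $\depth(v)+1\le\height(\network)+1$. Formally, for $v\in\switchset$, $a$ a proper ancestor of $v$, and $0\le j\le\numblue$, define
\begin{align*}
\mincost_a(v,j)=\min_{\substack{\blueset'\subseteq\Avilabilty\cap\mathrm{subtree}(v)\\ \abs{\blueset'}\le j}}\ \left[\ \sum_{u\in\mathrm{subtree}(v)}m_{\blueset'}(u)\,\rate(e_u)\ +\ m_{\blueset'}(v)\cdot\rate(a,\parent(v))\ \right],
\end{align*}
where $\rate(a,\parent(v))$ is the sum of rates along the path from $\parent(v)$ up to $a$ (and $0$ when $a=\parent(v)$). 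The second summand is the non-trivial potential: it pre-charges the messages escaping $v$ for the portion of their journey lying above $v$, so a subtree's value depends on what lies above it only through $a$. I would also keep an auxiliary table, the analogue of \minsplit, holding the best cost of the first $i$ children of $v$ under a given budget and target; this is the intermediate state of the children-combination below.

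\textbf{Step 3 (recurrence, correctness, extraction).} By structural induction from the leaves I would prove
\begin{align*}
\mincost_a(v,j)=\min\Bigl\{\ &\mathbf{1}[\text{subtree at }v\text{ has positive load}]\cdot\rate(a,v)+\min_{\sum_c j_c=j-1}\sum_{c\in\mathrm{ch}(v)}\mincost_v(c,j_c)\quad(\text{if }v\in\Avilabilty,\ j\ge1),\\
&\load(v)\cdot\rate(a,v)+\min_{\sum_c j_c=j}\sum_{c\in\mathrm{ch}(v)}\mincost_a(c,j_c)\ \Bigr\},
\end{align*}
with the leaf case obtained by deleting the sums over children. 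The content is the decomposition of Step 1: the bracketed objective for $\mathrm{subtree}(v)$ equals $v$'s local term $m_{\blueset'}(v)\cdot\rate(a,v)$ (with $m_{\blueset'}(v)$ either $\mathbf 1[\cdots]$ or $\load(v)+\sum_c m_{\blueset'}(c)$ according to $v$'s colour) plus the sum of the children's analogous objectives; and --- crucially --- once $v$'s colour is fixed the children become independent, since if $v$ is blue every child aggregates at $v$ (target $v$, and $\rate(v,\parent(c))=0$) and if $v$ is red every child's messages pass through $v$ and still aggregate at $a$ (target $a$, using $\rate(a,\parent(c))=\rate(a,v)$ because $\parent(c)=v$). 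Hence each child's subtree is optimised separately and the results are merged by a min-plus convolution over the budget split, which is precisely the recurrence. Taking $a=\destination$ at the root (where $\rate(\destination,\parent(\rootswitch))=0$) gives $\bica(\network,\load,\Avilabilty,\numblue)=\mincost_\destination(\rootswitch,\numblue)$ --- using that adding a blue node never increases any $m_\blueset(\cdot)$, so $\abs{\blueset'}\le\numblue$ and $\abs{\blueset'}=\numblue$ give the same optimum --- and an optimal $\blueset$ is recovered by \algcolor, walking down the tree along the recorded colour and budget-split choices.

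\textbf{Running time and the main obstacle.} The table has $\sum_v(\depth(v)+1)(\numblue+1)=O(\numswitches\cdot\height(\network)\cdot\numblue)$ cells. For each $v$ the blue branch is computed once; for each of the $\le\height(\network)+1$ targets $a$ the red branch combines $v$'s children by iterated pairwise min-plus convolution of length-$(\numblue+1)$ arrays, costing $O(\childnum(v)\cdot\numblue^2)$ per target; summing over $a$ and then over $v$ with $\sum_v\childnum(v)=\numswitches-1$ gives $O(\numswitches\cdot\height(\network)\cdot\numblue^2)$, and the \algcolor\ traceback is dominated by this. I expect the crux to be Steps 2--3: recognising that the right state augments a subtree by its aggregation target $a$, and that the deferred-cost potential $m_{\blueset'}(v)\cdot\rate(a,\parent(v))$ must be folded into the DP value --- this is what decouples a red node's subtree from the unknown placement above it while keeping the number of targets, and hence the running time, linear in $\height(\network)$. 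The cost decomposition, the convolution accounting, and the traceback are then routine.
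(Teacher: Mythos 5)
Your proposal is correct and matches the paper's approach essentially step for step: your target-ancestor parameter $a$ is exactly the paper's distance parameter $\ell$ (with $a=A_v^\ell$), your deferred-cost potential $m_{\blueset'}(v)\cdot\rate(a,\parent(v))$ folded into the subtree value is the paper's $(v,m)$-potential $\aggc_v(\ell,\blueset)$, your colour-conditioned decoupling of children with min-plus convolution over budget splits (including the prefix-of-children auxiliary table) is the paper's $\dpellicol_v^m$/$\mincost$ recursion proved by double induction, and your traceback and $O(\numswitches\cdot\height(\network)\cdot\numblue^2)$ accounting coincide with \algcolor\ and the paper's running-time argument. Only cosmetic differences remain (indexing by ancestor node versus distance, at-most versus exactly $i$ blue nodes, and the positive-load indicator versus the paper's charge of one message per blue node), none of which affects correctness.
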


Before describing the algorithm and proving the theorem we first provide some insight as to the structure induced by any solution, and the long-ranging effect of having a sequence of red nodes along a path. These serve to provide a better understanding of our objective function $\msgcost(\network,\load,\blueset)$ capturing the utilization cost of the system.
\subsection{Re-formulating the Utilization Complexity: A Barrier Perspective} 
\label{sec:alg:barrier}
When considering the \bica\ problem, one can view any solution $\blueset$ as inducing a {\em tree partitioning}, such that while scanning the nodes from the leaves towards the root, for every blue node $v \in \blueset$ that has no blue nodes in its subtree, we can {\em detach} the subtree rooted at $v$ from the tree.
Such a blue node $v$ then becomes a leaf in the remaining tree, where its load is set to $1$ in that tree.
The overall utilization cost in the original tree is simply the sum of utilization costs in all the subtrees thus produced.
The reason for this equality is that every blue node in the tree effectively forms a {\em barrier} between the subtree rooted at the node, and the remaining tree above the node.
Fig.~\ref{fig:barrier} provides an illustration of such a decomposition, and the breakdown of the utilization complexity as the sum of the utilization complexity over the subtrees.
Note that in each subtree, each node that was blue in the original network $\network$ is either a leaf (with load 1), or a destination.

\begin{figure}[t]
    \centering
    \subcaptionbox{Network $\network$\label{fig:barrier:complete_network}}{
        \resizebox{0.25\columnwidth}{!}{\begin{tikzpicture}
\node
    [node_p] {$\destination$}
        child{ node [node_r] {$\rootswitch$} 
                child{  node [node_r] {} 
                        child{  node [node_r] (s1) {} 
                                child{  node [node_h] (h1) {2}
                                        edge from parent[draw=none]
                                }
            				    edge from parent node[left,xshift=-2mm] {2}
                        }
                        child{  node [node_b] (s2) {}
                                child{  node [node_h] (h2) {6}
                                        edge from parent[draw=none]
                                }
            				    edge from parent node[right,xshift=2mm] {1}
                        }
                        edge from parent node[left,xshift=-2mm] {3}
                }
                child{  node [node_b] {} 
                        child{  node [node_r] (s3) {} 
                                child{  node [node_h] (h3) {5}
                                        edge from parent[draw=none]
                                }
            				    edge from parent node[left,xshift=-2mm] {5}
                        }
                        child{  node [node_r] (s4) {}
                                child{  node [node_h] (h4) {4}
                                        edge from parent[draw=none]
                                }
            				    edge from parent node[right,xshift=2mm] {4}
                        }
                        edge from parent node[right,xshift=2mm] {1}
                }
                edge from parent node[right,xshift=2mm] {4}
        }
;

\path[-,dashed,line width=1mm,]
    (s1) edge (h1)
    (s2) edge (h2)
    (s3) edge (h3)
    (s4) edge (h4)
;

\end{tikzpicture}}
    }
    \subcaptionbox{Tree decomposition of $\network$\label{fig:barrier:decomposition}}{
        \hspace{-0.35cm}
        \resizebox{0.25\columnwidth}{!}{\begin{tikzpicture}
\node
    [node_wh] {}
        child{ node [node_ws] {} 
                child{  node [node_ws] {} 
                        child{  node [node_ws] (s1) {} 
                                child{  node [node_wh] (h1) {}
                                        edge from parent[draw=none]
                                }
                                edge from parent[draw=none]
                        }
                        child{  node [node_ws] (s2) {}
                                child{  node [node_wh] (h2) {}
                                        edge from parent[draw=none]
                                }
                                edge from parent[draw=none]
                        }
                        edge from parent[draw=none]
                }
                child{  node [node_b] {} 
                        child{  node [node_r] (s3) {} 
                                child{  node [node_h] (h3) {5}
                                        edge from parent[draw=none]
                                }
            				    edge from parent node[left,xshift=-2mm] {5}
                        }
                        child{  node [node_r] (s4) {}
                                child{  node [node_h] (h4) {4}
                                        edge from parent[draw=none]
                                }
            				    edge from parent node[right,xshift=2mm] {4}
                        }
                        edge from parent[draw=none]
                }
                edge from parent[draw=none]
        }
;

\path[-,dashed,line width=1mm,]
    (s3) edge (h3)
    (s4) edge (h4)
;

\end{tikzpicture}}
        \resizebox{0.25\columnwidth}{!}{\begin{tikzpicture}
\node
    [node_wh] {}
        child{ node [node_ws] {} 
                child{  node [node_ws] {} 
                        child{  node [node_ws] (s1) {} 
                                child{  node [node_wh] (h1) {}
                                        edge from parent[draw=none]
                                }
                                edge from parent[draw=none]
                        }
                        child{  node [node_b] (s2) {}
                                child{  node [node_h] (h2) {6}
                                        edge from parent[draw=none]
                                }
                                edge from parent[draw=none]
                        }
                        edge from parent[draw=none]
                }
                child{  node [node_ws] {} 
                        child{  node [node_ws] (s3) {} 
                                child{  node [node_wh] (h3) {}
                                        edge from parent[draw=none]
                                }
                                edge from parent[draw=none]
                        }
                        child{  node [node_ws] (s4) {}
                                child{  node [node_wh] (h4) {}
                                        edge from parent[draw=none]
                                }
                                edge from parent[draw=none]
                        }
                        edge from parent[draw=none]
                }
                edge from parent[draw=none]
        }
;

\path[-,dashed,line width=1mm,]
    (s2) edge (h2)
;

\end{tikzpicture}} 
        \hspace{-0.5cm}
        \resizebox{0.22\columnwidth}{!}{\begin{tikzpicture}
\node
    [node_p] {$\destination$}
        child{ node [node_r] {$\rootswitch$} 
                child{  node [node_r] {} 
                        child{  node [node_r] (s1) {} 
                                child{  node [node_h] (h1) {2}
                                        edge from parent[draw=none]
                                }
            				    edge from parent node[left,xshift=-2mm] {2}
                        }
                        child{  node [node_b] (s2) {} 
                                child{  node [node_h] (h2) {1}
                                        edge from parent[draw=none]
                                }
            				    edge from parent node[right,xshift=2mm] {1}
                        }
                        edge from parent node[left,xshift=-2mm] {3}
                }
                child{  node [node_b] (rcr) {} 
                        child{  node [node_h] (h3) {1}
                                edge from parent[draw=none]
                        }
                        edge from parent node[right,xshift=2mm] {1}
                }
                edge from parent node[right,xshift=2mm] {4}
        }
;

\path[-,dashed,line width=1mm,]
    (s1) edge (h1)
    (s2) edge (h2)
    (rcr) edge (h3)
;

\end{tikzpicture}}
    }
    \caption{Example of the barrier perspective using tree-decomposition.}
    \label{fig:barrier}
\end{figure}
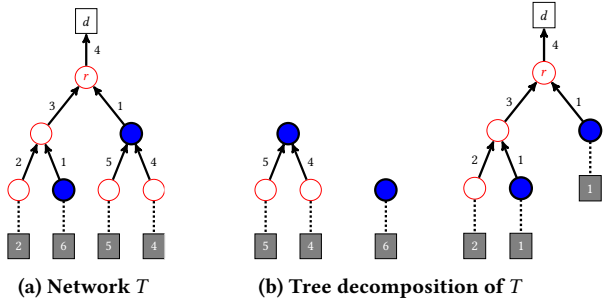

An alternative way to view the system's utilization complexity, which serves as the fulcrum in our proposed algorithm, \alg, is considering the distance of any node $v$ from its {\em closest} blue ancestor (or the destination $d$, if no such blue ancestor exists).
Formally, the next lemma, which follows directly from the definition of $\msgcost$ in Eq.~\ref{eq:msgcost_def}, provides an alternative characterization of $\msgcost(\network,\load,\blueset)$.
\begin{lemma}
\label{lem:decompose}
Consider a tree network $\network$ with load $\load$, and consider any set $\blueset \subseteq \network$ of blue nodes.
For each node $v \in T$ let $\parent^*_v=\parent^*_v(\network,\blueset)$ denote $v$'s closest blue ancestor, if one exists, or $d$, otherwise. Then,
\begin{align}
\msgcost(\network,\load,\blueset)
&= \sum_{v \in \blueset} 1 \cdot  \rate(v,\parent^*_v)
+ \sum_{v \notin \blueset} \load(v) \cdot \rate(v,\parent^*_v)
\label{eq:decompose}
\end{align}
\end{lemma}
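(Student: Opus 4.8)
The plan is to prove~\eqref{eq:decompose} by reorganizing the defining sum
$\msgcost(\network,\load,\blueset)=\sum_{\link\in\linkset}\msg_{\link}(\network,\load,\blueset)\cdot\rate(\link)$
according to the \emph{origin} of each message rather than according to the link it traverses: every message appearing anywhere in the network is charged to a unique node, the rates of the links it crosses are summed per node, and the resulting bookkeeping collapses to the claimed two sums.

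First I would make precise how messages propagate in the \reduce\ operation of Algorithm~\ref{alg:reduce}. Scanning the tree from the leaves upward: a red node $v$ forwards to $\parent(v)$ all $\sum_{c}\msg_{(c,v)}$ messages received from its children $c$, together with the $\load(v)$ messages generated by its own servers; a blue node $v$ merges everything it sees into a \emph{single} message sent to $\parent(v)$. Hence every message present in the network is accounted for exactly once by an ``origin'': a red node $v$ is the origin of $\load(v)$ server messages, and a blue node $v$ is the origin of the one aggregate message it emits upward. A message with origin $u$ then travels up along $\Path(u,\destination)$ and is absorbed by the first blue node it meets, namely $\parent^*_u$ (or it simply arrives, when $\parent^*_u=\destination$), so it crosses exactly the links of $\Path(u,\parent^*_u)$ and no others. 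Equivalently, for every link $\link=(v,\parent(v))$,
\[
\msg_{\link}(\network,\load,\blueset)=\sum_{\substack{u\in\blueset\\ \link\in\Path(u,\parent^*_u)}}1\;+\;\sum_{\substack{u\notin\blueset\\ \link\in\Path(u,\parent^*_u)}}\load(u),
\]
which can also be verified by induction up the tree from the recurrence $\msg_{(v,\parent(v))}=\load(v)+\sum_{c}\msg_{(c,v)}$ for red $v$, and $\msg_{(v,\parent(v))}=1$ for blue $v$.

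It then only remains to substitute this into $\msgcost=\sum_{\link}\rate(\link)\cdot\msg_{\link}$ and swap the two summations: the total contribution of an origin $u$ is its multiplicity (equal to $1$ if $u\in\blueset$ and to $\load(u)$ if $u\notin\blueset$) times $\sum_{\link\in\Path(u,\parent^*_u)}\rate(\link)=\rate(u,\parent^*_u)$, the last equality being exactly the definition of $\rate(\cdot,\cdot)$ as the sum of per-link rates along a path; summing over all origins gives precisely the two sums in~\eqref{eq:decompose}. The one step genuinely requiring care is the combinatorial claim that each message crosses \emph{exactly} the links of $\Path(u,\parent^*_u)$ --- that a blue node contributes a single upward message, and that a server message is neither dropped before, nor carried past, its nearest blue ancestor; I would settle this by the leaves-to-root induction above, which also dovetails with the barrier decomposition of Section~\ref{sec:alg:barrier}, where a blue node with no blue descendant heads a detached block in which it is the unique blue node and emits exactly one message. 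Everything else is linearity of summation.
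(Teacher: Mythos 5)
Your proof is correct and takes essentially the same route as the paper: the paper treats the lemma as an immediate consequence of the definition of $\msgcost$ via the barrier/closest-blue-ancestor perspective, and your argument (expressing $\msg_{\link}$ as a sum over message origins $u$ whose path $\Path(u,\parent^*_u)$ crosses $\link$, established by the leaves-to-root recurrence, and then exchanging the order of summation) is precisely a careful formalization of that accounting. No gaps.
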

To illustrate Eq. \eqref{eq:decompose}, consider Fig.~\eqref{fig:toy_example_2:k_2}.
By Eq. \eqref{eq:msgcost_def}, the sum over all edges (from left to right, bottom-up) is
$(2+1+5+4)+(3+1)+4=20$. Alternatively, by Eq. \eqref{eq:decompose} (considering the relevant nodes from left to right, bottom-up) is $(3+2) + (2 \cdot 3+5 \cdot 1+4 \cdot 1)=20$.


We note that the closest blue ancestor of a node is equivalent to the blue node serving as barrier in our description of the tree-decomposition induced by any set of blue nodes $\blueset$.

\revision{The alternative formulation of our objective in Eq.~\ref{eq:decompose} lays at the core of our proposed algorithm, \alg.
In particular, as we show in the sequel, this formulation will serve to evaluate the potential effect, in terms of utilization, of having a node colored red or blue.
}


\subsection{Overview of \alg}

In this section we provide a high-level overview of \alg, formally defined in Algorithm~\ref{alg:alg}, which solves the \bica\ problem.
Our solution is based on dynamic programming, and is split into two phases. In the first phase we apply algorithm \alggather, formally defined in Algorithm~\ref{alg:alg:gather} (in Sec.~\ref{sec:analysis}), for gathering the information required for computing an optimal solution. This is followed by the second phase where we apply algorithm \algcolor, formally defined in Algorithm~\ref{alg:alg:color} (in Sec.~\ref{sec:analysis}), which traces back the actual allocation of blue nodes along the breadcrumbs produced in the first phase.
We now provide further details as to each of the phases, and discuss their design criteria.

\begin{algorithm}[t!]
\caption[Algorithm]{\alg$(\network,\load,\Avilabilty,\numblue)$}
\label{alg:alg}
\begin{algorithmic}[1]
\Require A tree $T$, load $\load$, availability $\Avilabilty$, $\numblue$ $\#$ of blue nodes
\Ensure $\mbox{\bica}(\network,\load,\numblue)$
    \State run \alggather$(\network,\load, \Avilabilty, \numblue)$ at each node $v$
    \Statex
    \Comment{gather $\dpelli_v$ and $\dpellicol_v$ in bottom-up order}
    \State wait until destination receives $\dpelli_r$
    \State run \algcolor$(k)$ at each node $v$
    \Statex
    \Comment{determine node colors in top-down order}
\end{algorithmic}
\end{algorithm}

\begin{figure*}
    \centering
    \subcaptionbox{\alggather\ running example. \label{fig:run_gather_example}}{        \includegraphics[width=1.85\columnwidth]{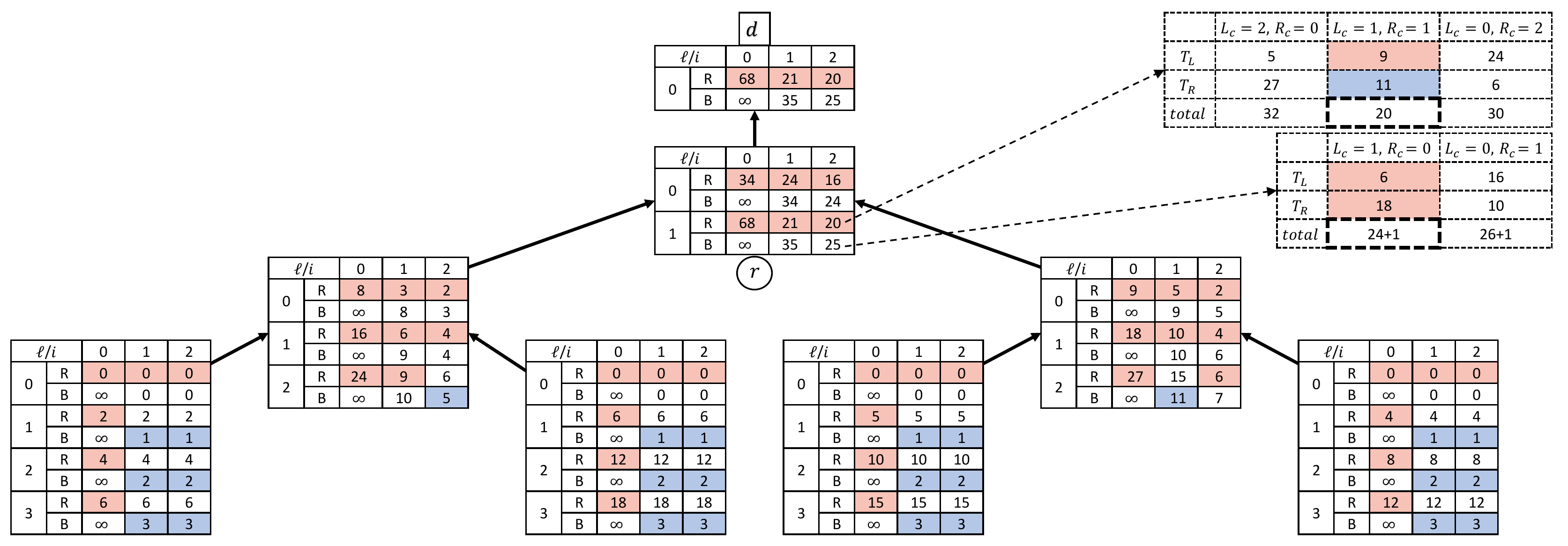}}\\
    \subcaptionbox{\algcolor\ running example.\label{fig:run_color_example}}{
        \includegraphics[width=1.85\columnwidth]{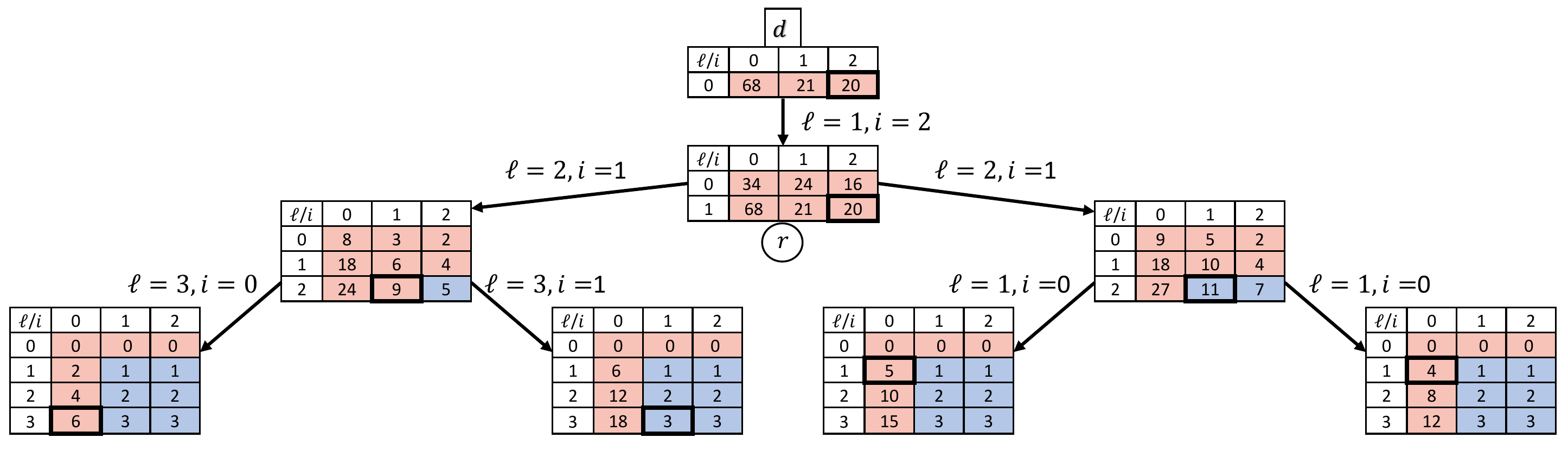}}
    \caption{\alg\ running example}
    \label{fig:run example}
\end{figure*}

\paragraph{\alggather.} Algorithm \alggather, which
\revision{effectively builds the dynamic programming table,}
uses parameterized potential functions, which account for the long-range effect of having red nodes in the subtree rooted at some given node.
In particular, a parameter $\ell$ used in these potential functions corresponds to the possible distance of a node in the network from the closest blue ancestor (or the root, if there is no such ancestor) in the network.
Since we don't know the coloring at this phase we compute the potential function for node $v$ and each possible value $\ell$ between zero and $\distance(v)$. 
The key observation that enables us to do this efficiently is that conditioning on a parent $v$ having some specific color (either red or blue) and the value of parameter $\ell$, the subtree $\network_v$ can be independently optimized from the rest of the tree.  

The information gathered during the first phase provides the breadcrumbs required for determining the allocation of blue nodes within the network in the second phase.
In particular, each node gathers two sets of values:
\begin{inparaenum}[(i)]
\item $\dpelli_v$, which prescribes the utilization that would potentially be added in links further up in the tree (for any amount $i$ of blue nodes being distributed in the subtree rooted at $v$) until the closest blue ancestor (at distance $\ell$ from $v$), and
\item $\dpellicol_v$, which registers the distinct partitioning of blue nodes to children of $v$, for any combination of number of blue nodes $i$ that should be distributed in the subtree rooted at $v$, any distance $\ell$ from $v$ to its closest blue ancestor, and for both cases of whether $v$ is blue or red.
\revision{As we show in our proofs in Sec.~\ref{sec:analysis}, these partitions can be computed efficiently.}
\end{inparaenum}
In \alggather\ information is gathered while scanning the nodes of the network from the leafs upwards.

\paragraph{\algcolor.} Algorithm \algcolor, which determines the color of each node, either blue or red, traces
\revision{an optimal path in the dynamic programming table calculated by \alggather.}
Initially, a switch is set to be red, and this is altered only if setting the switch to being blue implies a smaller value of the potential function computed during the first phase, for the specific parameter corresponding to the distance of that node from its nearest blue ancestor, or from the destination $\destination$ (in case the node has no blue ancestor).
The nodes are assigned colors in this manner while scanning them from the root downwards, where each node further alerts each of its children as to the number of blue nodes that should be distributed in the subtree rooted at that child.
\revision{The number of blue nodes assigned to the subtree rooted at each child, i.e., the partitioning of blue nodes across the subtrees of the node, is also available as one of the outputs of \alggather.}

We note that both \alggather\ and \algcolor\ are described as distributed, asynchronous, algorithms, where synchronization between nodes is maintained by waiting for specific messages, or signals, to be received from either the children of a node, or the parent of a node.
For \alggather, the leaves of the tree network $\network$ initiate the messages carrying information upwards in the network, all the way up to the destination server $\destination$.
For \algcolor, the first node to initiate the flow of information is the destination server $\destination$, which sends the bound on the number of allowed blue switches, $\numblue$, to the root node $\rootswitch$.


\revision{
\subsection{\alg\ Running Example}

In this section we demonstrate a running example of \alg.
We show how the optimal solution in Fig.~\eqref{fig:toy_example_1:alg} was obtained.
Figure~\eqref{fig:run_gather_example} presents the data structures that \alg\ manages in each node during the \alggather\ phase.
Every node $v$ maintain a table with three dimensions, denoting possible parameters of the potential function:
\begin{inparaenum}[(i)]
    \item The number $i$ of possible blue nodes in its subtree, ranging from $0$ to $k$ (columns),
    \item the possible distance $\ell$ from its \emph{closest blue ancestor}, if one exists, or $d$, otherwise (rows), and
    \item the node \emph{color} ($R$ or $B$).
\end{inparaenum}
Fig.~\eqref{fig:run_color_example} shows the how during the \algcolor\ phase the algorithm tracks an optimal path along the tables computed during the \alggather\ phase.

\paragraph{The \alggather\ phase:}
For every node, and every combination of these parameters, the table maintains the minimal total utilization that would be incurred by the subtree $T_v$.
This includes the effect of the color of $v$ on the utilization of links above $v$ ($\ell$ levels up).
These tables are calculated during the \alggather\ phase of the algorithm which proceeds from the leaves towards the root, where each node calculates the values based on the tables available at its children.
Note that a node reports only one value, for every combination of $i$ and $\ell$, taken as the {\em minimum} of being red or blue (the minimum appears with its appropriate color in Fig.~\eqref{fig:run_gather_example}). 

For example, consider the root $r$ and how $r$ calculates the entry for $\ell=1$ (i.e., it is at distance 1 from its nearest blue ancestor, or $d$ -- where in this case it is its distance from $d$) and it has $i=2$ blue nodes to distribute within its subtree.
Assume its children calculated their tables correctly.

First $r$ considers the case where it is colored red ($R$) (top dotted table in Fig.~\eqref{fig:run_gather_example}).
In this case, it has 2 blue nodes to distribute in the subtrees rooted at its children.
Therefore it takes the minimum of three possible cases,
$(L_C=2, R_C=0), (L_C=1, R_C=1)$ or $(L_C=0, R_C=2)$, where $L_C$ and $R_C$ denote the number of blue nodes to be used in the left and right child of $r$, respectively.
By checking the tables of its left and right children, $r$ can find that the minimum is obtained in the case where $(L_C=1, R_C=1)$. where its left child will be colored red and will contribute 9 to the overall utilization, and the right child will be colored blue and will contribute 11 to the overall utilization.
When checking its children tables $r$ considers the entries corresponding to $\ell=2$ since $r$ is assumed to be red, and it is its own value of $\ell$ is 1.
So overall $T_r$ will contribute 20 to the utilization under these settings.

The other alternative is the case where $r$ is colored blue (bottom dotted table in Fig.~\eqref{fig:run_gather_example}).
This leads to only two possible partitions of remaining blue nodes across the subtrees rooted at its children, since $r$ has already ``used-up'' one of the $i=2$ blue nodes available in its subtree.
The minimum configuration is when the left child of $r$ gets to distribute the remaining blue node, $(L_C=1, R_C=0)$, in which case the subtree rooted at this child contributes 6 to the overall utilization (note that we consider the child's table for $\ell=1$, since the child is at distance 1 from its closest blue ancestor, being the root in this case).
The right child gets to distribute no blue nodes, and thus contributes 18 to the overall utilization. The utilization contributed by $T_r$ thus totals 25, where $6+18=24$ are due to the subtrees rooted at the children of $r$, and 1 more contributed by $r$ since its distance to $d$ is one and $r$ is assumed to be blue.
Taking the minimum of 20 and 25, $r$ will report to its parent that for this setting, $(\ell=1, k=2)$, $T_r$ will contribute 20 to the utilization, and its color will be red.

\paragraph{The \algcolor\ phase:}
In the next phase, the coloring is done using \algcolor\ by tracing an optimal path over the tables generated during the \alggather\ phase, from the root to leaves.
For our example, the destination, $\destination$, needs to place $k=2$ blue nodes in the {\em network}.
The utilization in this case is 20.
$\destination$ passes the values $\ell=1$ and $i=2$ (from which the minimal utilization was derived) to its child, $r$.
At this point $r$ looks up the color corresponding to these values in its table, and determines its own color, red in this case. Furthermore, $r$ knows the number of blue nodes available for distribution in its subtree, and its distance from its closest blue ancestor (or $\destination$, in this case).
$r$ can then determine the amount of blue nodes it needs to pass on to each of its children, for distribution in their subtrees.
With this information it can recursively determine the color of its children.
Fig.~\eqref{fig:run_color_example} shows the color and configuration that was selected at each node (bold square) in the optimal solution provided by \alg.
} 


\revision{
\subsection{\alg: Practical Aspects and Limitations}
}
\revision{
While \alg\ minimizes the network utilization, namely the overall transmission time over all links, using bounded in-network computing, there are various {\em practical} aspects that are related to the implementation of our approach, and the benefits it provides.
For example, how and where should switches store the aggregated values? What are the effects of packet-loss and latency (affecting the delivery of messages)?
For line-rate aggregation, what synchronization mechanisms are required?
How can one use our solution in a system handling multiple tenants and workloads, and what would be the overhead of using our approach?

For the most part, these questions are applicable to most in-network computing environments performing in-network aggregation. However, some aspects are specifically more pronounced in our model, namely, the distinction between aggregating nodes (which wait for all incoming information before forwarding a message), and non-aggregating nodes, which simply follow a store-and-forward regime.
} 
\newrevision{
We plan to study these specific aspects in future work, as we note that these may well affect the performance whenever significant variable delay is manifested (e.g., by inducing an overall lower rate of information flow), and for some aggregating functions the memory tolls may be non-negligible.
} 

\newrevision{
Another significant aspect related to our approach is the fact that our model assumes that any message being transmitted throughout the system is of size at most $\msgsize$.
For some aggregation functions (e.g., bitwise-functions, or max/min), assuming such a bound is quite reasonable as it can be determined by the maximum size of a message generated by the servers. The optimality of \alg\ relies on this assumption.
However, for other functions, performing aggregation, and furthermore doing so repeatedly, might result in a message size increase that may be proportional to, or at least monotone with, the size of the workload (e.g., sum or product functions).
In such cases, {\em \alg\ is not guaranteed to ensure optimal performance}.
However, we do evaluate such effects in Sec.~\ref{sec:evaluation:compare_applications}, where we study the performance of \alg\ in terms of the overall number of {\em bytes} being transmitted (which essentially take into account the effect of increasing message size while doing in-network aggregation).
Our results show that in some cases the decisions made by \alg\ allow it to come close to a {\em lower bound} on the optimal performance possible.

Bearing the above limitations in mind, one should note that practical solutions that address various of the above issues, are already being deployed in real systems and datacenters (e.g., Nvidia's SHARP~\cite{graham20sharp} protocol).
However, we are not aware of any such solutions which  handle bounded in-network computing capabilities as in our model, nor of any solutions that are optimized for multiple workload.
Bridging the gaps between our model and solutions, and real-life deployments, remains a significant challenge.
} 

\revision{In the following section we provide the results of our evaluation study of \alg. We defer the formal analysis and proof of optimality of \alg\ to Sec. ~\ref{sec:analysis}.}

\section{Evaluation}
\label{sec:evaluation}

\begin{figure*}[t!]
    \centering
    \begin{tabular}{cccc}
        \rotatebox[origin=lc]{90}{Power-law load dist.} \hspace{0.1cm} &
            \includegraphics[width=0.29\textwidth]{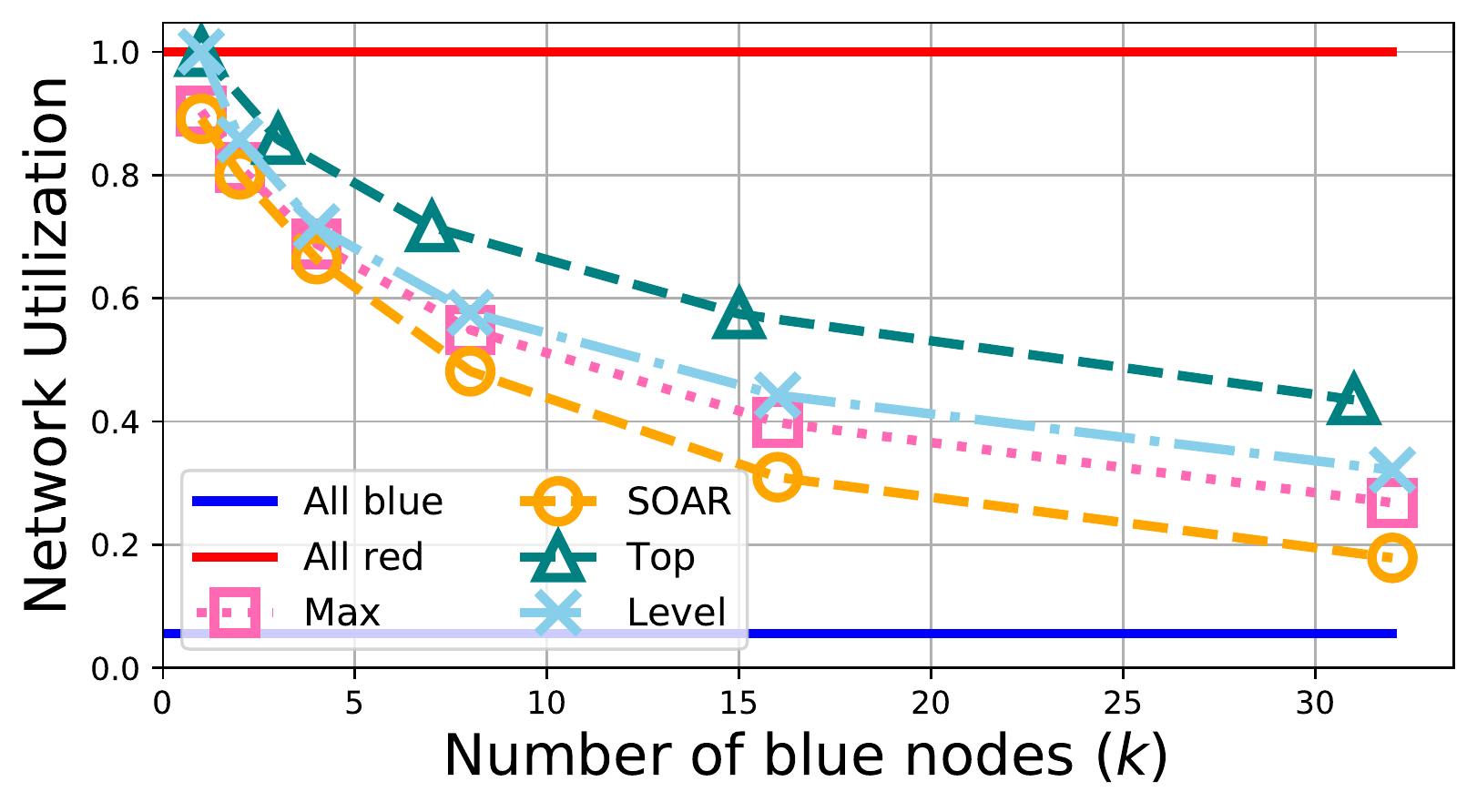} &
            \includegraphics[width=0.29\textwidth]{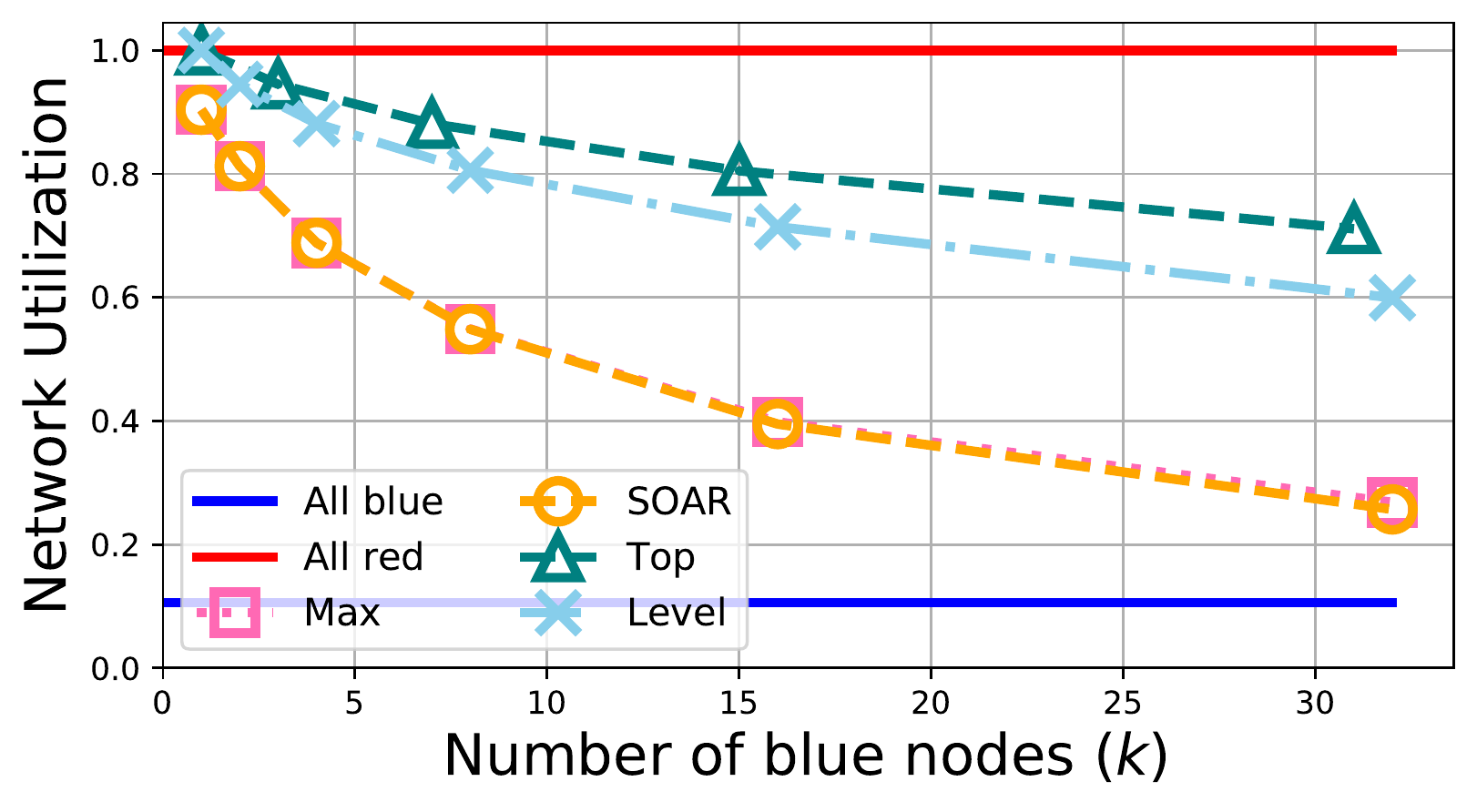} &
            \includegraphics[width=0.29\textwidth]{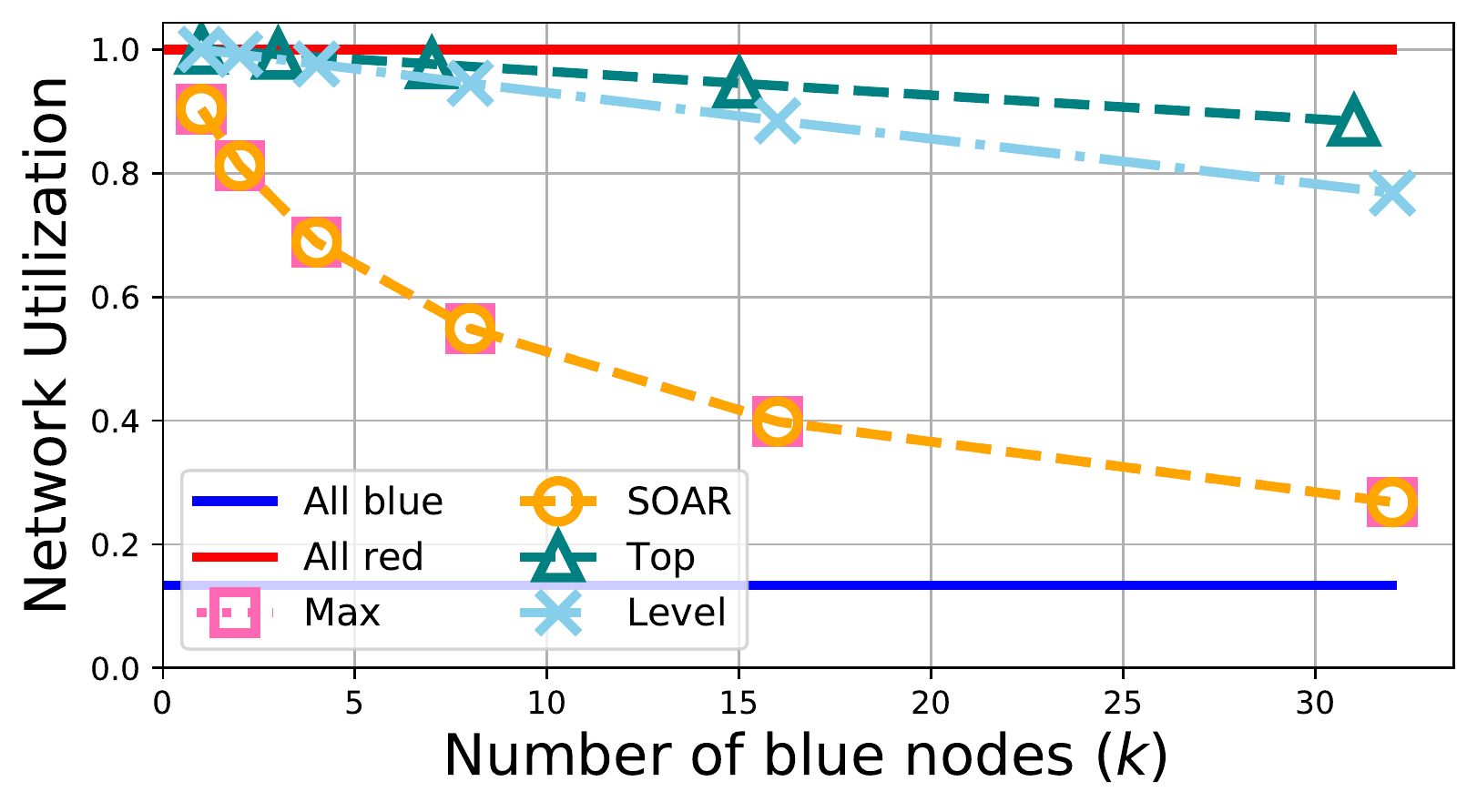} \\
        \rotatebox[origin=lc]{90}{Uniform load dist.} \hspace{0.1cm} &
        \subcaptionbox{constant ($\weight=1$) \label{fig:WC-MsgByte_constant}}{
            \includegraphics[width=0.29\textwidth]{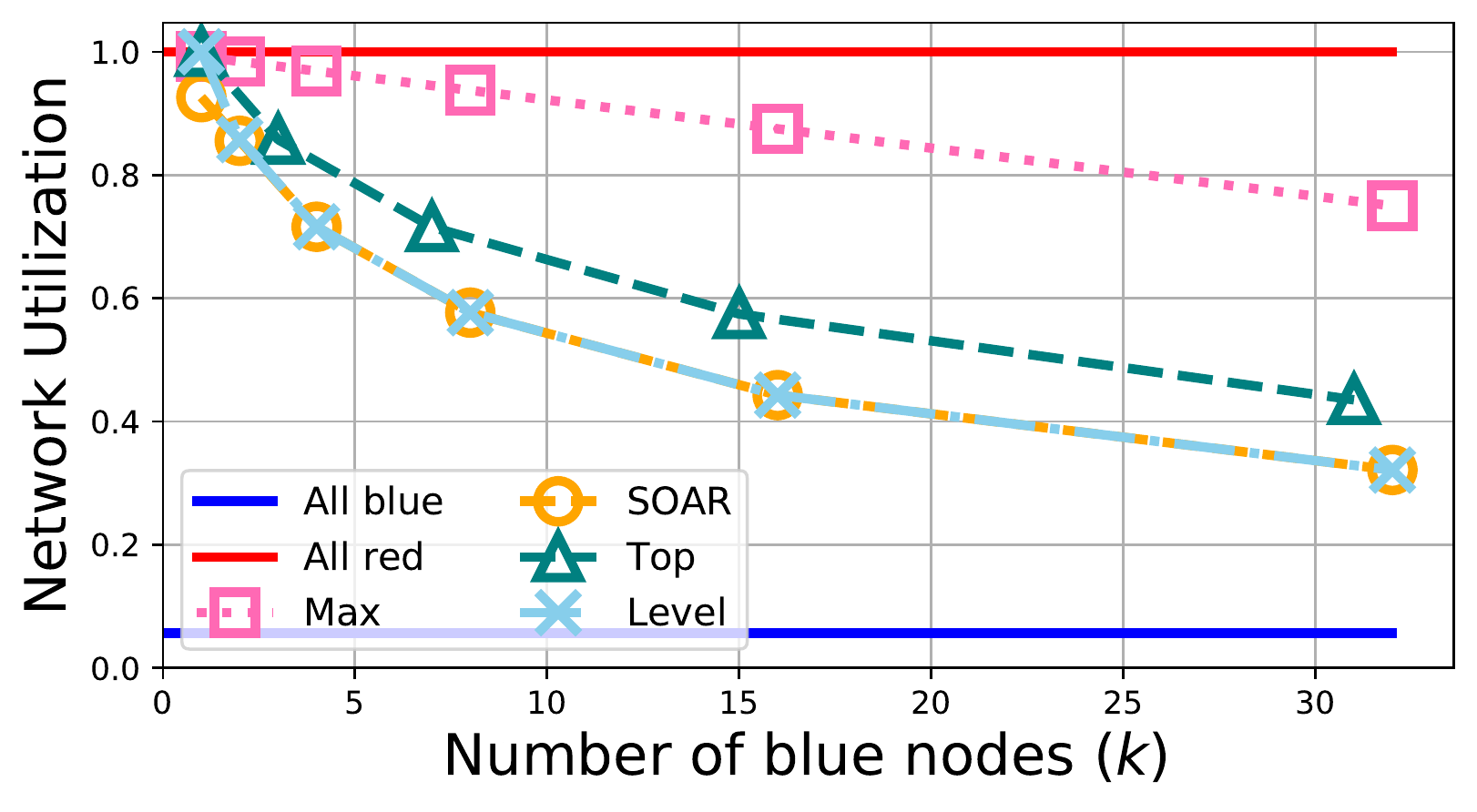}
        }&
        \subcaptionbox{linear increasing ($\weight=i$) \label{fig:WC-MsgByte_linear}}{
            \includegraphics[width=0.29\textwidth]{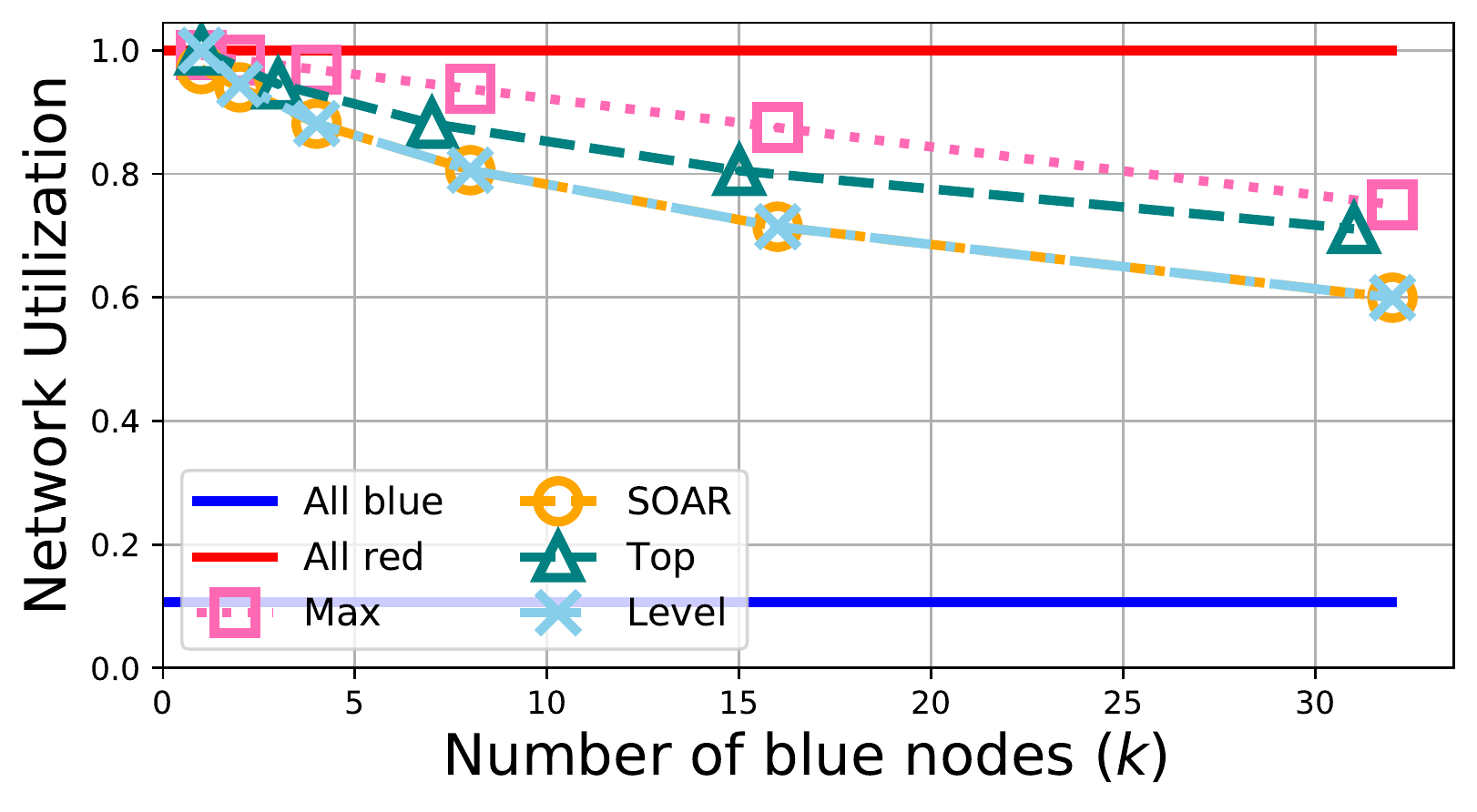}
        }&
        \subcaptionbox{exponentially increasing ($\weight=2^i$) \label{fig:WC-MsgByte_exponential}}{
            \includegraphics[width=0.29\textwidth]{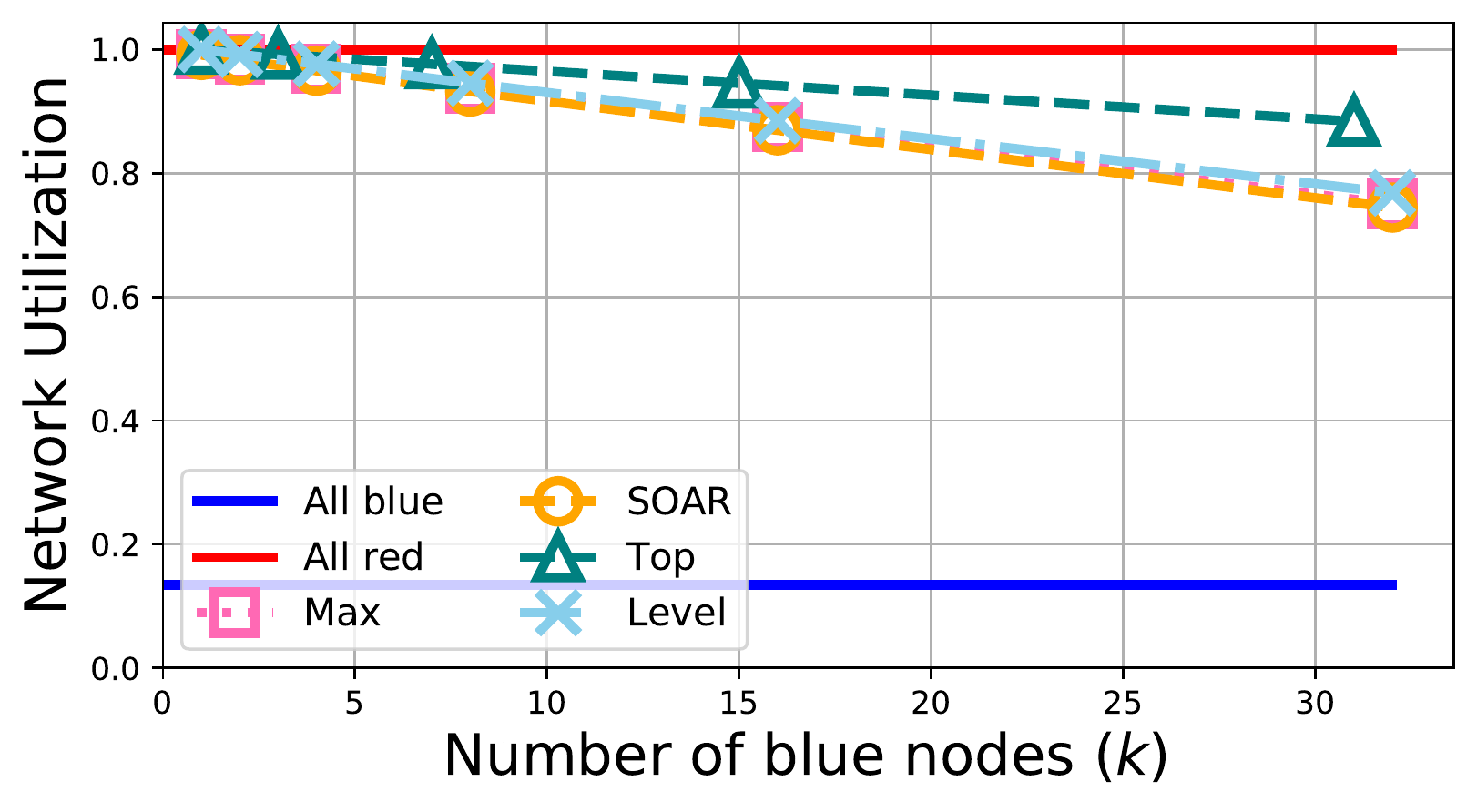}
        }
    \end{tabular}
\caption{\alg\ vs. other strategies for distinct schemes of rates (Fig.~\ref{fig:WC-MsgByte_constant}-\ref{fig:WC-MsgByte_exponential}), and distinct load distributions (power-law in the top plot, uniform in the bottom plot).}
\label{fig:WC-MsgByte}
\end{figure*}

In this section we describe the results of our evaluation of \alg, where we performed extensive simulations which provide further insight as to its performance.
In our evaluation, we examine both the utilization complexity induced by \alg\ (and at times additional contending strategies), and also the {\em byte complexity} which is the actual network load, in bytes, imposed by performing the Reduce operation.

Most of our evaluation makes use of the following system characteristics (unless explicitly stated otherwise).
We consider complete binary weighted trees as the underlying network, denoted by $\btnet{n}$, where $n$ is the number of nodes in the network, \revision{including the destination server.}
We allow non-zero load to be placed only in the leaves of the tree.
These leaves serve as top-of-rack switches connected to servers which generate load, whereas the remaining network serves to model the higher levels of a datacenter network which facilitates the flow of information from the various worker servers, to the destination server which acts as the aggregator.
We consider two distributions for the load at the leaves of the network:
\begin{inparaenum}[(i)]
\item uniform, where the integer load of each node is picked u.a.r. in some range $[x,y]$, and
\item power-law, where the integer load of each node is picked from a power-law distribution. 
\end{inparaenum}
The distributions characteristics are as follows; The mean of both distributions is $5$, the  variance is $0.65625$ and $97.1$ for the uniform and the power-law, respectively. The (min, max) values are $(4,6)$ and $(1,63)$ for the uniform and the power-law distributions, respectively. 
We consider three different rate schemes:
\begin{inparaenum}[(i)]
\item constant rates, were all link rates are equal to $1$,
\item linear rates, were $\weight(e)$ increases linearly, by adding $1$, from leaf edges (rate $1$) towards the root, and
\item exponential rates, were $\weight(e)$ increases by doubling (i.e., a power of 2), from leaf edges (rate $1$) towards the root.
\end{inparaenum}

Each experiment was repeated ten times and we present the average performance for each such set of experiments. For clarity we present error bars only where we encountered significant variance in the results.

In most of our results, we present the {\em normalized} performance of an algorithm, where normalization is usually with respect to the {\em all-red} scenario.
This essentially shows the cost reduction of the specific scenario, compared to the {\em all-red} solution.
I.e., 
if the performance of an algorithm is $\alpha \in [0,1]$ in some scenario, this means that the algorithm incurs an $\alpha$ fraction of the cost 
of the all-red solution when preforming \reduce in that scenario.

Additionally, we consider two use cases for evaluating the system:
\begin{inparaenum}[(i)]
\item {\em big-data}, using a word-count task~\cite{apache}, where we make use of a wikipedia dump~\cite{wiki}, with an overall of 54M words, out of which 800K are unique. We refer to this use case as the {\em word count (\wcapp)} use case.
\item {\em distributed ML}, using distributed gradient aggregation with a parameter server~\cite{li14scaling}, where worker servers independently perform neural-network training, over a 10K feature space, using 0.5 dropout rate~\cite{srivastava2014dropout}\footnote{We used dropout in order to obtain more diverse network utilization results in terms of bytes, as using all features would render the utilization complexity, and the number of bytes sent, the same.}, and send their updated gradients to a parameter server aggregating the information.\footnote{We note that our work considers solely the network load produced by such tasks, and not the quality of the model produced, which may depend on a variety of problem characteristics. We therefore do not implement the actual neural network, but rather consider the messages sent by the worker servers, and the aggregation of these messages.} We refer to this use case as the {\em parameter server (\psapp)} use case.
\end{inparaenum}

\subsection{Comparing \alg\ with Other Strategies}
\label{sec:evaluation:compare_algorithms}

In this section we consider the performance of \alg\ compared to the performance of several contending strategies for solving the \bica\ problem.
Specifically, we focus our attention on the simple strategies described in our motivating example in Sec.~\ref{sec:example}, namely,
\begin{inparaenum}[(i)]
\item {\em \topalg},
\item {\em \maxalg}, and
\item {\em \levelalg}.
\end{inparaenum}

Fig. \ref{fig:WC-MsgByte} presents the performance of \alg\ alongside the performance of the contending strategies in distinct rate regimes (subfigures~\ref{fig:WC-MsgByte_constant}-\ref{fig:WC-MsgByte_exponential}), for different workload distribution (top and bottom), using $\btnet{256}$.
We consider distinct values of $\numblue=1,2,4,8,16,32$, and performance is normalized to the all-red strategy.
We further plot the performance of the {\em all-blue} solution for reference.
As would be expected, all strategies exhibit improved performance for increasing values of $\numblue$, which allows for more in-network aggregation, translating to reduced utilization complexity.

\begin{figure*}[t!]
    \centering
    \begin{tabular}{cccc}
        \rotatebox[origin=lc]{90}{Number of workloads} \hspace{0.1cm} &
        \includegraphics[width=0.29\textwidth]{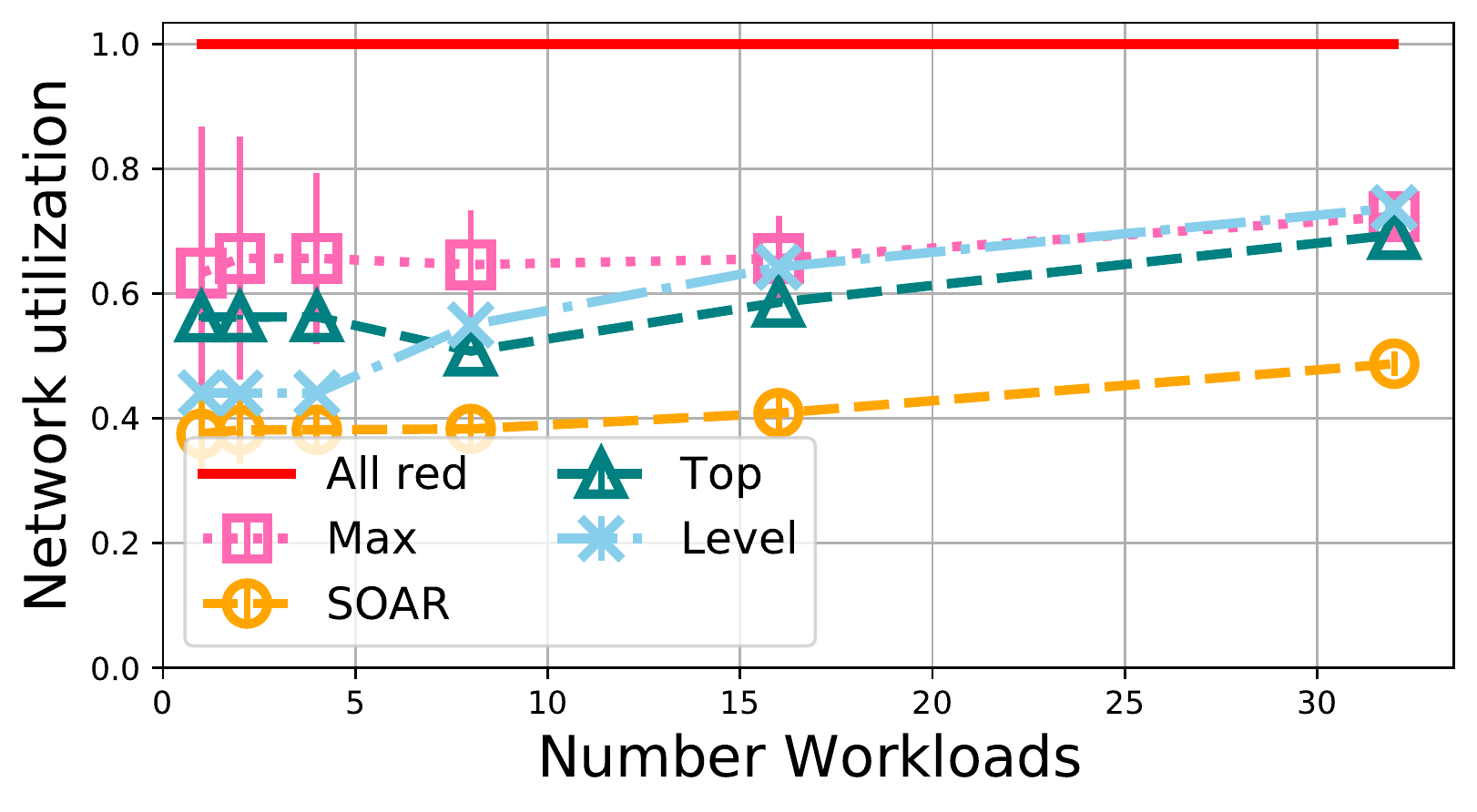} &
        \includegraphics[width=0.29\textwidth]{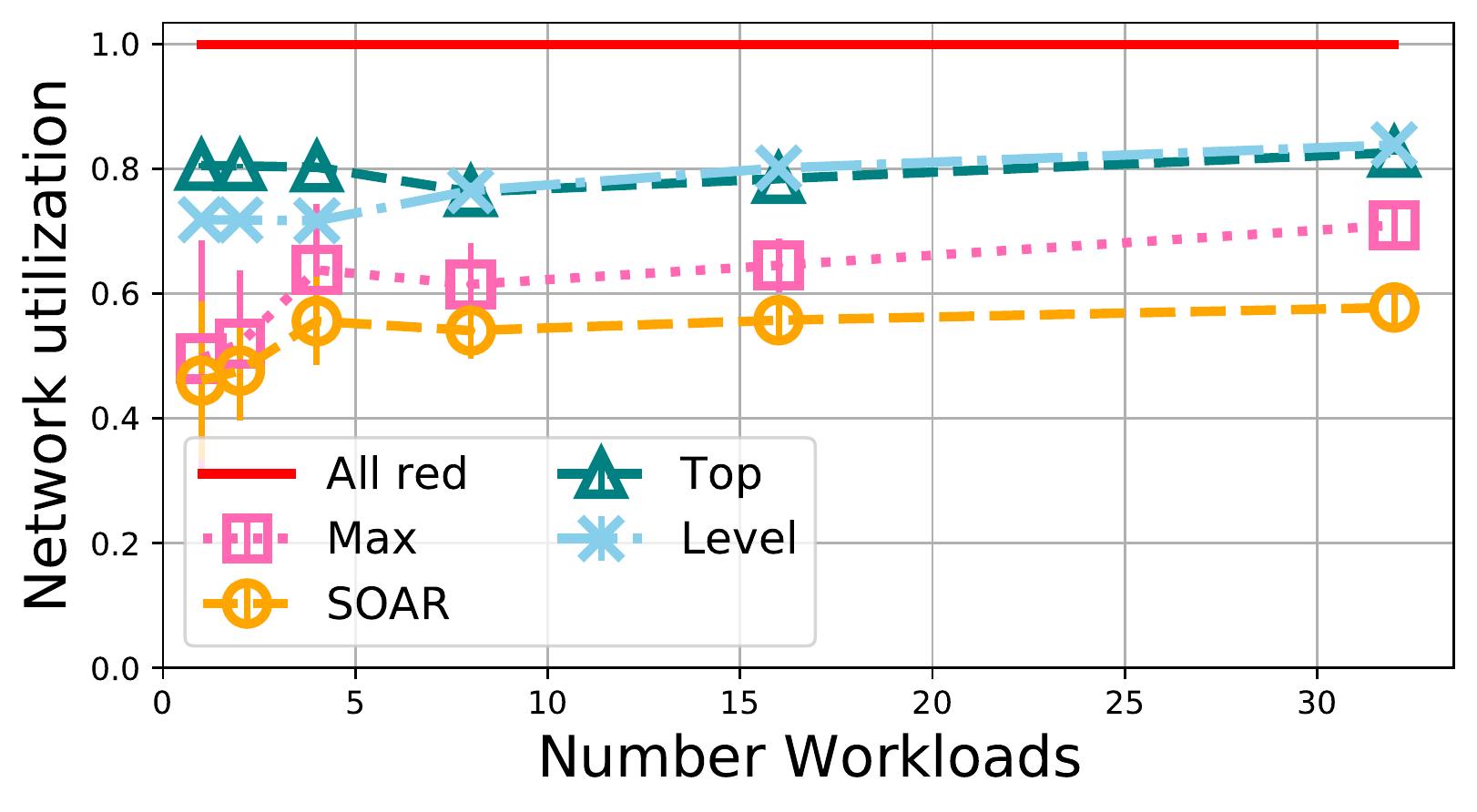}&
        \includegraphics[width=0.29\textwidth]{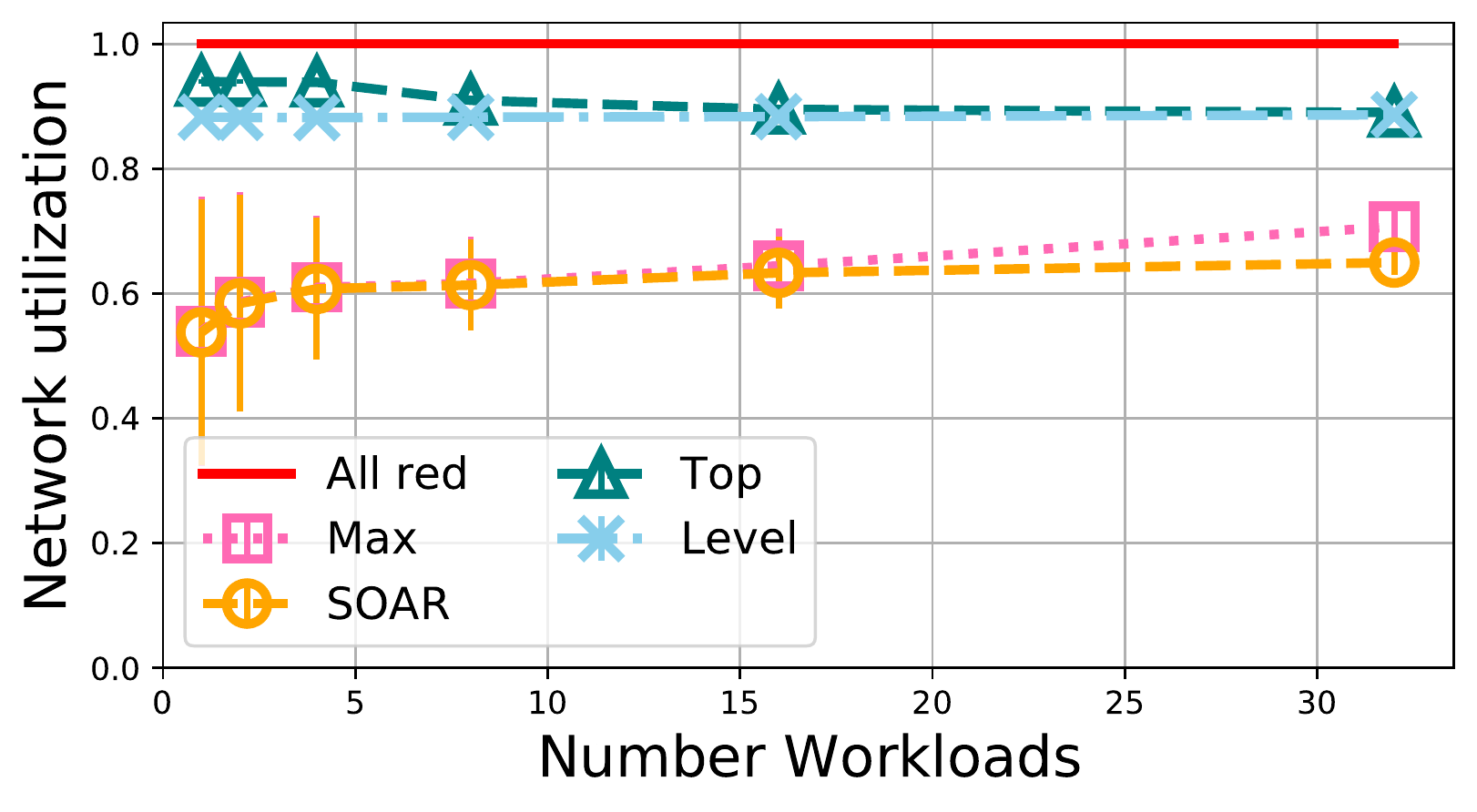}\\
        \rotatebox[origin=lc]{90}{Aggregation capacity} \hspace{0.1cm} &
        \subcaptionbox{constant ($\weight=1$) \label{fig:multiple_workloads_constant}}{
            \includegraphics[width=0.29\textwidth]{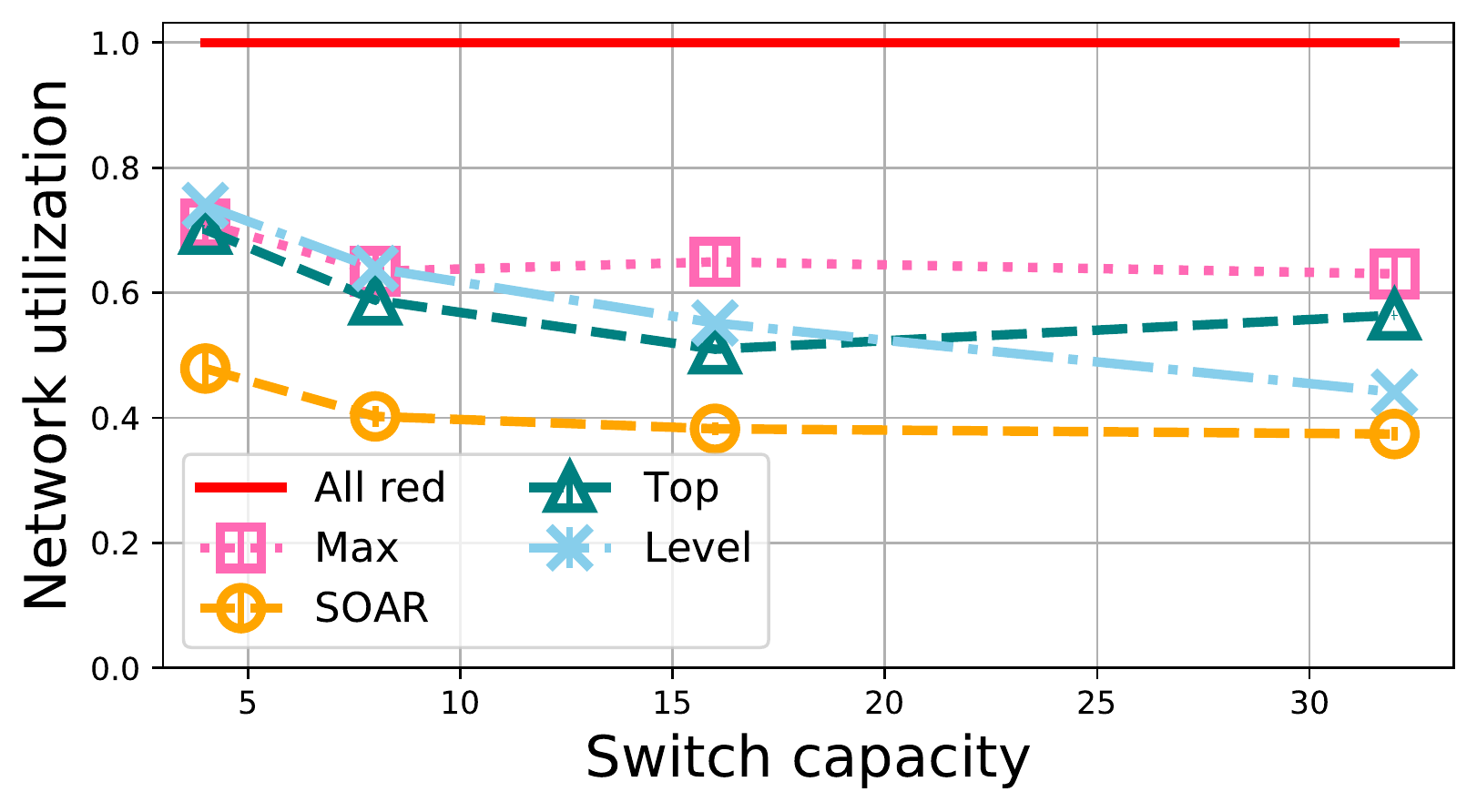}
            }&
        \subcaptionbox{linearly increasing ($\weight=i$) \label{fig:multiple_workloads_linear}}{
            \includegraphics[width=0.29\textwidth]{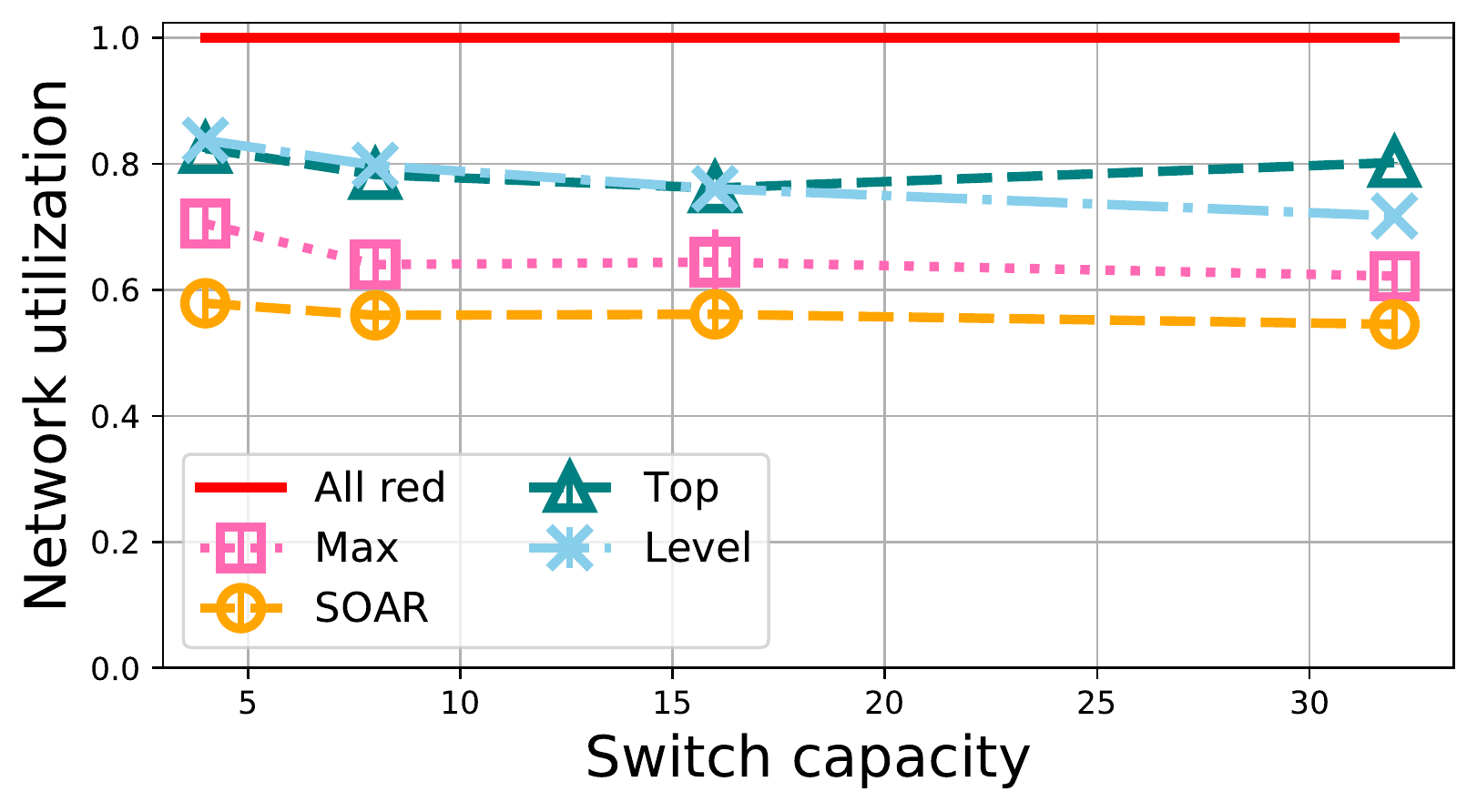}
            }&
        \subcaptionbox{exponentially increasing ($\weight=2^i$) \label{fig:multiple_workloads_exponential}}{
            \includegraphics[width=0.29\textwidth]{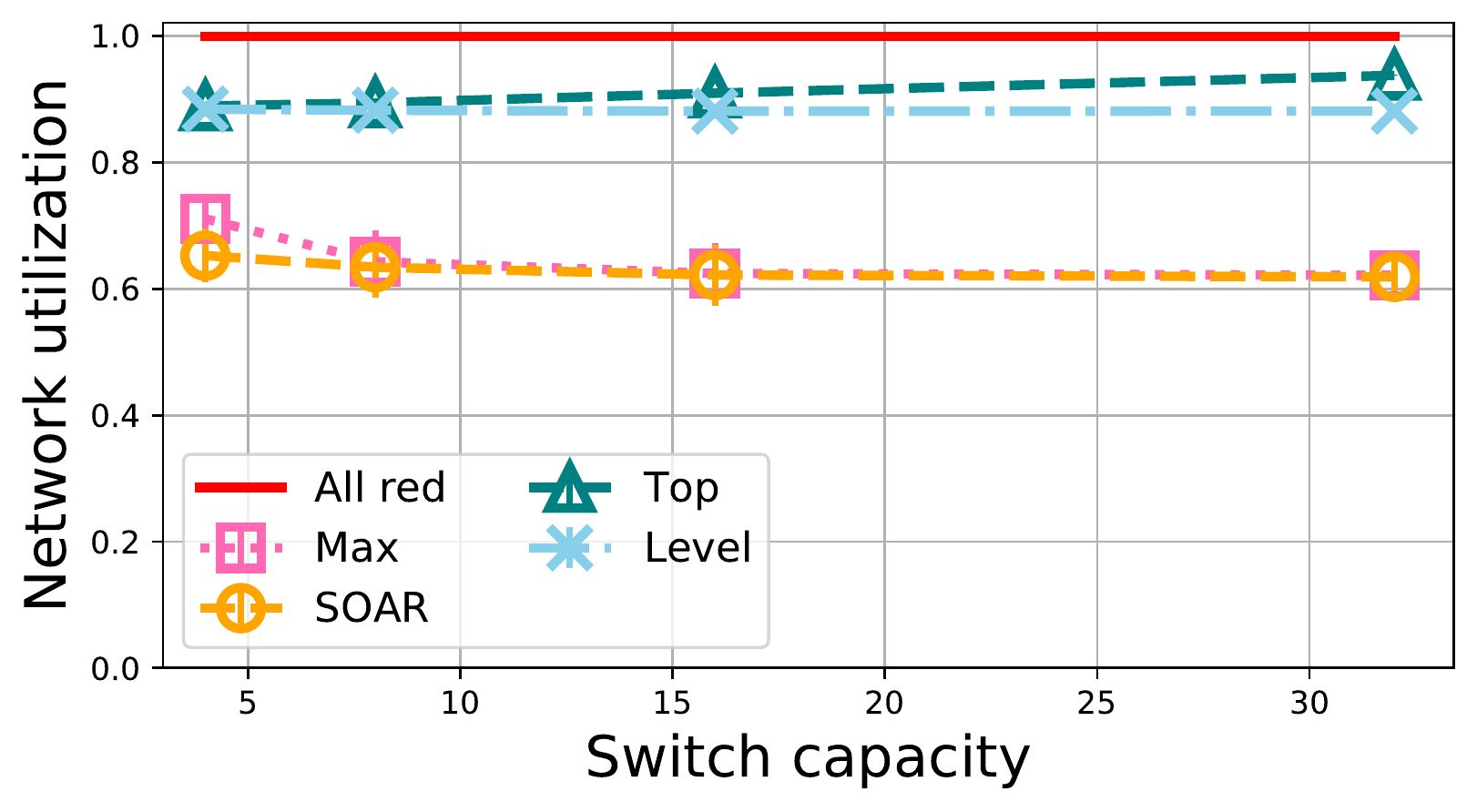}
            }\\
    \end{tabular}
\caption{\alg\ vs. other strategies when aggregating multiple workloads online. Subfigures represent distinct schemes of rates (Fig.~\ref{fig:multiple_workloads_constant}-\ref{fig:multiple_workloads_exponential}), and the effect of increasing the number of workloads (for switch capacity 4, top plots), and increasing the aggregation capacity (for 32 workloads, bottom plots).}
\label{fig:multiple_workloads}
\end{figure*}

Since \alg\ is optimal, it exhibits the best performance in all scenarios. This serves to show that using \alg\ ensures robustness regardless of load distribution or link rates.
However, the second-best strategy strongly depends on the load distribution, and the link rates.
The power-law load distribution favors the \maxalg\ strategy, since high-load leaf-switches that perform aggregation induce a significant reduction in overall utilization complexity.
For the uniform distribution, however, the \levelalg\ strategy fares best, since it implies load balancing the uniform loads at the leaf-switches throughout the network. For such scenarios, the \levelalg\ strategy essentially mimics the ``barrier'' approach underlying the design of \alg, as described in Sec.~\ref{sec:alg:barrier}.
The \topalg\ strategy is the most sensitive to the link rates, where having higher rates towards the root of the network implies that performing in-network aggregation further up
provides far lesser benefits than doing so closer to the leaves (or closer to the middle of the network).

\noindent \newrevision{\textbf{Takeaways}: \alg\ can significantly outperform other strategies across different workloads and link rates functions. A small fraction of nodes with in-network processing capabilities is enough to reduce network utilization substantially.}

\subsection{Multiple Workloads}

In this subsection we consider the problem of handling multiple workloads, and determining where aggregation should take place for each such workload.
Each workload is determined by its $\load_t$, $t=0,1,2,\ldots$.
We consider the workloads as arriving in an {\em online} fashion, such that determining the aggregating switches for workload $\load_t$ should be settled before handling workload $\load_{t+1}$.

We further assume each switch $\switch$ has a predetermined {\em aggregation capacity} $\aggcap(\switch)$ which bounds the number of workloads for which $\switch$ can be assigned as an aggregating switch.
We let $\aggcap_t(\switch)$ denote the {\em residual aggregation capacity} available at $\switch$ before handling workload $\load_t$.
If switch $\switch$ is designated as an aggregation switch when handling workload $\load_t$, then $\aggcap_{t+1}(\switch)=\aggcap_t(\switch)-1$, and $\aggcap_{t+1}(\switch)=\aggcap_t(\switch)$ otherwise.

We consider the performance of the various strategies used in Sec.~\ref{sec:evaluation:compare_algorithms},
when applied repeatedly to the sequence of workloads $\load_0, \load_1, \ldots$, given as input.
The set of switches available for aggregation when handling workload $\load_t$ is defined by $\Avilabilty_t=\set{\switch \mid \aggcap_t(\switch) > 0}$.

We generate our sequence of workloads in an online fashion, by drawing each workload from either the uniform load distribution, or the power-law load distribution, each with probability 1/2.

In our evaluation we consider the effect of varying the aggregation capacity, and the number of workloads.
As a baseline we consider the topology $\btnet{256}$, with $k=16$, $\aggcap(\switch)=4$ for every switch $\switch$, and 32 workloads.
Fig.~\ref{fig:multiple_workloads} shows the performance of \alg\ compared to the performance of the various strategies described in Sec.~\ref{sec:example}.
Similarly to our results presented in Sec.~\ref{sec:evaluation:compare_algorithms}, our evaluation considers 3 scaling laws for link rates: constant (in Fig.~\ref{fig:multiple_workloads_constant}), linearly increasing (in Fig.~\ref{fig:multiple_workloads_linear}), and exponentially increasing (in Fig.~\ref{fig:multiple_workloads_exponential}).

\begin{figure*}
    \centering
    \begin{tabular}{ccc}
        \subcaptionbox{Utilization \label{fig:apps_bytes_utilization}}{
            \includegraphics[width=0.31\textwidth]{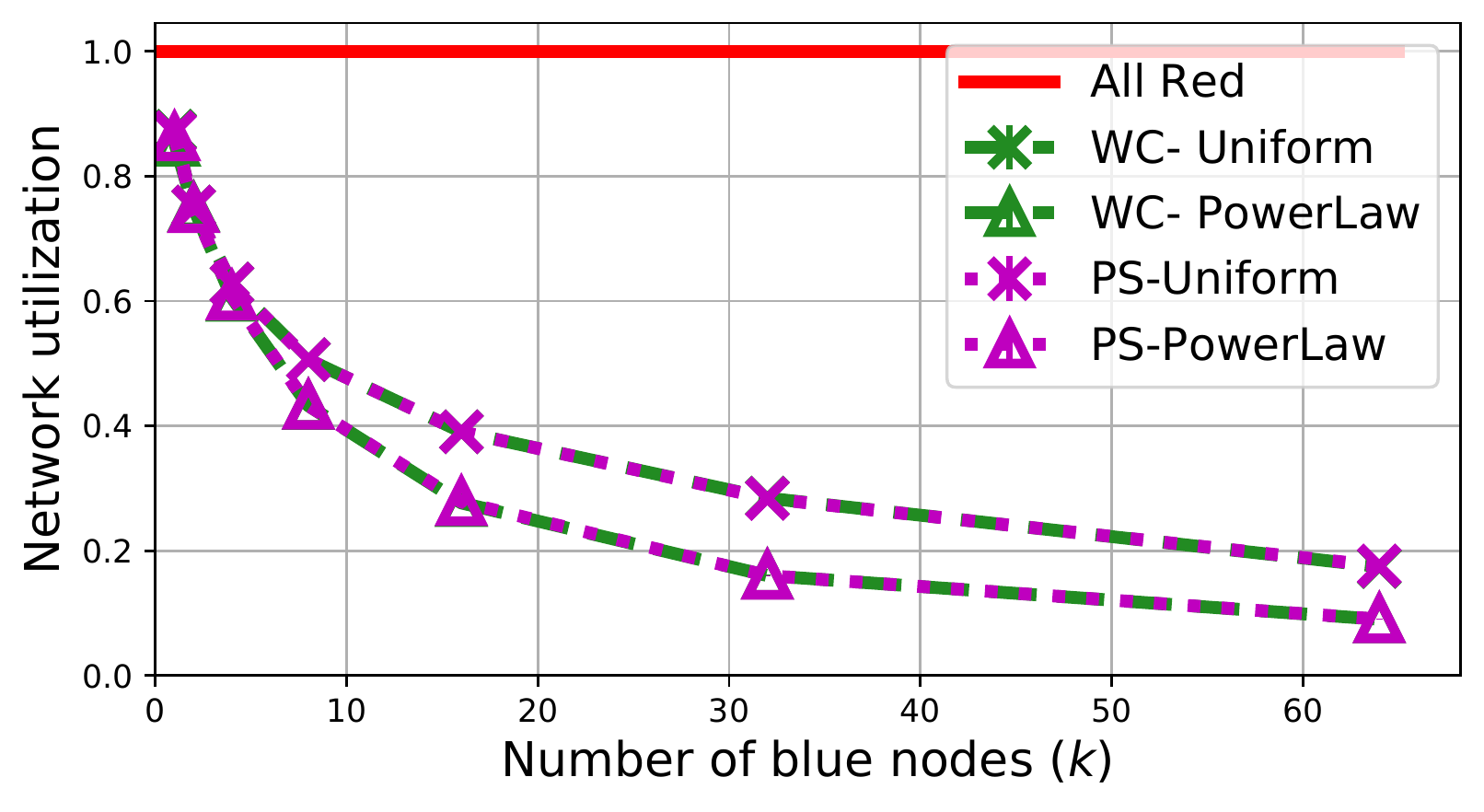}
            } &
        \subcaptionbox{Bytes (normalized to all-red) \label{fig:apps_bytes_red}}{
            \includegraphics[width=0.31\textwidth]{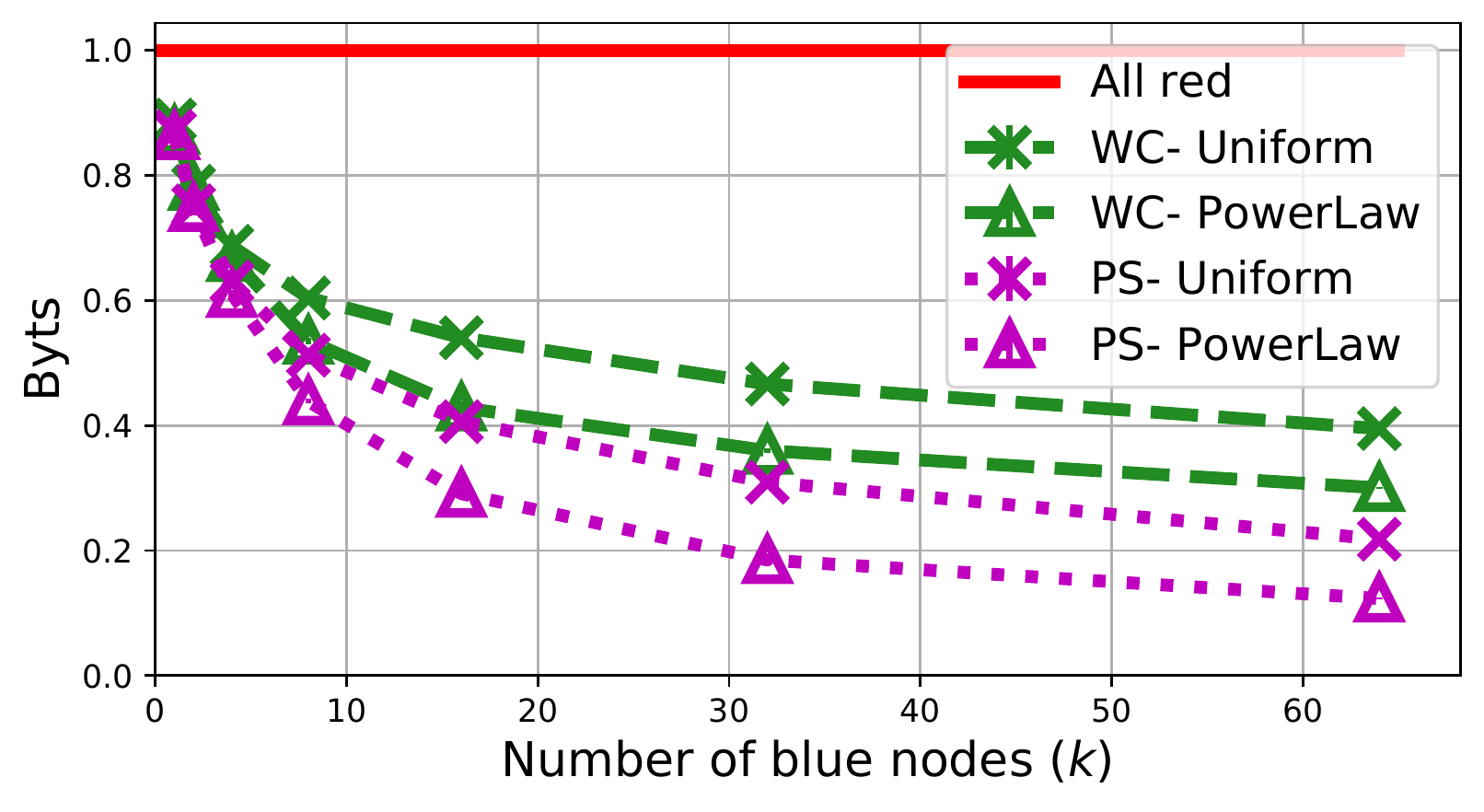}
            } &
        \subcaptionbox{Bytes (normalized to all-blue) \label{fig:apps_bytes_blue}}{
            \includegraphics[width=0.31\textwidth]{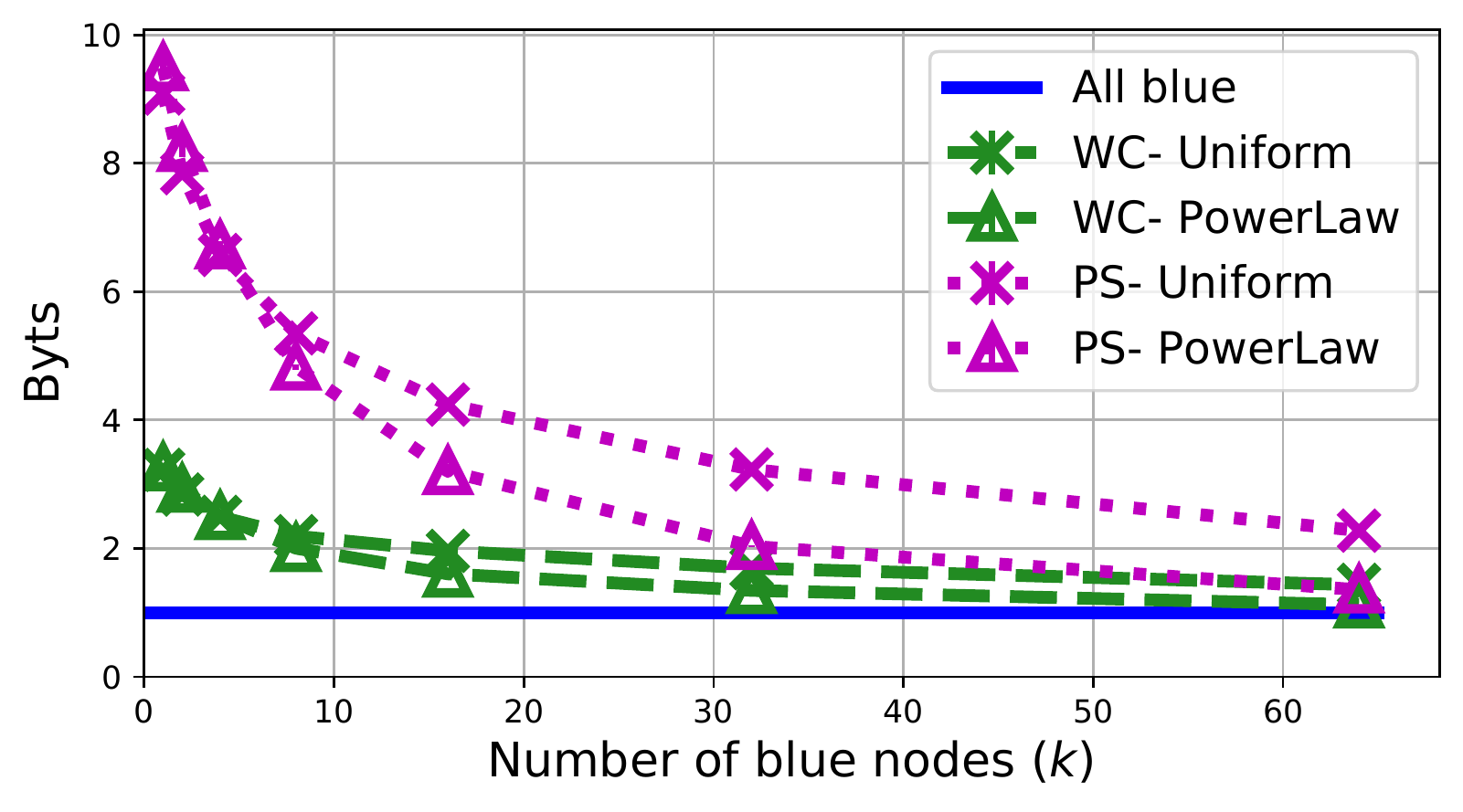}
            }
    \end{tabular}
\caption{\alg\ performance for the \wcapp\ and \psapp\ use cases.}
\label{fig:apps}
\end{figure*}

When considering the effect of handling more workloads (Top plot in each column), the normalized utilization ratio (compared to that of the all-red solution) tends to increase as we handle more workloads, and the improved performance demonstrated by \alg\ compared to the best contending strategy is more pronounced as the weight differences across layers are smaller.
We note that as the number of workloads increases, the performance would converge to that of the all-red configuration, regardless of the strategy being used.
This follows from the fact that the aggregation capacity is bounded, implying that once the number of workloads is large enough, further workloads cannot benefit from any aggregation, and the initial benefits of aggregating the prefix of the workload arrival sequence become marginal compared to the toll imposed by the entire sequence.

It is instructive to note that the second-best strategy varies significantly, where for exponentially increasing rates the performance of the Max strategy is closest to that of \alg\, while for constant rates either the performance of the Level strategy or the Top strategy come closest to that of \alg.

When considering the effect of increasing the aggregation capacity at each switch, one can see that most strategies exhibit improved performance as the aggregation capacity increases, and \alg\ exhibits the best performance across all scenarios, where the differences are again more pronounced as the differences in rates across levels is smaller.
An exception to this performance is exhibited by the Top strategy, which actually fares worse as aggregation capacity increases. This is due to the fact that the larger capacity enables the strategy to handle more workloads closer to the root, which accentuates its sub-optimality.
Finally, we note that when aggregation capacity is unbounded, \alg\ would produce the optimal solution possible (for any given $k$) {\em even in the online setting}, since it is optimal for every workload, and workloads are handled separately and independently. 

\noindent \newrevision{\textbf{Takeaway}, \alg\ exhibits the best performance compared to other strategies in the online settings (although it is not proven to be optimal). Furthermore, for small switch capacity and many workloads, \alg\  obtain more considerable gains.
}
\subsection{\alg\ for Different Applications}
\label{sec:evaluation:compare_applications}

We now turn to consider the performance of \alg\ for distinct use cases, namely, \wcapp, and \psapp.
We focus our attention on the case of constant rates, which better emphasizes the differences in the performance.
We distinguish between our utilization metric (which in the constant rate case is equivalent to the number of messages traversing the network), and the {\em byte complexity}, where we take into account the actual message size, and evaluate the overall number of bytes being transmitted throughout the network over all links.
We note that our problem formulation, and our algorithm, do not target minimizing the byte complexity.

Fig.~\ref{fig:apps} shows the results of our evaluation for the two use cases, in the $\btnet{256}$ topology, where we consider
both the uniform and the power-law load distributions, and the results are normalized to the all-red scenario.
Not surprisingly, the network utilization of both use cases is {\em independent} of the specific reduce task being performed, as can be seen in Fig.~\ref{fig:apps_bytes_utilization}.
This is due to the fact that our model, and in turn, \alg, do not distinguish between the concrete details of the use case, and considers all messages as equal.
As could be expected, the load distribution does bear an effect of the utilization, where the performance of \alg\ improves as the distribution is more skewed (as is the case for the power-law distribution).
This is attributed to the fact that for highly asymmetric load distributions, \alg\ identifies the key points with severe load, and places blue nodes at (or close to) such points.
On the other extreme, as the load is more evenly distributed, the judicious choices made by the algorithm have a lesser effect on overall utilization.

Fig.~\ref{fig:apps_bytes_red} present the normalized cost reduction in terms of the byte complexity, where normalization is again done compared to the all-red scenario.
We see that both the load distribution, and the actual application use case, affect the performance.
The byte complexity in the \psapp\ use case is very similar to the utilization. 
This is due to the fact that we are using a non-negligible dropout rate of 0.5, as is mostly advised in distributed ML.
For this case the sizes of messages traversing distinct links in the network do not vary significantly, and message sizes increase very mildly as we approach the root of the network.
For the \wcapp\ use case, the effect of increasing message sizes is more pronounced (as also discussed in Sec.~\ref{sec:evaluation:compare_algorithms}), leading to diminished improvement in terms of byte complexity, when compared to the utilization. However, the general trends across distributions are still apparent.

Lastly, Fig.~\ref{fig:apps_bytes_blue} shows the effect of having more blue nodes, when compared with the all-blue solution.
These results highlight the effect of the message sizes on byte complexity, where for the \wcapp\ use case the performance of \alg\ comes very close to that of the all-blue solution, already when using but a few blue nodes.
In contrast, for the \psapp\ use case the byte complexity is very closely related to the utilization complexity (as message sizes do not vary significantly).
This is manifested by the fact that significantly more blue nodes are required in order to come close to the performance of the all-blue scenario.
As distributed ML environments become ubiquitous, we believe that our proposed algorithm for data aggregation within the network can have a significant impact on the performance of such systems.
Overall, our results indicate that although message sizes do affect the byte complexity beyond the effects manifested by the network utilization, the ability to determine the optimal location of a bounded number of blue nodes, as done by our algorithm, indeed results in performance that quickly comes close to that obtained by an unbounded solution.

\noindent \newrevision{\textbf{Takeaways:} Minimizing the network utilization reduces significantly also the byte count. The effect of using in-network processing can differ across different applications (e.g., \wcapp\ and \psapp).}

\begin{figure}
    \centering
    \includegraphics[width=0.8\columnwidth]{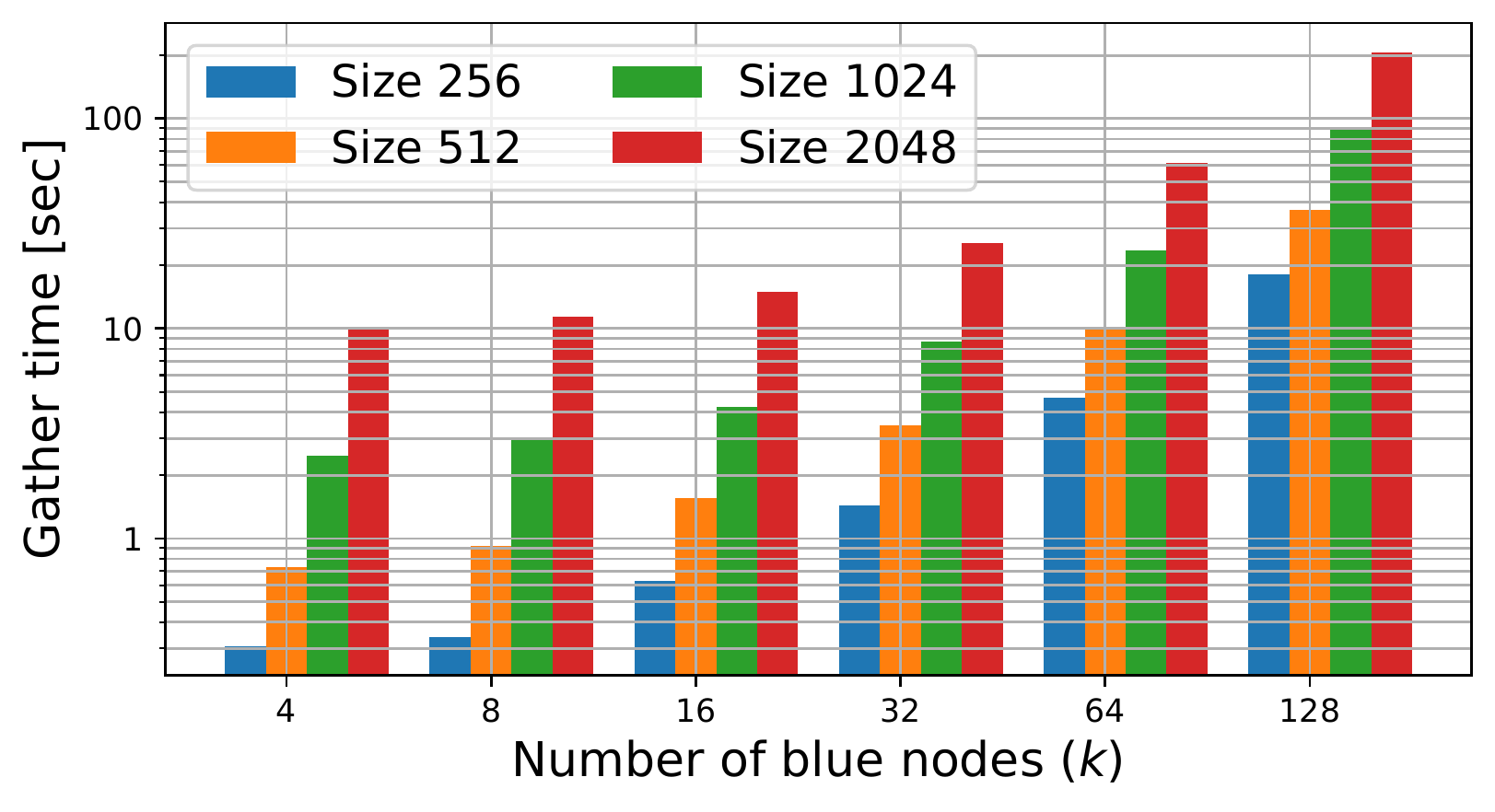}\\
    \caption{\alg\ running time}
    
    \label{fig:run time}
\end{figure}
\revision{
\subsection{\alg\ Run-Time Evaluation}
In this section we evaluate the running-times of \alg, \alggather\ and \algcolor.
We implemented our simulation in python 3.8 and the evaluation was done on a Laptop equipped with an Intel core i7(10875H) CPU and 32GB of RAM.

When measuring the running-time of \alggather\ and \algcolor\ we conclude that the running time of \algcolor\ negligible comped to \alggather, as its operation is much simpler.
The running time  of \algcolor\ is faster by three orders of magnitude, and while \alggather\ runs in seconds, \algcolor\ runs in milliseconds.

In Fig~\ref{fig:run time} we present the average running time of \alggather\ over ten experiments for different network sizes and values $\numblue$.
The figure shows the running time in seconds, for $\numblue= 4,8,16,32,64$ and $128$, and for network sizes ,$256,512,1024$ and $2048$.
Following Theorem \ref{thm:alg_is_optimal} we can observe that the running time is indeed quadratic in $k$ and close to linear in $n$ (where our results prove an upper bound of $n \log  n$).
We note Fig.~\ref{fig:run time} is in $\log$-$\log$ scale. 
}

\newrevision{
We highlight that the above results are applicable to a {\em serial and centralized} implementation of \alggather\ on a single host, which may require up to a few minutes for a relatively large network (2048 switches) and many aggregating nodes (128). For moderately sized networks, or when deplyoing far less switches, the running time of \alggather\ is on the order of tens of seconds, or less.
We further note that \alggather\ can also be implemented in a {\em parallel or distributed} manner (along a parallel DFS-scan from leaves to the root), which would result in a significant speedup, while requiring more computing power to be used in parallel.
We leave this topic for future work.} 

\section{Analysis of \alg}
\label{sec:analysis}

\subsection{Notation and Definitions}

We begin by introducing some notation that would be used throughout our proofs.
For every node $v$, we let $c_1,\ldots,c_{C(v)}$ denote the children of $v$ (in some arbitrary fixed order).
For every $m=1,\ldots,\childnum(v)$ we let $\network_v^m$ denote the subtree rooted at $v$ containing only the subtrees rooted at children $c_1,\ldots,c_m$, and let $\tilde{\network}_v^m$ denote the {\em extended subtree} of $\network_v^m$, which is extended by adding the link $(v,\parent(v))$.
We further let $\network_v=\network_v^{\childnum(v)}$ denote the subtree rooted at $v$ (containing all subtrees of all children of $v$), and let $\tilde{\network}_v$ be the extended subtree of $\network_v$.
For a node $v$ and $\ell \le \distance(v)$, let $A_v^\ell$ be the ancestor at distance $\ell$ from $v$.

For every node $v$ and every $m=1,\ldots,\childnum(v)$, given any $\ell=0,\ldots,\distance(v)$ and any set of blue nodes $\blueset \subseteq \network_v^m$, we consider the {\em $(v,m)$-potential} of $\ell$ and $\blueset$, $\aggc_v^m(\ell,\blueset)$, defined by
\begin{align}
\label{eq:potential:non_leaf}
\aggc_v^m(\ell,\blueset)
&= \left( \sum_{\link \in \network_v^m} \msg_{\link}(\network_v^m,\load,\blueset) \cdot \rate(\link) \right) \notag \\*
& \quad\quad + \msg_{(v,\parent(v))}(\tilde{\network}_v^m,\load,\blueset) \cdot \rate(v,A_v^\ell).
\end{align}
and we further use $\aggc_v(\ell,\blueset)$ to denote $\aggc_v^{C(v)}(\ell,\blueset)$.
We note that the $(v,m)$-potential is only defined for $\blueset \subseteq \network_v^m$. For ease of notation we will omit this explicit requirement in the remainder of this section.
Furthermore, if $v$ is a leaf we take $\childnum(v)=0$ which results in having
\begin{align}
\label{eq:potential:leaf}
\aggc_v^0(\ell,\blueset)= \msg_{(v,\parent(v))}(\tilde{\network}_v^0,\load,\blueset)\cdot \rate(v,A_v^\ell),
\end{align}

Lastly, we note that by the definition of $\aggc_v^m$ in Eq.~\eqref{eq:potential:non_leaf}, and the definition of $\msgcost$ in Eq.~\eqref{eq:msgcost_def} we have
\begin{equation}
\label{eq:phi_is_pi_at_destination}
\msgcost(\network,\load,\blueset)=\aggc_d(0,\blueset)=\aggc_r(1,\blueset).
\end{equation}
A key property of our potential function, that follows from the above definitions, is that, {\em conditioning} on the color of $v$, we can decompose the calculation of $\aggc_v^m(\ell,\blueset)$ into calculating $\aggc$ for smaller problems, as depicted by the following lemma,
\revision{Proof appears in Appendix~\ref{sec:appendix:proofs}.}
\begin{lemma}
\label{lem:pidecompose}
If $v\in \network$ is a non leaf node and $\blueset$ is the set of blue nodes, then
\begin{align}
\aggc_v(\ell,\blueset) &= \sum_{m=1}^{C(v)} \aggc_{c_m}(1,\blueset) + 1 \cdot \rate(v,A_v^\ell)
& \mbox{if $v \in \blueset$}
\label{eq:ellBlue} \\
\aggc_v(\ell,\blueset) &= \sum_{m=1}^{C(v)} \aggc_{c_m}(\ell+1,\blueset) + \rate(v,A_v^\ell) \cdot \load(v)
& \mbox{if $v \notin \blueset$}.
\label{eq:ellRed}
\end{align}
\end{lemma}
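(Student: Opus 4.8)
The plan is to unfold the definition of $\aggc_v(\ell,\blueset)=\aggc_v^{C(v)}(\ell,\blueset)$ from Eq.~\eqref{eq:potential:non_leaf} and track how the message counts on the edges of $\network_v$ and on the outgoing edge $(v,\parent(v))$ behave in each of the two cases, $v\in\blueset$ and $v\notin\blueset$. Write $\network_v$ as the disjoint union of the subtrees $\network_{c_1},\ldots,\network_{c_{C(v)}}$, together with the $C(v)$ edges $(c_m,v)$. The key bookkeeping fact is that the number of messages crossing edge $(c_m,v)$ under the \reduce\ operation on $\network_v$ equals the number of messages crossing the outgoing edge of $c_m$ in the extended subtree $\tilde{\network}_{c_m}$, i.e. $\msg_{(c_m,v)}(\tilde{\network}_v,\load,\blueset)=\msg_{(c_m,\parent(c_m))}(\tilde{\network}_{c_m},\load,\blueset)$, since the behavior of the \reduce\ operation inside $\network_{c_m}$ depends only on $\network_{c_m}$, $\load$, and $\blueset\cap\network_{c_m}$ (this is exactly the independence/barrier property already noted in Sec.~\ref{sec:alg:barrier} and Lemma~\ref{lem:decompose}).

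First I would handle the case $v\notin\blueset$. Then $v$ forwards every incoming message, so the number of messages on $(v,\parent(v))$ is $\sum_{m=1}^{C(v)}\msg_{(c_m,v)}(\tilde{\network}_v,\load,\blueset)+\load(v)$ (the messages arriving from each child subtree, plus $v$'s own $\load(v)$ locally generated messages). I would substitute this into Eq.~\eqref{eq:potential:non_leaf}, split the edge-sum $\sum_{\link\in\network_v}$ into the $C(v)$ internal sums over $\network_{c_m}$ plus the $C(v)$ terms $\msg_{(c_m,v)}(\cdot)\cdot\rate(c_m)$, and then reassemble: for each $m$, the internal sum over $\network_{c_m}$ together with $\msg_{(c_m,v)}(\cdot)\cdot\bigl(\rate(c_m)+\rate(v,A_v^\ell)\bigr)$ is precisely $\aggc_{c_m}(\ell+1,\blueset)$, because $\rate(c_m)+\rate(v,A_v^\ell)=\rate(c_m,A_v^\ell)=\rate(c_m,A_{c_m}^{\ell+1})$ (the ancestor of $c_m$ at distance $\ell+1$ is $A_v^\ell$). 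The leftover term is $\rate(v,A_v^\ell)\cdot\load(v)$, giving Eq.~\eqref{eq:ellRed}.

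For the case $v\in\blueset$, the only change is that $v$ aggregates all incoming messages (and its own load) into a single outgoing message, so $\msg_{(v,\parent(v))}(\tilde{\network}_v,\load,\blueset)=1$, contributing the term $1\cdot\rate(v,A_v^\ell)$; and for each child $c_m$, the blue node $v$ acts as a barrier at distance $1$, so the relevant per-child contribution is $\aggc_{c_m}(1,\blueset)$ rather than $\aggc_{c_m}(\ell+1,\blueset)$ — i.e. we use $\rate(c_m)+\rate(v,A_v^1)=\rate(c_m)$ wait, more precisely $\rate(c_m,A_{c_m}^1)=\rate(c_m)$, so the factor multiplying $\msg_{(c_m,v)}$ is just $\rate(c_m)$, which is what appears in $\aggc_{c_m}(1,\blueset)$. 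Summing over $m$ and adding $1\cdot\rate(v,A_v^\ell)$ yields Eq.~\eqref{eq:ellBlue}. The main obstacle is purely notational: being careful that $\rate(v,A_v^\ell)$ is an additive ``tail'' of path-rates that combines correctly with $\rate(c_m)$ to form $\rate(c_m,A_{c_m}^{\ell+1})$, and that the message-count decomposition across the child boundaries is valid — both of which follow from the additivity of $\rate$ along paths and from the locality of the \reduce\ operation, respectively. No genuinely hard step is involved; it is a direct expansion of the definitions.
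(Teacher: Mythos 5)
Your proof is correct and takes essentially the same route as the paper's: the paper states the per-child recursive decompositions of $\aggc_v^m(\ell,\blueset)$ (separately for $v$ blue and $v$ red) and unrolls them, and your direct expansion of $\aggc_v^{C(v)}(\ell,\blueset)$ into per-child potentials, using additivity of $\rate$ along the path (so that $\rate(c_m)+\rate(v,A_v^\ell)=\rate(c_m,A_{c_m}^{\ell+1})$) and locality of the message counts within each $\network_{c_m}$, is exactly that recursion written out. The momentary slip ``$\rate(c_m)+\rate(v,A_v^1)=\rate(c_m)$'' is immediately self-corrected (the blue $v$ contributes only the single message term $1\cdot\rate(v,A_v^\ell)$, so the child edge keeps factor $\rate(c_m)=\rate(c_m,A_{c_m}^1)$) and does not affect the argument.
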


\subsection{Optimality of \alg}

We first consider algorithm \alggather, which gathers the required information about {\em potentially optimal} configurations.
The algorithm essentially scans the tree from the leaves towards the root, varifying in every node along the scan the overall message cost of the subtree rooted at that node, assuming it would end up allocating $i$ blue nodes within the subtree, for all possible values of $i$.
In doing so, the algorithm evaluates these costs while taking into account all the possible distances of the node being considered (captured by parameter $\ell$) from the nearest blue ancestor or $d$.

The intuition behind this approach follows Lemma~\ref{lem:decompose} and the partitioning-view of the solution that would eventually be produced;
We begin our analysis of \alg\ by showing the following lemma, which serves as the main technical tool for proving the correctness of \alg.

\begin{lemma}
\label{lem:gather_correctness:induction}
For every node $v$, every $m=1,\ldots,\childnum(v)$, every $\ell=0,\ldots,\distance(v)$, and every $i=0,\ldots,k$,
if $v$ is not a leaf then $\dpellicol_v^m$ as computed by \alggather\ satisfies
\begin{align}
\dpellicol_v^m(\ell,i,\red)
&= \min_{\abs{\blueset}=i,v \notin \blueset} \aggc_v^m(\ell,\blueset) 
\label{eq:lem:alggather:correctness:red} \\
\dpellicol_v^m(\ell,i,\blue)
&= \min_{\abs{\blueset}=i,v \in \blueset} \aggc_v^m(\ell,\blueset).
\label{eq:lem:alggather:correctness:blue}
\end{align}
Furthermore, for every node $v$, every $\ell=0,\ldots,\distance(v)$, and every $i=0,\ldots,k$,
$\dpelli_v(\ell,i)$ as computed by \alggather\ satisfies
\begin{align}
\dpelli_v(\ell,i)=\min_{\abs{\blueset}=i} \aggc_v(\ell,\blueset).
\end{align}
\end{lemma}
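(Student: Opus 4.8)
The plan is to prove Lemma~\ref{lem:gather_correctness:induction} by (double) induction, following the structure of the recursive decomposition established in Lemma~\ref{lem:pidecompose}. The outer induction is on the height of the node $v$ (equivalently, a bottom-up scan of the tree); the inner induction is on the child-index $m$, since $\network_v^m$ is built incrementally by adding one child subtree at a time. The key point is that all three claimed equations are really one statement: $\dpellicol_v^m(\ell,i,\cdot)$ records the minimum $(v,m)$-potential over all blue-sets of size $i$ in $\network_v^m$ restricted to the color of $v$, and $\dpelli_v(\ell,i)$ is just the pointwise minimum over the two colors of the final quantity $\dpellicol_v^{C(v)}(\ell,i,\cdot)$ — so once the $\dpellicol$ equations are established the $\dpelli$ equation is immediate, using that every blue-set is either one that colors $v$ red or one that colors it blue.

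\medskip

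\noindent\textbf{Base case (leaves).} For a leaf $v$ we have $\childnum(v)=0$, so the relevant object is $\aggc_v^0(\ell,\blueset)$ from Eq.~\eqref{eq:potential:leaf}. Here $\blueset\subseteq\network_v^0=\{v\}$, so either $\blueset=\emptyset$ (i.e.\ $i=0$, $v$ red), in which case $\msg_{(v,\parent(v))}=\load(v)$ and $\aggc_v^0(\ell,\emptyset)=\load(v)\cdot\rate(v,A_v^\ell)$; or $\blueset=\{v\}$ (i.e.\ $i=1$, $v$ blue), in which case $\msg_{(v,\parent(v))}=1$ and $\aggc_v^0(\ell,\{v\})=1\cdot\rate(v,A_v^\ell)$. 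I would check that \alggather\ initializes exactly these values at leaves, and that for $i\ge 2$ (or $i=1$ with $v$ red, $i=0$ with $v$ blue) the entry is $+\infty$ because no such blue-set exists. This handles the vacuous/infeasible cases uniformly once and for all.

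\medskip

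\noindent\textbf{Inductive step.} Assume the statement holds for all children of $v$ (outer IH) and, within the processing of $v$, for $\network_v^{m-1}$ (inner IH). Consider a blue-set $\blueset\subseteq\network_v^m$ of size $i$ with $v$ of a fixed color. Partition $\blueset$ as $\blueset=\blueset'\,\dot\cup\,\blueset_m$ where $\blueset'\subseteq\network_v^{m-1}\setminus\{v\}\cup(\blueset\cap\{v\})$ contains the part in the first $m-1$ child-subtrees together with $v$ itself if $v\in\blueset$, and $\blueset_m\subseteq\network_{c_m}$ is the part inside the $m$-th child subtree. The additivity of $\msg_\link\cdot\rate(\link)$ over edges, together with Lemma~\ref{lem:pidecompose} applied at the child $c_m$, lets me write $\aggc_v^m(\ell,\blueset)=\aggc_v^{m-1}(\ell,\blueset')+\aggc_{c_m}(\ell',\blueset_m)$, where $\ell'=1$ if $v\in\blueset$ and $\ell'=\ell+1$ if $v\notin\blueset$ — this is exactly the ``the child sees either distance $1$ (to the blue $v$) or distance $\ell+1$ (through the red $v$)'' bookkeeping from Lemma~\ref{lem:pidecompose}. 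Minimizing over the split $i=i'+i_m$ of the blue-count and over $\blueset',\blueset_m$ separately (legitimate because the two potentials depend on disjoint parts of the tree), I get the recurrence
\[
\dpellicol_v^m(\ell,i,\col(v))=\min_{i'+i_m=i}\Big(\dpellicol_v^{m-1}(\ell,i',\col(v))+\dpelli_{c_m}(\ell',i_m)\Big),
\]
with $\ell'=1$ for $\col(v)=\blue$ and $\ell'=\ell+1$ for $\col(v)=\red$, and where when $\col(v)=\blue$ the index $i'$ must account for $v$ itself being blue. Then I verify that this is precisely the recurrence \alggather\ computes, which closes the induction. The final sentence of the lemma follows by taking $m=C(v)$ and observing $\dpelli_v(\ell,i)=\min\{\dpellicol_v^{C(v)}(\ell,i,\red),\dpellicol_v^{C(v)}(\ell,i,\blue)\}=\min_{\abs{\blueset}=i}\aggc_v(\ell,\blueset)$.

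\medskip

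\noindent\textbf{Main obstacle.} The delicate part is not the induction skeleton but getting the bookkeeping of $\ell$ and $i$ exactly right at the ``seam'' between $v$ and its children, and in particular handling the case $v\in\blueset$ correctly: a blue $v$ ``uses up'' one unit of the budget $i$ and simultaneously resets the distance parameter passed to each child subtree to $1$, whereas a red $v$ passes $\ell+1$ and contributes $\load(v)\cdot\rate(v,A_v^\ell)$ to the potential rather than $1\cdot\rate(v,A_v^\ell)$. I also need to be careful that the decomposition $\aggc_v^m=\aggc_v^{m-1}+\aggc_{c_m}$ is valid — this requires that $\rate(v,A_v^\ell)$ is the same ``above-$v$'' multiplier in both $\aggc_v^m$ and $\aggc_v^{m-1}$ (true, since it depends only on $v$ and $\ell$, not on $m$), and that the $\msg$ on the link $(v,\parent(v))$ is determined entirely by whether $v$ is blue and, if red, by the loads in the subtree — which is exactly the content of how \reduce\ behaves. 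Finally, I must confirm the feasibility ranges ($\ell\le\distance(v)$, $0\le i\le k$, and the $+\infty$ convention for infeasible color/count combinations) are respected so that the $\min$'s are over nonempty sets exactly when a witnessing $\blueset$ exists; I would state this as a short remark rather than belabor it.
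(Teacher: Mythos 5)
Your proposal is correct and follows essentially the same route as the paper's proof: a double induction on the height of $\network_v$ and the child index $m$, with the leaf base case, the seam decomposition $\aggc_v^m=\aggc_v^{m-1}+\aggc_{c_m}(\ell',\cdot)$ (with $\ell'=1$ or $\ell+1$ depending on the color of $v$) driven by Lemma~\ref{lem:pidecompose}, and matching against the $\mincost$ recurrence, after which the $\dpelli_v$ claim follows by minimizing over the two colors. The only differences are cosmetic: the paper spells out the $m=1$ case separately and makes the ``split the optimal $\blueset$ into disjoint parts and argue by exchange/contradiction'' step explicit, which you compress into the (correct) remark that the two potentials depend on disjoint parts of the tree.
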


\usetikzlibrary{fit,calc}
\newcommand*{\tikzmk}[1]{\tikz[remember picture,overlay,] \node (#1) {};\ignorespaces}
\newcommand{\boxit}[5]{\tikz[remember picture,overlay]{\node[yshift=3pt,fill=#1,opacity=.25,fit={($(A)-(#2\linewidth,#3\baselineskip)$)($(B)+(#4\linewidth,#5\baselineskip)$)}] {};}\ignorespaces}
%
\colorlet{mode1}{red!40}
\colorlet{mode2}{cyan!60}
\colorlet{mode3}{green!70}
\colorlet{mode4}{gray!70}

\makeatletter
\newcommand\footnoteref[1]{\protected@xdef\@thefnmark{\ref{#1}}\@footnotemark}
\makeatother

\begin{algorithm}[t!]
\caption[Algorithm]{\alggather$(\network,\load,\Avilabilty,\numblue)$ at node $v$}
\label{alg:alg:gather}
\begin{algorithmic}[1]
\Require A tree $T$,load $\load$, availability $\Avilabilty$, $\numblue$ $\#$ of blue nodes
\Ensure Correct potential functions, $\dpelli_v, \dpellicol_v$, at each node $v$
\If{$v$ is a leaf node}
    \tikzmk{A}
    \For{$\ell=0,\dots,\distance(v)$}
    \Comment{$\distance(v)$ distance of $v$ from root}
        \State $\dpelli_v(\ell,0)=\rate(v,A^\ell_v) \cdot \load(v)$
        \label{alg:gather:leaf:i_equals_0}
        \For{$i=1,\dots,\numblue$}
            \If{$v \in \Avilabilty$}
            \Comment{$v$ has available capacity}
                \State $\dpelli_v(\ell,i)=\rate(v,A^\ell_v)$
                \Comment{$v$ is blue}            \Else
                \State $\dpelli_v(\ell,i)=\rate(v,A^\ell_v) \cdot \load(v)$
                \label{alg:gather:leaf:i_greater_0}
                \Comment{$v$ is red}
            \EndIf
        \EndFor
    \EndFor
    \State send $\dpelli_v$ to $\parent(v)$ and \Return
    \Comment{inform parent}
\EndIf
\tikzmk{B}
\boxit{mode1}{0.45}{-0.15}{0.987}{-0.2}
\State wait to receive $\dpelli_c$ from each child $c$ of $v$
\For{$m=1,\ldots,C(v)$}
    \For{$\ell=0,\ldots,\distance(v)$}
    \label{alg:for:ell:start}
        \For{$i=0,\dots,\numblue$}
            \If{$m=1$}
                \tikzmk{A}
                \If{$v \in \Avilabilty$}
                    \State $\dpellicol_v^m(\ell,i,B)=\dpelli_{c_m}(1,i-1)+\rate(v,A^\ell_v)$ \footnote{\label{note1}When $i=0$ then $\dpellicol_v^m(\ell,i,B)= \infty$.}
                    \Else
                        \State $\dpellicol_v^m(\ell,i,B)= \infty$
                \EndIf
                \label{alg:gather:Y_v^m:first_child:blue}
                \State $\dpellicol_v^m(\ell,i,R)=\dpelli_{c_m}(\ell+1,i)+\rate(v,A^\ell_v) \cdot \load(v)$
                \label{alg:gather:Y_v^m:first_child:red}
                \tikzmk{B}
                \boxit{mode2}{0.422}{-0.15}{0.131}{-0.2}
            \Else
                \tikzmk{A}
                \If{$v \in \Avilabilty$}
                    \State $\dpellicol_v^m(\ell,i,\blue) =\mincost(\ell,i,\dpellicol_v^{m-1},\dpelli_{c_m},\blue)$
                    \footnoteref{note1}
                \Else
                        \State $\dpellicol_v^m(\ell,i,B)= \infty$
                \EndIf
                \label{alg:gather:Y_v^m:m_child:blue}
                \State $\dpellicol_v^m(\ell,i,\red) = \mincost(\ell,i,\dpellicol_v^{m-1},\dpelli_{c_m},\red)$
                \label{alg:gather:Y_v^m:m_child:red}
                \tikzmk{B}
                \boxit{mode3}{0.252}{-0.15}{0.2}{-0.2}
            \EndIf
        \EndFor
    \EndFor
    \label{alg:for:ell:end}
\EndFor
\For{$\ell=0,\ldots,\distance(v)$}
    \For{$i=0,\ldots,\numblue$}
        \State $\dpelli_v(\ell,i)=\min\set{ \dpellicol_v^{C(v)}(\ell,i,\blue), \dpellicol_v^{C(v)}(\ell,i,\red)}$
        \label{alg:gather:X_v:min_red_blue}
    \EndFor
\EndFor
\State send $\dpelli_v$ to $\parent(v)$ and \Return

\Statex \hrulefill
\Procedure{$\mincost(\ell,i,\dpellicol_v^{m-1},\dpelli_{c_m},\nodecolor)$}{}
\label{alg:mincost:start}
    \If{$\nodecolor==\blue$}
        \State \Return $\displaystyle{\min_{0\leq j < i} \left[ \dpellicol_v^{m-1}(\ell,i-j,B) + \dpelli_{c_m}(1,j)\right]}$
        \label{alg:gather:mCost:blue}
    \Else
    \Comment{$\nodecolor==\red$}
        \State \Return $\displaystyle{\min_{0\leq j \leq i} \left[  \dpellicol_v^{m-1}(\ell,i-j,R) + \dpelli_{c_m}(\ell+1,j)\right]}$
        \label{alg:gather:mCost:red}
    \EndIf
\EndProcedure
\label{alg:mincost:end}
\end{algorithmic}
\end{algorithm}

\begin{algorithm}[t!]
\caption[Algorithm]{\algcolor$(k)$ at node $v$}
\label{alg:alg:color}
\begin{algorithmic}[1]
\Require $\dpelli_v$, $\dpellicol_v$
\Ensure Optimal coloring
\If{$v$ is destination}
    \State send $(k,1)$ to $r$ \label{alg:algcolor:dest}
\EndIf
\State color $v$ red and wait for $(i,\ell^*)$ from $\parent(v)$
\Statex \Comment{$\ell^*$: distance from root or closest blue ancestor}
\Statex \Comment{$i$: number of blue noes in $\network_v$~~~~~~~~~~~~~~~~~~~~~~~~~~~~~~} 
\If{$v$ is a leaf node and $i>0$}
\tikzmk{A}
    \State color $v$ blue and \Return
\EndIf
\tikzmk{B}
\boxit{mode1}{0.587}{-0.1}{0.987}{-0.15}
\If{$\dpellicol_v^{C(v)}(\ell^*,i,\blue) < \dpellicol_v^{C(v)}(\ell^*,i,\red)$}
    \label{alg:algcolor:condition_for_blue}
    \State color $v$ blue
    \State $\ell^*=0$
    \Comment{reset distance to closest blue ancestor}
\EndIf
\For{$m=C(v),\ldots,2$}
\tikzmk{A}
\Comment{children in reverse order}
    \State $j=\minsplit(\ell^*+1,i,\dpellicol_v^{m-1},\dpelli_{c_m},\mbox{color of }v)$
    \label{alg:algcolor:msplit}
    \State send $(j,\ell^*+1)$ to $c_m$
    \State $i = i-j$
\EndFor
\tikzmk{B}
\boxit{mode3}{0.42}{-0.15}{0.987}{-0.18}
\If{$v$ is blue}
\tikzmk{A}
\Comment{handle $c_1$ last}
    \State send $(i-1,\ell^*+1)$ to $c_1$
\Else
    \State send $(i,\ell^*+1)$ to $c_1$
\EndIf
\tikzmk{B}
\boxit{mode2}{0.319}{-0.15}{0.985}{-0.055}
\State \Return
\Statex \hrulefill
\Procedure{$\minsplit(\ell,i,\dpellicol_v^{m-1},\dpelli_{c_m},\nodecolor)$}{}
\label{alg:minsplit:start}
    \If{$\nodecolor==\blue$}
        \State \Return $\displaystyle{\argmin_{0\leq j < i} \left[\dpellicol_v^{m-1}(\ell,i-j,\blue)+\dpelli_{c_m}(1,j)\right]}$\label{alg:minsplit:blue}
    \Else
    \Comment{$\nodecolor==\red$}
        \State \Return $\displaystyle{\argmin_{0\leq j \leq i} \left[  \dpellicol_v^{m-1}(\ell,i-j,\red) + \dpelli_{c_m}(\ell+1,j)\right]}$\label{alg:minsplit:red}
        
    \EndIf
\EndProcedure
\label{alg:minsplit:end}
\end{algorithmic}
\end{algorithm}

The lemma follows from a double induction argument on the height of the subtree $\network_v$ rooted at any node $v$, and indices $m=1,\ldots,\childnum(v)$ of the children of a node $v$.
\revision{Proof appears in Appendix~\ref{sec:appendix:proofs}.}

Lemma~\ref{lem:gather_correctness:induction} ensures that the values gathered and computed by the nodes while running \alggather\ indeed correspond to the configurations minimizing $\aggc_v(\ell,\blueset)$.
In particular, by Eq.~\eqref{eq:phi_is_pi_at_destination}, the lemma guarantees that the value computed for $\dpelli_d(0,k)$, where $d$ is the destination server, is indeed the minimal utilization cost possible using $\numblue$ blue nodes.

In the second phase of \alg, \algcolor\ essentially traces back the allocation of blue nodes along the optimal path in the dynamic programming performed by \alggather.
To show that \algcolor\ indeed produces an optimal solution to the \bica\ problem we make use of the following lemma.
\revision{Proof appears in Appendix~\ref{sec:appendix:proofs}.}


\begin{lemma}
\label{lem:color_correctness}
The set of blue nodes $\blueset$ determined by \algcolor\ minimizes the utilization complexity, and  $\abs{\blueset} \leq k$.
\end{lemma}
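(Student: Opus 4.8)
The statement packages two things: that the output set $\blueset$ is \emph{feasible}, i.e. $\blueset\subseteq\Avilabilty$ and $\abs{\blueset}\le\numblue$, and that it is \emph{optimal}. For optimality the plan is to show $\msgcost(\network,\load,\blueset)$ equals the value $\dpelli_r(1,\numblue)$ that \alggather\ delivers to the destination, because by Lemma~\ref{lem:gather_correctness:induction} together with Eq.~\eqref{eq:phi_is_pi_at_destination} we already know $\dpelli_r(1,\numblue)=\min_{\abs{\blueset'}=\numblue}\aggc_r(1,\blueset')=\min_{\abs{\blueset'}=\numblue}\msgcost(\network,\load,\blueset')=\mbox{\bica}(\network,\load,\Avilabilty,\numblue)$. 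So the whole proof reduces to one structural statement about how \algcolor\ descends the tree, matching the choices recorded bottom-up by \alggather.

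\textbf{The invariant.} I would prove, by induction on $\depth(v)$ in the order \algcolor\ processes the nodes (top-down), the following about the moment a node $v$ receives a pair $(i,\ell^*)$ from $\parent(v)$: (i)~$A_v^{\ell^*}=\parent^*_v$ with respect to the partial coloring already committed on $\Path(\rootswitch,v)$ without $v$ — equivalently, $\ell^*$ is the true distance from $v$ to its closest blue ancestor, or to $\destination$ if there is none; and (ii)~after \algcolor\ finishes on the entire subtree $\network_v$, the set $\blueset_v$ of blue nodes it placed inside $\network_v$ satisfies $\blueset_v\subseteq\Avilabilty\cap\network_v$, $\abs{\blueset_v}\le i$, and $\aggc_v(\ell^*,\blueset_v)=\dpelli_v(\ell^*,i)$. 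The base case is $\destination$, which sends $(\numblue,1)$ to $\rootswitch$ (line~\ref{alg:algcolor:dest}); here $\ell^*=1$ correctly records that $\rootswitch$'s only ancestor is $\destination$, one hop away, and $i=\numblue$. For the inductive step at a non-leaf $v$: line~\ref{alg:algcolor:condition_for_blue} colors $v$ with whichever of $\red,\blue$ attains $\min\set{\dpellicol_v^{\childnum(v)}(\ell^*,i,\blue),\dpellicol_v^{\childnum(v)}(\ell^*,i,\red)}=\dpelli_v(\ell^*,i)$, which is exactly the rule of line~\ref{alg:gather:X_v:min_red_blue}. If $v$ is set red, unrolling the chained minima defining $\dpellicol_v^{\childnum(v)}(\ell^*,i,\red)$ inside the $\mincost$ procedure yields a partition $j_1+\dots+j_{\childnum(v)}=i$ with $\sum_m \dpelli_{c_m}(\ell^*+1,j_m)+\rate(v,A_v^{\ell^*})\cdot\load(v)=\dpelli_v(\ell^*,i)$; the $\minsplit$ calls (line~\ref{alg:algcolor:msplit}, processing children in reverse) recover precisely these $j_m$, and $c_m$ is handed $(j_m,\ell^*+1)$, where $\ell^*+1$ is the correct distance because $v$ is red. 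Invoking the induction hypothesis on the children and reassembling via Lemma~\ref{lem:pidecompose}, Eq.~\eqref{eq:ellRed}, gives $\aggc_v(\ell^*,\blueset_v)=\dpelli_v(\ell^*,i)$ with $\abs{\blueset_v}\le\sum_m j_m=i$ and $\blueset_v=\bigcup_m\blueset_{c_m}\subseteq\Avilabilty$. The case $v$ blue is symmetric, using that $\dpellicol_v^{m}(\cdot,\cdot,\blue)=\infty$ unless $v\in\Avilabilty$ (line~\ref{alg:gather:Y_v^m:first_child:blue}), so a node is colored blue only when available; line~\ref{alg:algcolor:condition_for_blue} resets $\ell^*$ to $0$, one budget unit is spent on $v$, the remaining $i-1$ are split among the children now told they sit one hop from their new closest blue ancestor $v$, and Lemma~\ref{lem:pidecompose}, Eq.~\eqref{eq:ellBlue}, gives $\aggc_v(\ell^*_{\mathrm{old}},\blueset_v)=\dpelli_v(\ell^*_{\mathrm{old}},i)$ with $\abs{\blueset_v}\le 1+(i-1)=i$. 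For a leaf $v$, \algcolor\ colors $v$ blue iff $i\ge 1$ (and $v$ is available), realizing the leaf values of $\dpelli_v(\ell^*,i)$ recorded by \alggather, with $\abs{\blueset_v}\le 1\le i$. Applying the invariant to $\rootswitch$ with $(\numblue,1)$ and Eq.~\eqref{eq:phi_is_pi_at_destination} then yields $\msgcost(\network,\load,\blueset)=\aggc_r(1,\blueset)=\dpelli_r(1,\numblue)$ and $\abs{\blueset}\le\numblue$, which is the claim.

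\textbf{Main obstacle.} The crux is the internal-node step: one must verify that the single partition $(j_1,\dots,j_{\childnum(v)})$ that \algcolor\ extracts by chaining $\minsplit$/$\argmin$ over the $\dpellicol_v^m$ tables is a partition that actually \emph{attains} $\dpelli_v(\ell^*,i)$, which means carefully tracking the one-child-at-a-time accumulation inside $\mincost$ and $\minsplit$ — precisely what the auxiliary tables $\dpellicol_v^m$ (via Lemma~\ref{lem:gather_correctness:induction}) were designed to support — and checking that the $\ell^*$-update rule ($+1$ under a red parent, reset to $0$ under a blue parent) coincides with the $\parent^*_v$ semantics that the potential $\aggc$ is built on. A smaller point concerns the budget: since the leaf value $\dpelli_v(\ell,i)$ is constant in $i$ for $i\ge1$, a $\minsplit$ tie can route more budget into a subtree than is ultimately consumed, so one obtains only $\abs{\blueset}\le\numblue$ (as stated, not equality); one should also observe that such ties may be broken toward the smallest $j$ without changing any cost, which guarantees \algcolor\ never hands a positive budget to, hence never tries to color blue, an unavailable leaf.
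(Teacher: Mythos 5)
Your plan follows essentially the same route as the paper's own proof: a top-down induction along \algcolor's processing order, using Lemma~\ref{lem:gather_correctness:induction} for the table values, Lemma~\ref{lem:pidecompose} to reassemble the subtree costs, and the correspondence between $\minsplit$ and $\mincost$ to recover an optimal budget partition among the children. Your invariant (realized potential $\aggc_v(\ell^*,\blueset_v)=\dpelli_v(\ell^*,i)$, correct $\ell^*$ semantics, $\abs{\blueset_v}\le i$) is a slightly more explicit phrasing of the paper's ``correctly assigned'' notion, and your remarks on the budget inequality and unavailable leaves are sound refinements rather than a different argument.
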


The proof of Theorem~\ref{thm:alg_is_optimal} now follows immediately from combining Lemma~\ref{lem:gather_correctness:induction} and Lemma~\ref{lem:color_correctness}.

\begin{proof}[Proof of Theorem~\ref{thm:alg_is_optimal}.]
The correctness of the algorithm follows from Lemma \ref{lem:gather_correctness:induction} and Lemma \ref{lem:color_correctness}.
For the running time of \alg, we note that it is dominated by the running time of \alggather, which, in turn, is dominated by the for-loop in lines~\ref{alg:for:ell:start}-\ref{alg:for:ell:end}. This loop is performed once for every edge $(v,p(v))$. This gives an overall running time of $O(n\cdot h(\network)\cdot k)$ for this loop over all edges, where in each iteration the $\mincost$ procedure is performed at most once, implying an overall running time of $O(n\cdot h(\network)\cdot k^2)$.
The result follows.
\end{proof}

\section{Related Work}\label{sec:relatedwork}
Data aggregation has been studied extensively in various contexts~\cite{jesus15survey}, where significant focus was given to wireless sensor networks~\cite{nakamura07information}, alongside scheduling algorithms for optimizing the induced convergecast tree~\cite{malhotra11aggregation}.
Some of these works also focused on characterizing the type of functions that can be efficiently aggregated~\cite{yu09distributed,jesus15survey}.


MapReduce~\cite{dean04mapreduce} has proven to be a fundamental paradigm for various applications in distributed environments. 
Aside from being a cornerstone of big data analytics, it is also being adopted and incorporated into additional applications and systems, such as large distributed databases, although at the expense of sometimes non-negligible complexity~\cite{yu09distributed}.
Significant efforts were made to improve the performance of MapReduce, including aspects related to scheduling~\cite{zaharia08improving}, data placement~\cite{cheng17improving}, and data coding~\cite{li18fundamental}.


Efficiently performing distributed machine learning, and specifically the task of training deep neural networks, has been a fundamental concern in the past decade.
In particular, network bottlenecks are arguably one of the major concerns when executing such tasks~\cite{li14communication,viswanathan20network}.
Various methods for improving network performance and footprint in such systems have been proposed and implemented, including sparsification, quantization,
and scheduling~\cite{xu20compressed,dutta20discrepancy,wang20geryon}.
Furthermore, aspects pertaining to system (and network) heterogeneity and varying network topologies have also been shown to affect the performance of such systems~\cite{abdelmoniem21impact,wang19impact}.
We refer to a recent survey of methods and strategies for optimizing networks for ML~\cite{ouyang21communication}.
Of particular relevance to our work is the efficient scaling of distributed ML using a parameter server, which aggregates local computations, and distributes updated models during training~\cite{li14scaling}. Using this approach has shown to provide significant improvements of ML training tasks, with an emphasis on reducing the network footprint of these tasks~\cite{li14communication,mai15optimizing,luo18parameter}.
Furthermore, the advent of federated ML~\cite{bonawitz19towards} has further increased the efforts of optimizing network performance for ML tasks.


In-network Computing (INC) has recently gained a lot of attention from researchers and industry alike~\cite{ports19when,sapio17innetwork}.
This paradigm is fueled by the ability to {\em program} the data plane, using, e.g., the P4 programming language~\cite{bosshart14p4}, alongside advances in FPGA design and performance (including SmartNICs). Such devices, which enable performing non-trivial computation within the network elements themselves, with minimal effects on performance (e.g., throughput and latency)~\cite{eran19nica}, are effectively deployed by large-scale providers~\cite{firestone18azure}.
Examples of such application logic implementations include 
MapReduce~\cite{costa12camdoop,mai14netagg,sapio17innetwork,bruschi20offloading},
Paxos~\cite{jin18netchain,dang20p4xos,belocchi20paxos},
ML~\cite{xiong19do,sapio19scaling,gebara21innetwork},
caching~\cite{jin17netcache,liu17incbricks},
key-value stores~\cite{tokusashi18lake},
storage replication~\cite{zhu19harmonia,li20pegasus},
IoT data aggregation~\cite{madureira20supporting},
compression~\cite{vaucher20zipline},
lock management~\cite{yu20netlock},
and packet-level ML~\cite{swamy20taurus}.
Some more recent efforts target generalizing INC to arbitrary functionalities~\cite{zhang20gallium}, most predominantly those related to network functions~\cite{shantharama20hardware}, and also studying aspects of energy efficiency of such solutions~\cite{tokusashi19case}.
Similar efforts are
being performed in HPC environments, with special emphasis on support for large scale ML tasks (e.g., Nvidia's SHARP~\cite{graham20sharp}).
\revision{A recent work ~\cite{blocher2021switches} also studied the problem of bounded
resources in in-network computing, but the focus of the work was resource scheduling 
and not optimal placement as in our work.}
Whereas most of these works focus on implementation of concrete functionalities within the network, we consider the orthogonal network-level problem of where should such capabilities be deployed, in order to optimize the cumulative system performance, regardless of the specific implementation and/or task to be performed.

\section{Discussion and Future Work} 
\label{sec:disussion_future_work}

This work considers the \bica\ problem, where we need to determine the location of a limited number of aggregation switches within a tree network, so as to minimize the overall utilization complexity of reduce operations.
This problem lays at the heart of many distributed computing use cases, most notably big data tasks using the MapReduce paradigm, and distributed and federated machine learning.
Our work describes an optimal algorithm, \alg, for solving the \bica~ problem, and provides further insights as to the performance of \alg\ via an extensive simulation study.

A future challenging task would be to develop solutions that are applicable to {\em general} networks (i.e., not necessarily tree networks), thus supporting multi-path routing. 
Another interesting open problem is related to the multiple workloads scenario. The main question there is how to distribute the overall aggregation capacity available throughout the network to the various workloads being served. Specifically, every workload might be serviced by a {\em distinct} number of aggregation switches (i.e., there need not be a uniform $k$ for all workloads).

Last but not least, we expect our approach to also be effective in designing algorithms that target minimizing the {\em delay} incurred by the system, or minimizing the load on bottleneck links, while using a bounded number of aggregation switches.
However, our methodology may need to be modified significantly for such objectives, and may require new tools and insights.
We conjecture, however, that these objectives -- that of minimizing the overall utilization complexity, and that of minimizing the overall system delay or bottlenecks, 
are closely related, and a solution minimizing one of these objectives is expected to perform well also for the other objectives.

\label{end_body}

\newrevision{
\section*{Acknowledgments}
The authors would like to thank the anonymous reviewers and our shepherd, Shay Vargaftik, for their valuable feedback which helped improve the paper. 
This project
was partially funded by the European Research Council (ERC) under the European Union’s Horizon 2020 research and innovation program (grant agreement No 864228 - AdjustNet).
}
\bibliographystyle{plain}
\bibliography{bibliography}


\appendix



\section{Scaling of \alg}
\label{sec:evaluation:scaling}


In this section  we consider the scaling laws of \alg, when applied to larger networks.
In our evaluation of the performance of \alg, we consider several types of bounds on the allowed number of blue nodes, where we allow these bounds to scale as a function of the network size.
Specifically, we focus on binary tree networks of sizes $n=2^i$, for $i=8,\ldots,12$ with constant
rates $1$, and consider $k=f(n)$ blue nodes, for
$f(n)=\set{0.01n,\log n, \sqrt{n}}$.
In our evaluation we consider the power-law load distribution.

In Fig.~\ref{fig:scaling_utilization} we consider the normalized utilization compared to the all-red scenario. 
first observe that when the number of blue nodes is $1\%$ of the network size, dealing with larger networks implies an improvement in the utilization complexity reduction.
For example, for $\btnet{512}$, using merely a $1\%$ fraction of nodes as blue yields a 35\% reduction in utilization complexity compared to the all-red solution, whereas for $\btnet{4096}$, the same fraction of blue nodes results in savings that are above 50\%.
However, when the fraction of blue nodes tends to zero (compared to the size of the network), like in the cases of $\numblue=\log n$ and $\numblue=\sqrt{n}$, the  trend changes and improvement slowly decreases with size.  

Fig.\ref{fig:scaling}(b) considers the dual perspective, and studies the fraction of blue nodes (in \%) required to reach an $\alpha\%$ cost reduction in utilization complexity compared to the all-red solution.
Our results indicate that as the network becomes larger, the fraction of nodes required to obtain any such level of cost reduction, also reduces.
For example, 70\% saving in utilization complexity on $\btnet{4096}$ can be obtained by taking less than 3\% of the nodes as blue, whereas achieving merely 50\% saving, requires less than 1\% of the nodes being blue.
We note that these scaling laws are computed using our optimal algorithm, \alg.

\begin{figure}
    \centering
    \subcaptionbox{Improvement relative to red, for $1\%, \log n, \sqrt{n}$. \label{fig:scaling_utilization}}{
        \includegraphics[width=0.92\columnwidth]{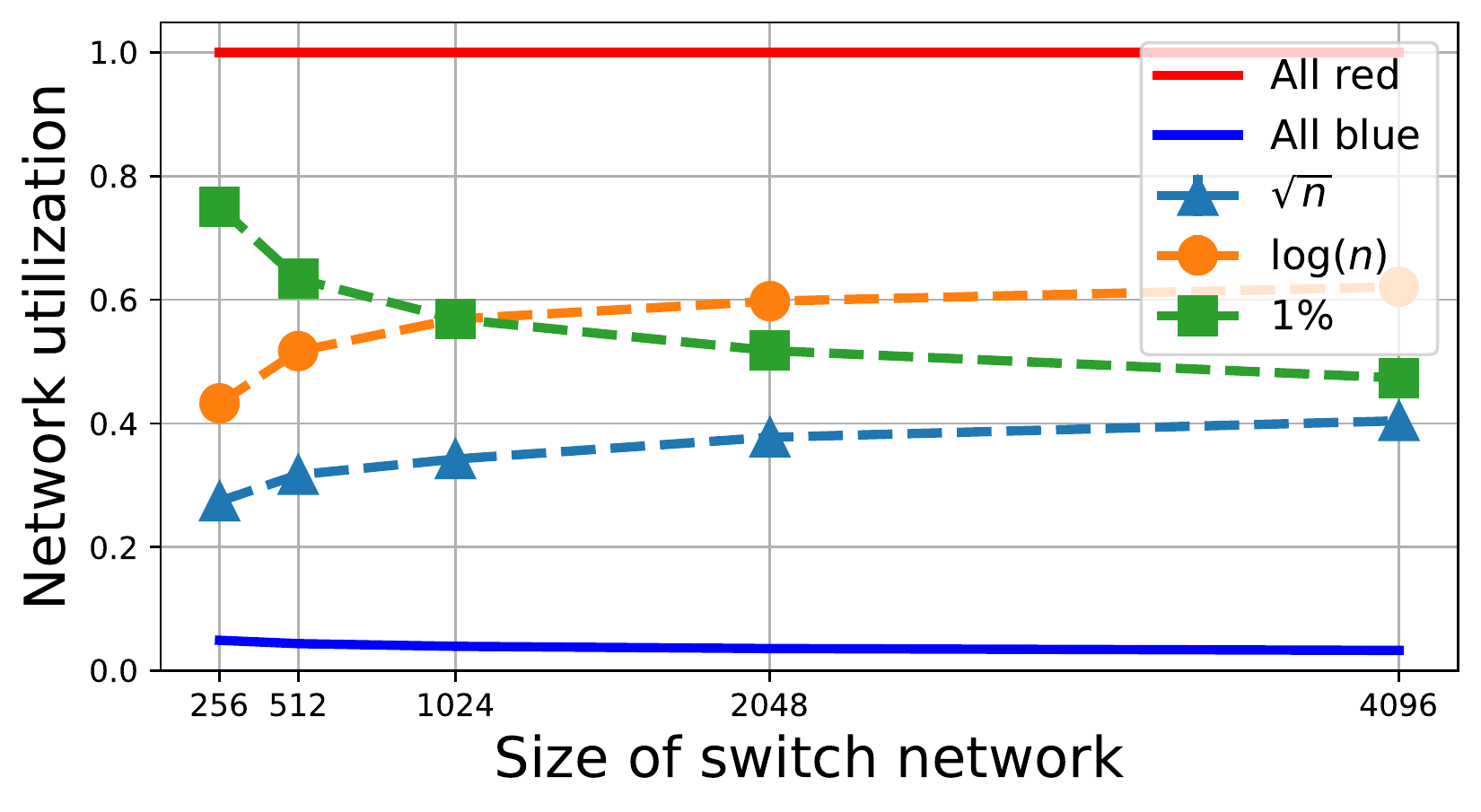}
        } \\
    \subcaptionbox{Size of $U$ for $30\%, 50\%, 70\%$ improvement \label{fig:scaling_blue_required}}{
        \includegraphics[width=0.92\columnwidth]{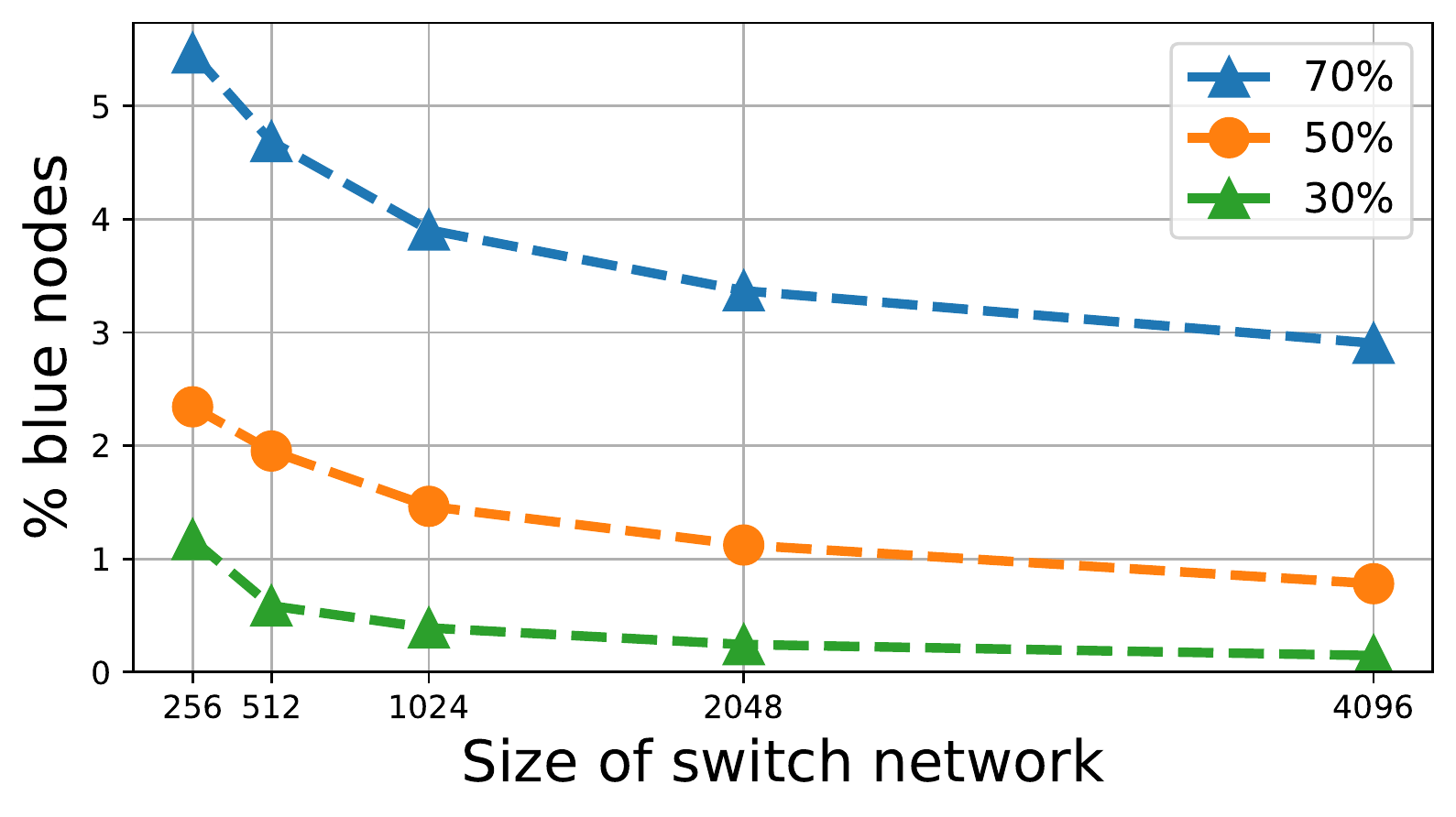}
        }
\caption{Scaling performance of \alg}
\label{fig:scaling}
\end{figure}

\section{Scale-Free Networks}
\label{sec:evaluation:scale-free}

In this subsection we demonstrate the applicability of the \alg\ algorithm to non-regular tree networks.
We study the performance of \alg\ on random preferential attachment (\rpa) trees \cite{barabasi1999emergence} which are known to produce scale-free networks: networks with a degree distribution that follows a power-law distribution \cite{newman2018networks}.
We denote by $\sfnet{n}$ a random network with $n$ nodes produced by the \rpa\ procedure.
When studying scale-free tree networks, in order to avoid introducing a bias into the evaluation, we consider networks where each node has a load of 1.

Fig.~\ref{fig:SF-networks}(a) and~\ref{fig:SF-networks}(b) show an example of a scale-free network, $\sfnet{128}$.
In this example, the degree sequence of the nine highest degree nodes is: $\set{18,15,9,8,6,6,6,5,4}$. 
Since \rpa\ produces scale-free tree networks, a natural strategy for placing blue nodes in such a network is the \maxalg\ algorithm, where blue nodes are placed at the nodes with the highest degree (which are usually closer to the root).
The result of the \maxalg\ policy is depicted in Fig.~\ref{fig:SF-networks}(a), resulting in a utilization complexity of 621.
In contrast, the optimal solution produced by \alg\ is provided in Fig.~\ref{fig:SF-networks}(b).
This solution allocates the blue nodes at nodes with degrees
$18,15,6,4$, incurring a utilization complexity of merely 182, which translates to saving roughly 70\% of the messages, compared to \maxalg.

Finally, Fig.~\ref{fig:SF-networks}(c) presents the scaling of \alg\ for increasing sizes $n=2^i$ of scale-free networks, for $i=8,\ldots,12$.
Using $\numblue=0.01 n$ and $\numblue=\log n$ exhibits similar results to those presented in Fig.~\ref{fig:scaling} for binary trees.
Interestingly for $\numblue=\sqrt{n}$ our results suggest that
as the network size increases the utilization complexity remains close to 40\% of the all-red scenario.

\begin{figure*}
    \centering
    \begin{tabular}{ccc}
        \subcaptionbox{Max placement in $\sfnet{128}$ \label{fig:SF-networks_max}}{
            \includegraphics[width=0.29\textwidth]{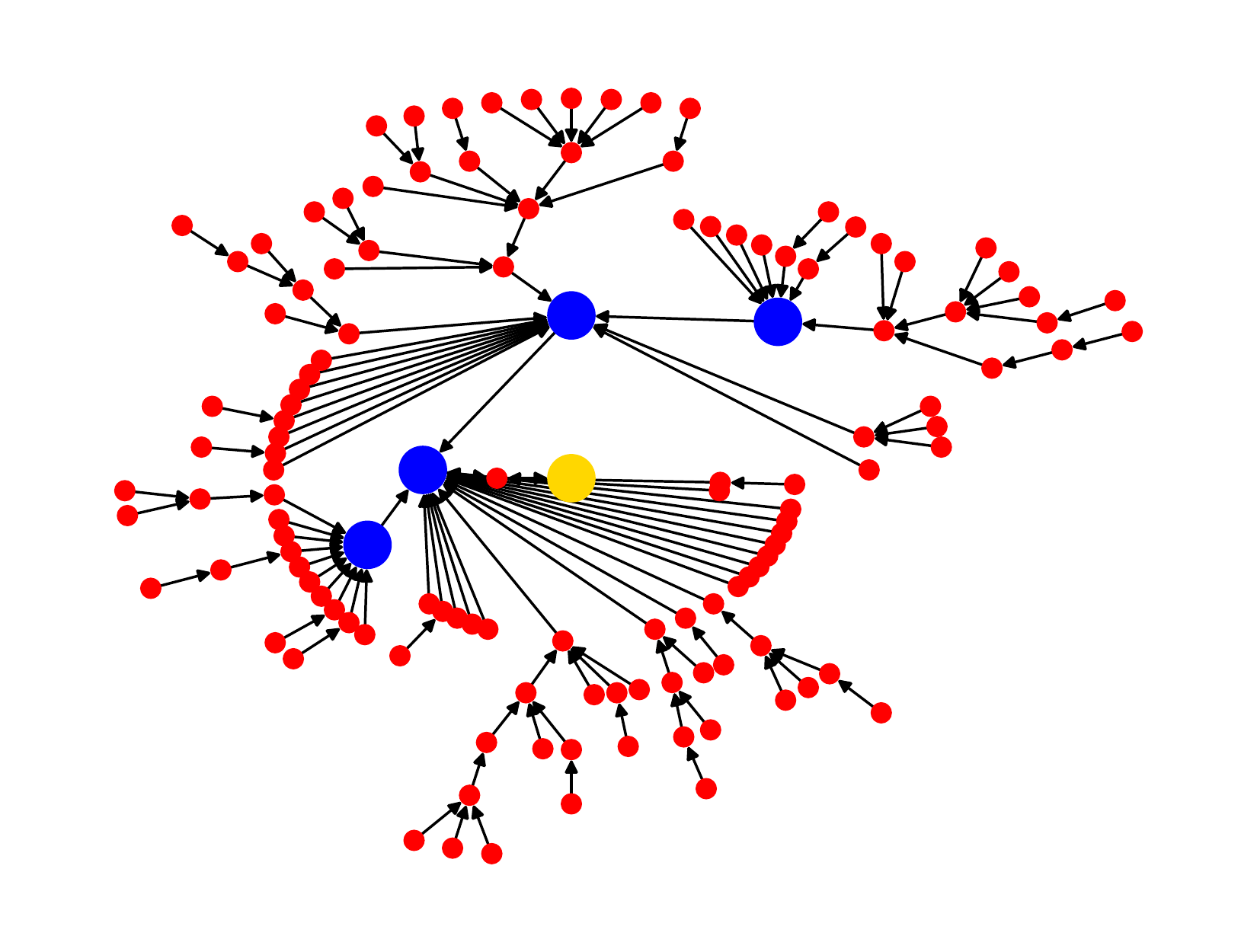}
            } &
        \subcaptionbox{\alg\ placement in $\sfnet{128}$ \label{fig:SF-networks_alg}}{
            \includegraphics[width=0.29\textwidth]{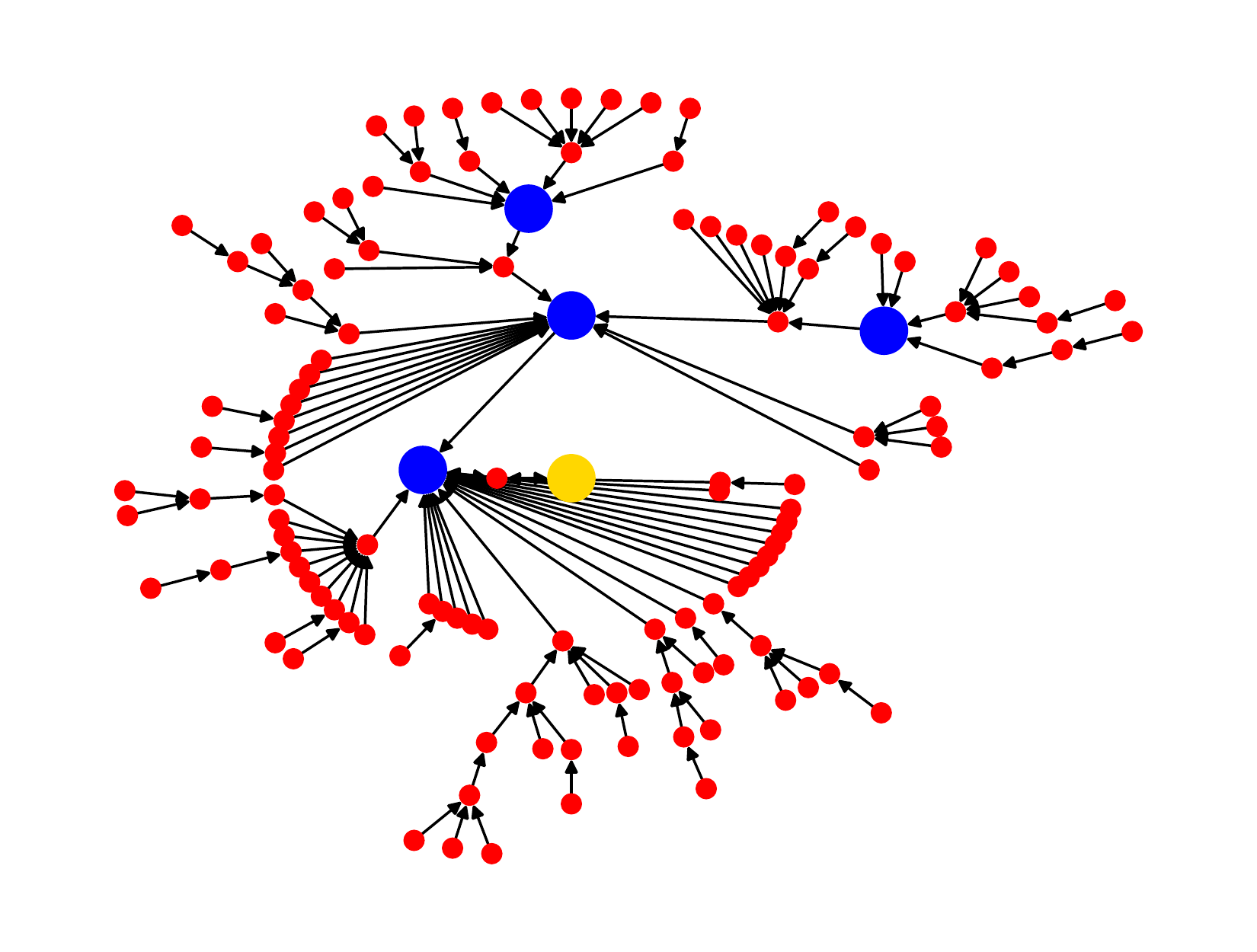}
            } &
        \subcaptionbox{Improvement relative to red, for $1\%, \log n, \sqrt{n}$ \label{fig:SF-networks_utilization}}{
            \includegraphics[width=0.30\textwidth]{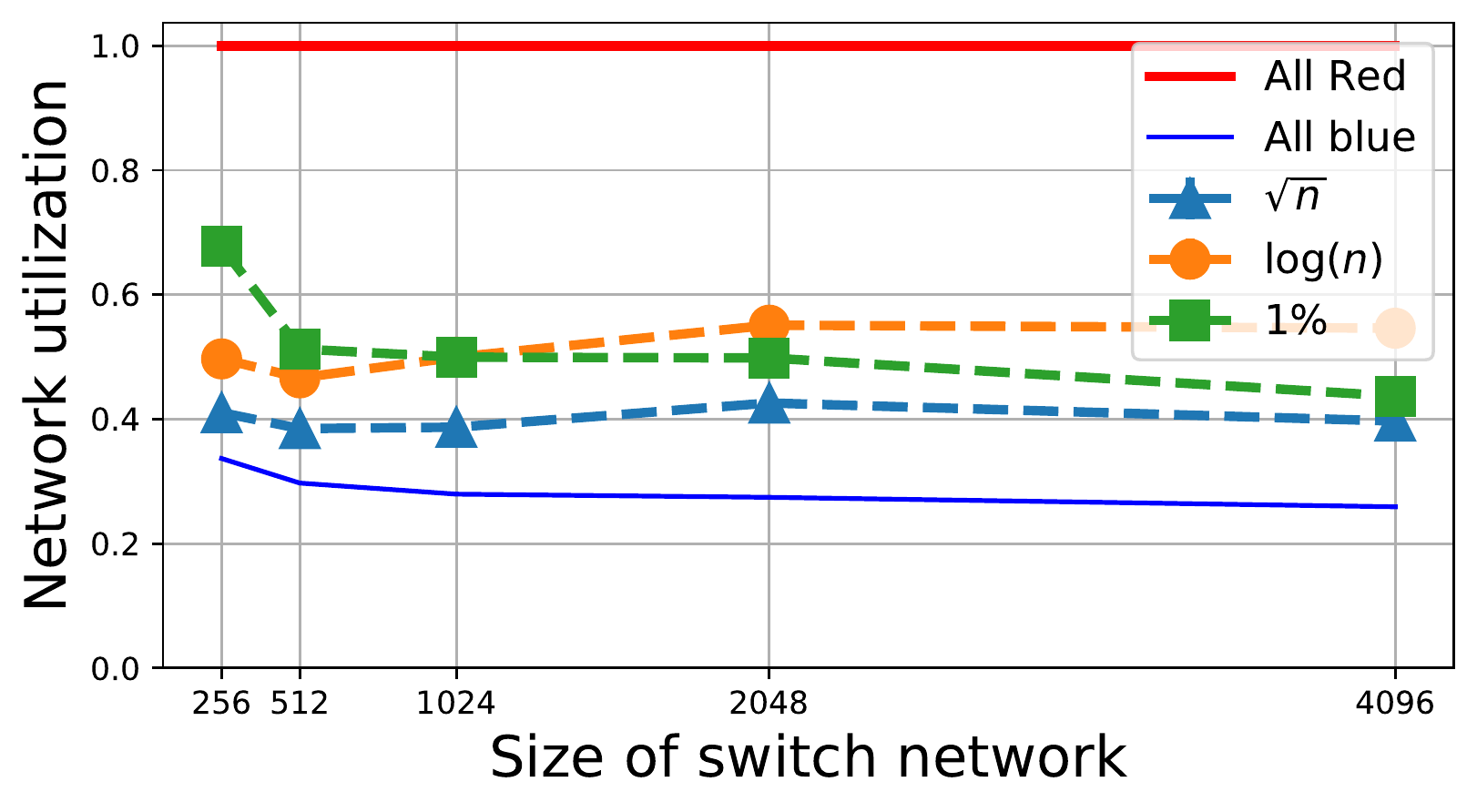}
            }
    \end{tabular}
\caption{\alg\ in scale-free networks.}
\label{fig:SF-networks}
\end{figure*}

\section{Proofs}
\label{sec:appendix:proofs}

\subsection{Proof of Lemma~\ref{lem:pidecompose}}
\label{sec:appendix:proof:pidecompose}

\begin{proof}
By Eq.~\eqref{eq:potential:non_leaf}, we can formulate the recursive formula
\begin{align}
\aggc_v^m(\ell,\blueset)
&= 
\begin{cases}
\aggc_{c_1}(1,\blueset) + 1 \cdot \rate(v,A_v^\ell)
&\quad \mbox{if $m = 1$} \\
\aggc_v^{m-1}(\ell,\blueset) + \aggc_{c_m}(1,\blueset)
&\quad \mbox{if $m \geq 2$}.
\label{eq:potential:blue:recursive}
\end{cases}
\end{align}
for the case where $v \in \blueset$, and the recursive formula
\begin{align}
\aggc_v^m(\ell,\blueset)
&= 
\begin{cases}
\aggc_{c_1}(\ell+1,\blueset) +  \load(v) \cdot \rate(v,A_v^\ell)
&\quad \mbox{if $m = 1$} \\
\aggc_v^{m-1}(\ell,\blueset) + \aggc_{c_m}(\ell+1,\blueset)
&\quad \mbox{if $m \geq 2$}.
\label{eq:potential:red:recursive}
\end{cases}
\end{align}
for the case where $v \notin \blueset$.
The result follows from solving these recursions.
\end{proof}



\subsection{Proof of Lemma~\ref{lem:gather_correctness:induction}}
\label{sec:appendix:proof:gather_correctness}

\begin{proof}
The proof is by double induction on the height of $\network_v$ and the number of children $m$ for which 
$\dpellicol_v^m(\ell,i,\red)$ and $\dpellicol_v^m(\ell,i,\blue)$ have been computed correctly.
\paragraph{$v$ is a leaf in $\network$:}
If $i=0$ we have
\begin{align}
\dpelli_v(\ell,i)
=\ell \cdot \load(v)
=\min_{\abs{\blueset}=i} \aggc_v^{\childnum(v)}(\ell,\blueset),
\end{align}
where the first equality follows from line~\ref{alg:gather:leaf:i_equals_0} in Alg.~\ref{alg:alg:gather}, and the last equality follows from the definition of the $(v,m)$-potential for a leaf node in Eq.~\eqref{eq:potential:leaf} by taking $\blueset=\emptyset$.
If $i>0$ then having $\abs{\blueset}=i$ implies that $v$ is blue (since it is a leaf in $\network$), leading to
\begin{align}
\dpelli_v(\ell,i)
=\ell
=\min_{\abs{\blueset}=i} \aggc_v^{\childnum(v)}(\ell,\blueset),
\end{align}
where again, the equality follows from Eq.~\eqref{eq:potential:leaf}.
This completes the base case for our induction on the height of $\network_v$. We henceforth assume that $\dpelli_{v'}(\ell,i)$ has been computed correctly for all nodes $v'$ below $v$, and for all $i$. In particular, this is true for every child $c_m$ of a non-leaf node $v$, $m=1,\ldots,\childnum(v)$.

\paragraph{$v$ is a non-leaf, $m=1$:}
Assume $v$ is blue and $i \geq 1$. It follows that
\begin{align}
\min_{\substack{\abs{\blueset}=i\\v \in \blueset}} \aggc_v^1(\ell,\blueset)
&= \min_{\substack{\abs{\blueset}=i\\v \in \blueset}} \myls \mylp \sum_{\link \in \network_v^1} \msg_{\link}(\network_v^1,\load,\blueset) \myrp \notag \\
& \quad\quad\quad + \ell \cdot \msg_{(v,\parent(v))}(\tilde{\network}_v^1,\load,\blueset)  \myrs
\label{eq:induction:m_1:blue:1} \\
&= \min_{\substack{\abs{\blueset}=i\\v \in \blueset}} \myls \sum_{\link \in \network_v^1} \msg_{\link}(\network_v^1,\load,\blueset) \myrs + \ell
\label{eq:induction:m_1:blue:2} \\
&= \min_{\abs{\blueset'}=i-1} \myls \mylp \sum_{\link \in \network_{c_1}} \msg_{\link}(\network_{c_1},\load,\blueset') \myrp \notag \\
& \quad\quad\quad + \msg_{(c_1,v)}(\tilde{\network}_{c_1},\load,\blueset') \myrs + \ell
\label{eq:induction:m_1:blue:3} \\
&= \min_{\abs{\blueset'}=i-1} \aggc_{c_1}(1,\blueset') + \ell
\label{eq:induction:m_1:blue:4} \\
&= \dpelli_{c_1}(1,i-1) + \ell
\label{eq:induction:m_1:blue:5} \\
&= \dpellicol_v^1(\ell,i,\blue)
\label{eq:induction:m_1:blue:6} \end{align}
Eq.~\eqref{eq:induction:m_1:blue:1} follows from the definition of the $(v,m)$-potential in Eq.~\eqref{eq:potential:non_leaf}.
Eq.~\eqref{eq:induction:m_1:blue:2} and Eq.~\eqref{eq:induction:m_1:blue:3} follow from the definition of $\network_v^1$ (which also contains link $(c_1,v)$), along with the fact that $v$ is blue in the current case considered (hence the move from $\blueset$ to $\blueset'$ with requiring that $v \in \blueset'$), and therefore forwards a single message to its parent $\parent(v)$.
Eq.~\eqref{eq:induction:m_1:blue:4} again follows from the definition of the $(v,m)$-potential in Eq.~\eqref{eq:potential:non_leaf}, and
Eq.~\eqref{eq:induction:m_1:blue:5} follows from the induction hypothesis on $c_1$ (where the height of $\network_{c_1}$ is strictly smaller than that of $\network_v$).
Finally, Eq.~\eqref{eq:induction:m_1:blue:6} follows from line~\ref{alg:gather:Y_v^m:first_child:blue} in Alg.~\ref{alg:alg:gather}.

Assume next that $v$ is red and $i \geq 0$.
It follows that
\begin{align}
\min_{\substack{\abs{\blueset}=i\\v \notin \blueset}} \aggc_v^1(\ell,\blueset)
&= \min_{\substack{\abs{\blueset}=i\\v \notin \blueset}} \myls \mylp \sum_{\link \in \network_v^1} \msg_{\link}(\network_v^1,\load,\blueset) \myrp \notag \\
& \quad\quad\quad + \ell \cdot \msg_{(v,\parent(v))}(\tilde{\network}_v^1,\load,\blueset)  \myrs
\label{eq:induction:m_1:red:1} \\
&= \min_{\abs{\blueset}=i} \myls \mylp \sum_{\link \in \network_{c_1}} \msg_{\link}(\network_{c_1},\load,\blueset) \myrp \notag \\
& \quad\quad\quad + \msg_{(c_1,v)}(\tilde{\network}_{c_1},\load,\blueset) \notag \\
& \quad\quad\quad + \ell \cdot \mylp \msg_{(c_1,v)}(\tilde{\network}_{c_1},\load,\blueset) \notag \\
& \quad\quad\quad + \load(v) \myrp \myrs
\label{eq:induction:m_1:red:2} \\
&= \min_{\abs{\blueset}=i} \myls \mylp \sum_{\link \in \network_{c_1}} \msg_{\link}(\network_{c_1},\load,\blueset) \myrp \notag \\
& \quad\quad\quad + (\ell+1) \cdot \msg_{(c_1,v)}(\tilde{\network}_{c_1},\load,\blueset') \myrs \notag \\
& \quad\quad\quad + \ell \cdot \load(v)
\label{eq:induction:m_1:red:3} \\
&= \min_{\abs{\blueset}=i} \aggc_{c_1}(\ell+1,\blueset) + \ell \cdot \load(v)
\label{eq:induction:m_1:red:4} \\
&= \dpelli_{c_1}(\ell+1,i) + \ell \cdot \load(v)
\label{eq:induction:m_1:red:5} \\
&= \dpellicol_v^1(\ell,i,\red)
\label{eq:induction:m_1:red:6}
\end{align}
Eq.~\eqref{eq:induction:m_1:red:1} follows from the definition of the $(v,m)$-potential in Eq.~\eqref{eq:potential:non_leaf}.
Eq.~\eqref{eq:induction:m_1:red:2} follows from the definition of $\network_v^1$ (which also contains link $(c_1,v)$), along with the fact that $v$ is red in the current case considered, and forwards $\msg_{(c_1,v)}(\tilde{\network}_{c_1},\load,\blueset) +  \load(v)$ messages across link $(v,\parent(v))$ in $\tilde{\network}_v^1$.
Eq.~\eqref{eq:induction:m_1:red:3} follows from simple algebraic manipulation.
Eq.~\eqref{eq:induction:m_1:red:4} again follows from the definition of the $(v,m)$-potential in Eq.~\eqref{eq:potential:non_leaf}, and
Eq.~\eqref{eq:induction:m_1:red:5} follows from the induction hypothesis on $c_1$ (where the height of $\network_{c_1}$ is strictly smaller than that of $\network_v$).
Finally, Eq.~\eqref{eq:induction:m_1:red:6} follows from line~\ref{alg:gather:Y_v^m:first_child:red} in Alg.~\ref{alg:alg:gather}.

\paragraph{$v$ is a non-leaf, $m>1$:}
We assume the claim holds for all nodes $u$ for which the height of $\network_u$ is strictly less than that of $\network_v$, and that for all $i$ both $\dpelli_v(\ell,i)$ have been computed correctly, and for all $m'<m$, $\dpellicol_v^{m'}(\ell,i,\red)$ and $\dpellicol_v^{m'}(\ell,i,\blue)$ have been computed correctly.

Assume first that $v$ is blue and $i \geq 1$. It follows that
\begin{align}
\min_{\substack{\abs{\blueset}=i\\v \in \blueset}} \aggc_v^m(\ell,\blueset)
&= \min_{\substack{\abs{\blueset}=i\\v \in \blueset}} \myls \mylp \sum_{\link \in \network_v^m} \msg_{\link}(\network_v^m,\load,\blueset) \myrp \notag \\
& \quad\quad\quad + \ell \cdot \msg_{(v,\parent(v))}(\tilde{\network}_v^m,\load,\blueset) \myrs
\label{eq:induction:m_geq_2:blue:1} \\
&= \min_{\substack{\abs{\blueset}=i\\v \in \blueset}} \myls \mylp \sum_{\link \in \network_v^{m-1}} \msg_{\link}(\network_v^{m-1},\load,\blueset) \myrp \notag \\
& \quad\quad\quad + \ell + \mylp \sum_{\link \in \network_{c_m}} \msg_{\link}(\network_{c_m},\load,\blueset) \myrp \notag \\
& \quad\quad\quad + \msg_{(c_m,v)}(\tilde{\network}_{c_m},\load,\blueset) \myrs
\label{eq:induction:m_geq_2:blue:2} \\
&= \mylp \sum_{\link \in \network_v^{m-1}} \msg_{\link}(\network_v^{m-1},\load,\blueset^*) \myrp \notag \\
& \quad\quad\quad + \ell + \mylp \sum_{\link \in \network_{c_m}} \msg_{\link}(\network_{c_m},\load,\blueset^*) \myrp \notag \\
& \quad\quad\quad + \msg_{(c_m,v)}(\tilde{\network}_{c_m},\load,\blueset^*),
\label{eq:induction:m_geq_2:blue:3} \end{align}
where $\blueset^*$ is the set of blue nodes attaining the minimum in Eq.~\eqref{eq:induction:m_geq_2:blue:2}.
Transition Eq.~\eqref{eq:induction:m_geq_2:blue:1} follows from the definition of the $(v,m)$-potential in Eq.~\eqref{eq:potential:non_leaf}. Eq.~\eqref{eq:induction:m_geq_2:blue:2} follows from the definition of $\network_v^m$ (which also contains link $(c_m,v)$), along with the fact that $v$ is blue in the current case considered, implying that  both $\msg_{(v,\parent(v))}(\tilde{\network}_v^m,\load,\blueset^*)$ and $\msg_{(v,\parent(v))}(\tilde{\network}_v^{m-1},\load,\blueset^*)$ are just 1.

Let $\blueset'=\blueset^* \cap T_{c_m}$ and let $j'=\abs{\blueset'}$. Further, let $\blueset''=\blueset^* \cap T_v^{m-1}$ and let $j''=\abs{\blueset''}=i-j'$.
Note that $\blueset^*=\blueset' \cup \blueset''$, $j'+j''=i$, and $v \in \blueset''$.
It follows that
\begin{align}
\min_{\substack{\abs{\blueset}=i\\v \in \blueset}} \aggc_v^m(\ell,\blueset)
&= \mylp \sum_{\link \in \network_v^{m-1}} \msg_{\link}(\network_v^{m-1},\load,\blueset'') \myrp + \ell \notag \\
& \quad\quad\quad + \mylp \sum_{\link \in \network_{c_m}} \msg_{\link}(\network_{c_m},\load,\blueset') \myrp \notag \\
& \quad\quad\quad + \msg_{(c_m,v)}(\tilde{\network}_{c_m},\load,\blueset')
\label{eq:induction:m_geq_2:blue:4} \\
&= \min_{\substack{\abs{\tilde{\blueset}''}=i-j'\\v \in \tilde{\blueset}''}} \myls \mylp \sum_{\link \in \network_v^{m-1}} \msg_{\link}(\network_v^{m-1},\load,\tilde{\blueset}'') \myrp \notag \\
& \quad\quad\quad + \ell \myrs \notag \\
& \quad\quad\quad + \min_{\abs{\tilde{\blueset}'}=j'} \myls \mylp \sum_{\link \in \network_{c_m}} \msg_{\link}(\network_{c_m},\load,\tilde{\blueset}') \myrp \notag \\
& \quad\quad\quad + \msg_{(c_m,v)}(\tilde{\network}_{c_m},\load,\tilde{\blueset}') \myrs
\label{eq:induction:m_geq_2:blue:5} \\
&= \min_{\substack{\abs{\tilde{\blueset}''}=i-j'\\v \in \tilde{\blueset}''}} \aggc_v^{m-1}(\ell,\tilde{\blueset}'') \notag \\
& \quad\quad\quad + \min_{\abs{\tilde{\blueset}'}=j'} \aggc_{c_m}(1,\tilde{\blueset}') \label{eq:induction:m_geq_2:blue:6} \\
&= \dpellicol_v^{m-1}(\ell,i-j',\blue) + \dpelli_{c_m}(1,j') \label{eq:induction:m_geq_2:blue:7} \\
&= \min_{0 \leq j \leq i} \left[ \dpellicol_v^{m-1}(\ell,i-j,\blue) + \dpelli_{c_m}(1,j) \right]. \label{eq:induction:m_geq_2:blue:8} \end{align}
By the definition of $\blueset',\blueset''$, substituting these terms in~\eqref{eq:induction:m_geq_2:blue:3}, we obtain \eqref{eq:induction:m_geq_2:blue:4}.
Next, we show the equality of \eqref{eq:induction:m_geq_2:blue:4} and \eqref{eq:induction:m_geq_2:blue:5}.
First note that by definition, \eqref{eq:induction:m_geq_2:blue:4} is no smaller than \eqref{eq:induction:m_geq_2:blue:5}. Assume by contradiction that \eqref{eq:induction:m_geq_2:blue:4} is strictly larger than \eqref{eq:induction:m_geq_2:blue:5}, and let $\bar{\blueset}'$, and $\bar{\blueset}''$ be the sets obtaining the minimum for the first and second term in~\eqref{eq:induction:m_geq_2:blue:5}, respectively, satisfying $\bar{\blueset}'' \subseteq \network_v^{m-1}$, $\bar{\blueset}' \subseteq \network_{c_m}$,  $\abs{\bar{\blueset}''}=i-j'$, $\abs{\bar{\blueset}'}=j'$, and 
$v \in \bar{\blueset}''$.
Since $\network_v^{m-1} \cap \network_{c_m} = \emptyset$, it follows that $\bar{\blueset}'' \cap \bar{\blueset}'' = \emptyset$. This, in turn, implies that $\abs{\bar{\blueset}}=i$, and by our derivation $\aggc_v^m(\ell,\bar{\blueset}) < \aggc_v^m(\ell,\blueset^*)$, contradicting the minimality of $\blueset^*$.
Eq.~\eqref{eq:induction:m_geq_2:blue:6} follows from the definition of $(v,m)$-potential in Eq.~\eqref{eq:potential:non_leaf}.
Eq.~\eqref{eq:induction:m_geq_2:blue:6} follows from the induction hypothesis on $m$ and $c_m$.
Finally, we show that Eq.~\eqref{eq:induction:m_geq_2:blue:6} equals \eqref{eq:induction:m_geq_2:blue:7}.
Clearly~\eqref{eq:induction:m_geq_2:blue:6} is no smaller than~\eqref{eq:induction:m_geq_2:blue:7}. Assume by contradiction that~\eqref{eq:induction:m_geq_2:blue:6} is strictly larger than~\eqref{eq:induction:m_geq_2:blue:7}, and let $j^*$ be the value for which the minimim in Eq.~\eqref{eq:induction:m_geq_2:blue:7} is obtained.
By the induction hypothesis on $m$ (for $\dpellicol_v^{m-1}$) and $c_m$ (for $\dpelli_{c_m}$), with $i-j^*$ and $j^*$, respectively, there exist disjoint sets 
$\bar{\blueset}''$ and $\bar{\blueset}'$ 
of sizes $i-j^*$ and $j^*$, respectively, such that
$\bar{\blueset}'' \subseteq T_v^{m-1}$, $\bar{\blueset}' \subseteq T_{c_m}$, and $v \in \bar{\blueset}''$.
It follows that taking $\bar{\blueset}=\bar{\blueset}'' \cup \bar{\blueset}'$ we obtain using our derivation that
$\aggc_v^m(\ell,\bar{\blueset}) < \aggc_v^m(\ell,\blueset^*)$, contradicting the minimality of $U^*$.
This completes the proof for the case where $v$ is blue.

Assume next that $v$ is red and $i \geq 0$. It follows that
\begin{align}
\min_{\substack{\abs{\blueset}=i\\v \notin \blueset}} \aggc_v^m(\ell,\blueset)
&= \min_{\substack{\abs{\blueset}=i\\v \notin \blueset}} \myls \mylp \sum_{\link \in \network_v^m} \msg_{\link}(\network_v^m,\load,\blueset) \myrp \notag \\
& \quad\quad\quad + \ell \cdot \msg_{(v,\parent(v))}(\tilde{\network}_v^m,\load,\blueset) \myrs
\label{eq:induction:m_geq_2:red:1} \\
&= \min_{\substack{\abs{\blueset}=i\\v \notin \blueset}} \myls \mylp \sum_{\link \in \network_v^{m-1}} \msg_{\link}(\network_v^{m-1},\load,\blueset) \myrp \notag \\
& \quad\quad\quad + \mylp \sum_{\link \in \network_{c_m}} \msg_{\link}(\network_{c_m},\load,\blueset) \myrp \notag \\
& \quad\quad\quad + (\ell+1) \cdot \msg_{(c_m,v)}(\tilde{\network}_{c_m},\load,\blueset) \notag \\
& \quad\quad\quad + \ell \cdot \msg_{(v,\parent(v))}(\tilde{\network}_v^{m-1},\load,\blueset) \myrs
\label{eq:induction:m_geq_2:red:2} \\
&= \mylp \sum_{\link \in \network_v^{m-1}} \msg_{\link}(\network_v^{m-1},\load,\blueset^*) \myrp \notag \\
& \quad\quad\quad + \mylp \sum_{\link \in \network_{c_m}} \msg_{\link}(\network_{c_m},\load,\blueset^*) \myrp \notag \\
& \quad\quad\quad + (\ell+1) \cdot \msg_{(c_m,v)}(\tilde{\network}_{c_m},\load,\blueset^*) \notag \\
& \quad\quad\quad + \ell \cdot \msg_{(v,\parent(v))}(\tilde{\network}_v^{m-1},\load,\blueset^*)
\label{eq:induction:m_geq_2:red:3} \end{align}
where $\blueset^*$ is the set of blue nodes attaining the minimum in Eq.~\eqref{eq:induction:m_geq_2:red:2}.
Transition Eq.~\eqref{eq:induction:m_geq_2:red:1} follows from the definition of the $(v,m)$-potential in Eq.~\eqref{eq:potential:non_leaf}. Eq.~\eqref{eq:induction:m_geq_2:red:2} follows from the definition of $\network_v^m$, which  can be decomposed into $\network_v^{m-1} \cup \network_{c_m} \cup \set{(c_m,v)}$.
Most of the derivation of Eq.~\eqref{eq:induction:m_geq_2:red:2} trivially follows from this decomposition. One non trivial observation follows from noting that the messages traversing $(c_m,v)$ are counted in Eq.~\eqref{eq:induction:m_geq_2:red:2} within the messages traversing $(v,\parent(v)$, as well as in the summation over all edges in $\network_v^m$.
Since  $v$ is red in the current case considered, this implies that the messages traversing $(c_m,v)$ are accounted for $(\ell+1)$ times.

Let $\blueset'=\blueset^* \cap T_{c_m}$ and let $j'=\abs{\blueset'}$. Further, let $\blueset''=\blueset^* \cap T_v^{m-1}$ and let $j''=\abs{\blueset''}=i-j'$.
Note that $\blueset^*=\blueset' \cup \blueset''$, $j'+j''=i$, and $v \notin \blueset''$.
Using the same arguments used for proving the case where $v$ is blue, one can show that 
\begin{align}
\min_{\substack{\abs{\blueset}=i\\v \notin \blueset}} \aggc_v^m(\ell,\blueset)
&= \mylp \sum_{\link \in \network_v^{m-1}} \msg_{\link}(\network_v^{m-1},\load,\blueset'') \myrp \notag \\
& \quad\quad\quad + \mylp \sum_{\link \in \network_{c_m}} \msg_{\link}(\network_{c_m},\load,\blueset') \myrp \notag \\
& \quad\quad\quad + (\ell+1) \cdot \msg_{(c_m,v)}(\tilde{\network}_{c_m},\load,\blueset') \notag \\
& \quad\quad\quad + \ell \cdot \msg_{(v,\parent(v))}(\tilde{\network}_v^{m-1},\load,\blueset'')
\label{eq:induction:m_geq_2:red:4} \\
&= \min_{0 \leq j \leq i} \left[ \dpellicol_v^{m-1}(\ell,i-j,\blue) + \dpelli_{c_m}(\ell+1,j) \right], \label{eq:induction:m_geq_2:red:5} \end{align}
thus completing the proof for the case where $v$ is red.
The lemma now follows.
\end{proof}

\subsection{Proof of Lemma~\ref{lem:color_correctness}}
\label{sec:appendix:proof:color_correctness}

\begin{proof}
In what follows, we say a node $v$ is \emph{correctly assigned} if: (i) it is colored so as to minimize the utilization cost of the entire system,
(ii) it is allotted the number of blue nodes for $\network_v$ so as to minimize the utilization cost of the entire system and (iii) 
$\ell^*_v$, the distance from $v$ to its closets blue ancestor or $d$, is assigned so as to minimize the utilization cost of the entire system.
We prove by induction on the order of handling nodes by \algcolor\ that if node $v$ is correctly assigned 
then each of its children $c_m$, $m=1,\ldots,\childnum(v)$ is correctly assigned.

For the base case, consider node $\destination$, which should have $\numblue$ blue nodes in its subtree, it's color is trivially not blue (since $\destination$ is a server) and $\ell^*_d=0$. So $\destination$ is correctly assigned. $\destination$ has a single child, $\rootswitch$, and by line~\ref{alg:algcolor:condition_for_blue} of \algcolor, along with Eq.~\eqref{eq:lem:alggather:correctness:red} and Eq.~\eqref{eq:lem:alggather:correctness:blue} $\rootswitch$ is colored correctly, since by line~\ref{alg:gather:X_v:min_red_blue} 
of \alggather, its color is the one minimizing $\dpelli_{\rootswitch}(1,\numblue) =\dpelli_{\destination}(0,\numblue)$. Clearly by line \ref{alg:algcolor:dest} in \algcolor\ $r$ is correctly assigned with $\ell^*_r=1$ and $\numblue$ blue nodes.

Assume the claim holds for all nodes handled before node $v$, and consider node $v$ which is correctly assigned.
First, since $v$ is correctly colored, then by Lemma~\ref{lem:pidecompose} and Equations \eqref{eq:ellBlue} and \eqref{eq:ellRed} each child $c$ will be correctly assigned it's $\ell^*_c$ in \algcolor;
if $v$ is blue then for each child $c$, $\ell^*_c=1$ and if $v$ is red $\ell^*_c=\ell^*_v + 1$ for every child $c$.
Next, by induction on the number of children of $v$ from $\childnum(v)$ to 1,
it is easy to show
that each child $c$ is assigned the correct number of blue nodes to be distributed in its subtree $\network_c$.
This follows from the fact that the $\minsplit$ procedure in lines~\ref{alg:minsplit:start}-\ref{alg:minsplit:end} of \algcolor\ essentially extract the value $j$ obtaining the minimum considered also by the $\mincost$ procedure in lines~\ref{alg:mincost:start}-\ref{alg:mincost:end} in \alggather, when applied the same value of $\ell=\ell_c$.
Lastly, since each child $c$ is assigned correctly $\ell^*_c$ and the correct number of blue nodes, by Lemma \ref{lem:gather_correctness:induction} and
line~\ref{alg:gather:X_v:min_red_blue} of \alggather, $c$ will be also colored correctly.
\end{proof}

\label{end_total}
\end{document}